\pgfplotsset{compat=1.17}
    \pgfplotsset{
    name nodes near coords/.style={
        every node near coord/.append style={
            name=#1-\coordindex,
            alias=#1-last,
        },
    },
    name nodes near coords/.default=coordnode
    }
\newtheorem{theorem}{Theorem}[section]
\newtheorem{lemma}[theorem]{Lemma}
\newtheorem{observation}[theorem]{Observation}
\newtheorem{example}[theorem]{Example}
\newtheorem{definition}[theorem]{Definition}
\newtheorem*{rep@theorem}{\rep@title}
\newcommand{\newreptheorem}[2]{%
\newenvironment{rep#1}[1]{%
 \def\rep@title{#2 \ref{##1}}%
 \begin{rep@theorem}}%
 {\end{rep@theorem}}}
\newcommand{\bR}{{\mathbb{R}}}
\newcommand{\PROP}{\mathsf{PROP}}
\newcommand{\WPROP}{\mathsf{WPROP}}
\newcommand{\WPROPX}{\mathsf{WPROPX}}
\newcommand{\bX}{\mathbf{X}}
\newcommand{\bs}{\mathbf{s}}
\newcommand{\bc}{\mathbf{c}}
\newcommand{\bw}{\mathbf{w}}
\DeclareMathOperator*{\argmax}{argmax}
\DeclareMathOperator*{\argmin}{argmin}
\title{One Quarter Each (on Average) Ensures Proportionality}
\author{
Xiaowei Wu
\thanks{IOTSC, University of Macau. \{xiaoweiwu,yc27429,yc17423\}@um.edu.mo. The authors are ordered alphabetically.}
\and Cong Zhang $^*$
\and Shengwei Zhou $^*$
}
\date{}
\begin{document}
\pagestyle{plain}
\maketitle

\begin{abstract}
We consider the problem of fair allocation of $m$ indivisible items to a group of $n$ agents with subsidy (money).
Our work mainly focuses on the allocation of chores but most of our results extend to the allocation of goods as well.
We consider the case when agents have (general) additive cost functions.
Assuming that the maximum cost of an item to an agent can be compensated by one dollar, we show that a total of $n/4$ dollars of subsidy suffices to ensure a proportional allocation.
Moreover, we show that $n/4$ is tight in the sense that there exists an instance with $n$ agents for which every proportional allocation requires a total subsidy of at least $n/4$. 
We also consider the weighted case and show that a total subsidy of $(n-1)/2$ suffices to ensure a weighted proportional allocation. 
\end{abstract}
	
\section{Introduction} \label{sec:intro}
	
We consider the problem of fairly allocating a set of $m$ indivisible items $M$ to a group of $n$ heterogeneous agents $N$, where an allocation $\bX$ is a partition of the items $M$ into $n$ disjoint bundles, each of which is allocated to a unique agent.
In this paper, we consider both the allocation of goods and chores.
When the items are goods, we use $v_i: 2^M \to \mathbb{R}^+\cup \{0\}$ to denote the \emph{valuation function} of agent $i$, and we say that agent $i$ has \emph{value} $v_i(S)$ for the bundle of items $S\subseteq M$.
When items are chores, we use $c_i: 2^M \to \mathbb{R}^+\cup \{0\}$ to denote the \emph{cost function} of agent $i$, and say that agent $i$ has \emph{cost} $c_i(S)$ for bundle $S$.
We assume that the valuation/cost functions are additive.
Among the different notions to measure the fairness of an allocation, envy-freeness~\cite{foley1967resource,conf/sigecom/LiptonMMS04,journals/teco/CaragiannisKMPS19} and proportionality~\cite{steihaus1948problem} are arguably the most well-studied two.
An allocation $\bX = (X_1,\ldots,X_n)$ if called \emph{envy-free} (EF) if every agent weakly prefers her own bundle than any other bundle, i.e., $v_i(X_i) \geq v_i(X_j)$ for all $i,j\in N$ (for goods) or $c_i(X_i) \leq c_i(X_j)$ for all $i,j\in N$ (for chores).
The allocation is called \emph{proportional} (PROP) if every agent receives a bundle at least as good as her proportional share, i.e., $v_i(X_i) \geq v_i(M)/n$ for all $i\in N$ (for goods) or $c_i(X_i) \leq c_i(M)/n$ for all $i\in N$ (for chores).
Clearly, every EF allocation is PROP.
Unfortunately, when items are indivisible, EF/PROP allocations are not guaranteed to exist, e.g., consider allocating a single item to two agents.
Existing works have taken two different paths to circumvent this non-existence result, one by considering relaxations of the fairness notions, and the other by introducing money (a divisible good) to eliminate the inevitable unfairness.
	
\paragraph{Relaxations.}
Various relaxations of EF have been proposed in the past decades, among which EF1 and EFX are the most popular ones.
The concept of {\em envy-freeness up to one item} (EF1) is proposed by Budish~\cite{conf/bqgt/Budish10}, and requires that the envy between any two agents can be eliminated by removing a single item.
The notion of {\em envy-freeness up to any item} (EFX) is proposed by Caragiannis et al.~\cite{journals/teco/CaragiannisKMPS19} and is defined in a similar manner, but requires that the envy can be eliminated by removing any item from the envied agent (for goods) or the envious agent (for chores).
EF1 allocations are guaranteed to exist and can be efficiently computed for goods~\cite{conf/sigecom/LiptonMMS04}, chores, and even mixture of goods and chores~\cite{conf/approx/BhaskarSV21,journals/aamas/AzizCIW22}. However, unlike EF1, EFX allocations are known to exists only for some special cases, e.g., see \cite{journals/siamdm/PlautR20,conf/sigecom/ChaudhuryGM20,journals/tcs/AmanatidisBFHV21} for goods and \cite{conf/ijcai/0002022,conf/www/0037L022,journals/corr/abs-2211-00879,journals/corr/abs-2211-15836} for chores.
Whether EFX allocations always exist remains the most interesting open problem in this research area.
Similar to EF1 and EFX, we can relax PROP to \emph{proportionality up to one item} (PROP1) and \emph{proportionality up to any item} (PROPX).
Like EF1, PROP1 allocations always exist and can be efficiently computed for goods~\cite{conf/sigecom/ConitzerF017,conf/aaai/BarmanK19}, chores~\cite{journals/corr/abs-1907-01766} and mixture of goods and chores~\cite{journals/orl/AzizMS20}.
For the allocation of goods, Aziz et al.~\cite{journals/orl/AzizMS20} show that PROPX allocations may not exist.
In contrast, PROPX allocations for chores always exist and can be efficiently computed~\cite{moulin2018fair,conf/www/0037L022}. 
For a more comprehensive review of the existing works, please refer to the recent surveys~\cite{AMANATIDIS2023103965,journals/sigecom/AzizLMW22}.

\paragraph{Fair Allocation with Money.}
Since unfairness is inevitable, a natural idea is to compensate some agents with a subsidy to eliminate envy or achieve proportionality.
Specifically, suppose that each agent $i\in N$ receives a subsidy $s_i\geq 0$, we say that the resulting allocation (with subsidy) is EF if for all $i,j\in N$, $v_i(X_i) + s_i \geq v_i(X_j) + s_j$ (for goods); or $c_i(X_i) - s_i \leq c_i(X_j) - s_j$ (for chores).
We are interested in computing an allocation with a small amount of total subsidy to achieve envy-freeness, assuming that each item has value/cost at most one to each agent.
For the allocation of goods, Halpern and Shah~\cite{conf/sagt/HalpernS19} show that a total subsidy of $m(n-1)$ dollars suffices.
The result was then improved to $n-1$ dollars by Brustle et al.~\cite{conf/sigecom/BrustleDNSV20}, who also showed that it suffices to subsidize each agent at most one dollar and that the allocation without subsidy is EF1.
Note that $n-1$ dollars are necessary to guarantee envy-freeness: consider allocating a single good with value $1$ to $n$ identical agents. 
Since every EF allocation is PROP, the result of Brustle et al. also holds for achieving proportionality.
However, it remains unknown whether $n-1$ dollars are necessary to ensure proportionality.
As far as we know, the problem of computing EF or PROP allocations with subsidy has not been considered for the allocation of chores.

\subsection{Our Results} \label{ssec:results}
		
Our work aims at filling in the gaps for the fair allocation problem with subsidy, for both goods and chores.
Ideally, we would like to compute allocations that are EF or PROP with a small amount of subsidy, and also satisfy some relaxations of EF and PROP without the subsidy.
We mainly focus on the allocation of chores, but most of our results extend to the allocation of goods.
We use $\bX = (X_1,\ldots,X_n)$ to denote an allocation, $\bs = (s_1,\ldots,s_n)\in [0,1]^n$ to denote the subsidies and $\|\bs\|_1 = \sum_{i\in N} s_i$ to denote the total subsidy.
We first show that the result of Brustle et al.~\cite{conf/sigecom/BrustleDNSV20} (for computing EF allocation for goods) can be straightforwardly extended to the allocation of chores: it suffices to compensate each agent a subsidy at most one dollar to achieve an EF allocation with a total subsidy at most $n-1$.
	
\smallskip
\noindent
\textbf{Result 1.}
For the allocation of chores, we can compute in polynomial time an EF1 allocation $\bX$ and subsidies $\bs$ such that $(\bX,\bs)$ is EF, where the total subsidy $\|\bs\|_1 \leq n-1$.
Moreover, there exists an instance for which every EF allocation requires a total subsidy of at least $n-1$.
\smallskip
	
The main results of this paper concern the fairness notion of proportionality.
Note that unlike envy-freeness, proportionality can be achieved given any allocation and sufficient subsidy, e.g., by setting $s_i = \max\{c_i(X_i) - c_i(M)/n,0\}$ for all $i\in N$.
Therefore the interesting question is how much subsidy is sufficient to guarantee the existence of PROP allocations.
Since every EF allocation is PROP, this amount is at most $n-1$.
However, whether strictly less subsidy is sufficient remains unknown.
In this work, we answer this question by showing that a total subsidy of $n/4$ suffices to guarantee the existence of PROP allocation.
We propose polynomial-time algorithms for computing such allocations and subsidies.
Moreover, we show that the computed allocations satisfy PROPX when agents have identical additive cost functions, and PROP1 when agents have general additive cost functions.

\smallskip
\noindent
\textbf{Result 2.}
For the allocation of chores to a group of $n$ agents with identical cost functions, we can compute in polynomial time a PROPX allocation $\bX$ and subsidies $\bs$ such that $(\bX,\bs)$ is PROP, where the total subsidy $\|\bs\|_1 \leq n/4$.
Moreover, there exists an instance with $n$ identical agents for which every PROP allocation requires a total subsidy of at least $n/4$.
The same results also hold for weighted agents. 
\smallskip
	
\noindent
\textbf{Result 3.}
For the allocation of chores to a group of $n$ agents with general additive cost functions, we can compute in polynomial time a PROP1 allocation $\bX$ and subsidies $\bs$ such that $(\bX,\bs)$ is PROP, where the total subsidy $\|\bs\|_1 \leq n/4$.
When agents have arbitrary weights, the total subsidy is at most $(n-1)/2$.
\smallskip
	
The above set of results (Result 3) extends straightforwardly to the allocation of goods (whose proof can be found in the appendix).
Result 2 does not extend to the allocation of goods since PROPX allocations are not guaranteed to exist for goods.
	
Our results settle the problem of characterizing the subsidy required for ensuring proportionality, which is overlooked in existing works.
The results indicate that for the fair allocation problem with subsidy, proportionality is strictly cheaper to achieve, compared with envy-freeness. 
Our results for general additive cost functions are achieved by rounding a well-structured fractional allocation to an integral allocation with subsidy, which is novel and might be useful to solve other research problems in the area of fair allocation.

\subsection{Other Related Works}
	

A similar setting introduced by Aziz~\cite{conf/aaai/000121} is the fair allocation with monetary transfers, which allows agents to transfer money to each other.
Different from our objective of minimizing the total subsidy (money) to achieve fairness, the main result of the paper is to provide a characterization of allocations that are equitable and envy-free with monetary transfers.
Beyond additive valuation functions, Brustle et al.~\cite{conf/sigecom/BrustleDNSV20} show that an envy-free allocation for goods always exists with a subsidy of at most $2(n-1)$ dollars per agent under general monotonic valuation functions.
Barman et al.~\cite{journals/corr/abs-2201-07419} consider the dichotomous valuations, i.e., the marginal value for any good to any agent is either $0$ or $1$.
They show that there exists an allocation that achieves envy-freeness with a per-agent subsidy of at most $1$.
Goko et al.~\cite{conf/atal/GokoIKMSTYY22} consider the fair and truthful mechanism with limited subsidy.
They show that under general monotone submodular valuations, there exists a truthful allocation mechanism that achieves envy-freeness and utilitarian optimality by subsidizing each agent at most $1$ dollar.

Another closely related work is the rent division problem that focuses on allocating $m=n$ indivisible goods among $n$ agents and dividing the fixed rent among the agents~\cite{aragones1995derivation,edward1999rental,journals/jacm/GalMPZ17,klijn2000algorithm,maskin1987fair}.
More general models where envy-freeness is achieved with money have also been considered in~\cite{haake2002bidding,meertens2002envy}. 
Recently, Peters et al.~\cite{conf/nips/PetersPZ22} study the robustness of the rent division problem where each agent may misreport her valuation of the rooms. 

When considering money as a divisible good, our setting is similar to the fair allocation of mixed divisible and indivisible items.
Bei et al.~\cite{journals/ai/BeiLLLL21} study the problem of fairly allocating a set of resources containing both divisible and indivisible goods.
They propose the fairness notion of envy-freeness for mixed goods (EFM), and prove that EFM allocations always exist for agents with additive valuations. 
Bhaskar et al~\cite{conf/approx/BhaskarSV21} show that envy-free allocations always exist for mixed resources consisting of doubly-monotonic indivisible items and a divisible chore, which completes the result of Bei et al.~\cite{journals/ai/BeiLLLL21}.
Recently, Li et al.~\cite{journals/corr/abs-2305-09206} consider the truthfulness of EFM allocation.
They show that EFM and truthful allocation mechanisms do not exist in general and design truthful EFM mechanisms for several special cases. 
For a more detailed review for the existing works on mixed fair allocation, please refer to the recent survey~\cite{journals/corr/abs-2306-09564}.

\subsection{Organization of the Paper}
	
Since the result for computing EF allocation with subsidy for chores is a straightforward extension of the result by Brustle et al.~\cite{conf/sigecom/BrustleDNSV20}, we defer the proofs to Appendix~\ref{sec:ef-chores}.
We first provide the notations and definitions (for the allocation of chores) in Section~\ref{sec:prelim}.
We prove Result 2 (the unweighted case) in Section~\ref{sec:identical}.
The extension to the weighted setting is included in Section~\ref{ssec:weighted_identical}.
We prove Result 3 (the unweighted case) in Section~\ref{sec:general}.
The extension to the weighted setting is included in Section~\ref{ssec:weighted_general}.
The extensions of the above results to the allocation of goods are also deferred to the appendix (see Appendix~\ref{sec:goods}).
We conclude the paper and discuss the open questions in Section~\ref{sec:conclusion}.

\section{Preliminary} \label{sec:prelim}

In the following, we introduce the notations and the fairness notions for the allocation chores.
Those for goods will be introduced in Appendix~\ref{sec:goods}.
We assume that the agents are unweighted. 
The weighted setting will be considered in Section~\ref{sec:weighted}.
We consider the problem of allocating $m$ indivisible chores $M$ to $n$ agents $N$ where each agent $i\in N$ has an additive cost function $c_i:2^M \to \bR^+ \cup \{0\}$.
A cost function $c_i$ is said to be {\em additive} if for any bundle $S \subseteq M$ we have $c_i(S) = \sum_{e\in S} c_i(\{e\})$.
For convenience, we use $c_i(e)$ to denote $c_i(\{e\})$.
We use $\bc = (c_1, . . . , c_n)$ to denote the cost functions of agents.
We assume w.l.o.g. that each item has cost at most one to each agent, i.e. $c_i(e) \leq 1$ for any $i\in N$, $e\in M$.
An allocation is represented by an $n$-partition $\bX = (X_1,\ldots,X_n)$ of the items, where $X_i \cap X_j = \emptyset$ for all $i \neq j$ and $\cup_{i\in N} X_i = M$.
In allocation $\bX$, agent $i\in N$ receives bundle $X_i$.
For convenience of notation, given any set $X\subseteq M$ and $e\in M$, we use $X+e$ and $X-e$ to denote $X\cup\{e\}$ and $X\setminus\{e\}$, respectively.

\begin{definition}[PROP]
    An allocation $\bX$ is called proportional (PROP) if $c_i(X_i) \leq \frac{c_i(M)}{n}$ for all $i\in N$.
\end{definition}

We use $\PROP_i$ to denote agent $i$'s proportional share, i.e., $\PROP_i = \frac{c_i(M)}{n}$.

\begin{definition}[PROP1]
    An allocation $\bX$ is called proportional up to one item (PROP1) if for any $i\in N$, there exists an item $e\in X_i$ such that $c_i(X_i - e) \leq \PROP_i$.
\end{definition}

\begin{definition}[PROPX]
    An allocation $\bX$ is called proportional up to any item (PROPX) if for any $i\in N$, and any item $e\in X_i$, we have $c_i(X_i - e) \leq \PROP_i$.
\end{definition}

We use $s_i \geq 0$ to denote the subsidy we give to agent $i\in N$, $\bs = (s_1, \ldots, s_n)$ to denote the set of subsidies, and $\|\bs\|_1 = \sum_{i\in N} s_i$ to denote the total subsidy.

\begin{definition}[PROPS]
    An allocation $\bX$ with subsidies $\bs = (s_1, \ldots, s_n)$ is called proportional with subsidies (PROPS) if for any $i\in N$ we have $c_i(X_i) - s_i \leq \PROP_i$.
\end{definition}

Given any instance, we aim to find PROPS allocation $\bX$ with a small amount of total subsidy.
Unlike envy-freeness with subsidy, given any allocation $\bX$, computing the minimum subsidy to achieve proportionality can be trivially done by setting
\begin{equation*}
    s_i = \max \{c_i(X_i) - \PROP_i, 0\}, \qquad \forall i\in N.
\end{equation*}

Therefore, in the rest of this paper, we mainly focus on computing the allocation $\bX$. The subsidy to each agent will be automatically decided by the above equation.

\section{Identical Cost Functions} \label{sec:identical}

In this section, we focus on the computation of PROPS allocations when agents have identical cost functions, i.e., $c_i(\cdot) = c(\cdot)$ for all agents $i\in N$.
We use $\PROP$ to denote the proportional share of all agents, i.e., $\PROP = c(M)/n$.
Before we present our algorithmic results, we first show a lower bound on the total subsidy required to achieve proportionality.

\begin{lemma}\label{lemma:lower-bounds}
    Given any $n\geq 2$, there exists an instance with $n$ agents for which every PROPS allocation requires a total subsidy of at least $n/4$ (when $n$ is even); at least $(n^2-1)/(4n)$ (when $n$ is odd).
\end{lemma}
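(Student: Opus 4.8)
The plan is to exhibit a single family of instances---$n$ identical agents facing $m:=\lfloor n/2\rfloor$ items, each item having cost exactly $1$ to everyone---and to show that no allocation of these items can be made PROP without total subsidy at least $m(n-m)/n$, which evaluates to $n/4$ for even $n$ and to $(n^2-1)/(4n)$ for odd $n$.

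First I would record the basic parameters: $c(M)=m$, hence $\PROP=m/n$. The point of taking $m=\lfloor n/2\rfloor<n$ is that $\PROP<1$, so \emph{every nonempty bundle is already over its proportional share}: if agent $i$ receives $t_i:=|X_i|\ge 1$ items then $c(X_i)=t_i$ and $c(X_i)-\PROP=t_i-m/n>0$. Consequently, for any allocation $\bX$ the cheapest subsidy vector making $(\bX,\bs)$ PROPS is $s_i=\max\{c(X_i)-\PROP,\,0\}=\max\{t_i-m/n,\,0\}$, and its total telescopes:
\[
\|\bs\|_1=\sum_{i:\,t_i\ge 1}\bigl(t_i-\tfrac{m}{n}\bigr)=\Bigl(\sum_{i} t_i\Bigr)-\tfrac{m}{n}\cdot k = m-\tfrac{m}{n}\,k,
\]
where $k$ is the number of agents receiving at least one item. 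Crucially this value depends on $\bX$ only through $k$, not on how the items are split among the nonempty bundles.

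Next I would bound $k$. Since there are only $m$ items, $k\le\min\{m,n\}=m$, and $m-\frac{m}{n}k$ is decreasing in $k$; hence every allocation satisfies $\|\bs\|_1\ge m-\frac{m}{n}\cdot m=\frac{m(n-m)}{n}$. Because any PROPS pair $(\bX,\bs)$ must satisfy $s_i\ge\max\{c(X_i)-\PROP,0\}$ coordinatewise, this lower bound holds for \emph{every} PROPS allocation of the instance. Finally I would substitute $m=\lfloor n/2\rfloor$: for even $n$, $\frac{m(n-m)}{n}=\frac{(n/2)(n/2)}{n}=\frac n4$; for odd $n$, $m=\frac{n-1}{2}$ and $n-m=\frac{n+1}{2}$, so $\frac{m(n-m)}{n}=\frac{(n-1)(n+1)}{4n}=\frac{n^2-1}{4n}$.

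I do not expect a serious obstacle, but the two steps that need care are: (i) justifying that the surplus $t_i-m/n$ is strictly positive for \emph{all} nonempty bundles---this is exactly what the choice $m<n$ buys, and it is what makes the sum telescope into a clean function of $k$; and (ii) observing that spreading the items over as many agents as possible (rather than concentrating them) is what minimizes the required total subsidy, so the instance cannot be undercut once $m$ is fixed. Picking $m=\lfloor n/2\rfloor$ is then simply maximizing $m(n-m)/n$ over integers $m\in\{1,\dots,n-1\}$.
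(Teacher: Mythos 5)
Your proposal is correct and uses exactly the same construction and argument as the paper: $\lfloor n/2\rfloor$ unit-cost items for $n$ identical agents, with the total subsidy equal to $c(M)-k\cdot\PROP$ for $k$ the number of nonempty bundles, minimized at $k=\lfloor n/2\rfloor$. Your write-up merely makes the telescoping and the monotonicity in $k$ slightly more explicit.
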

\begin{proof}
    Suppose $n\geq 2$ is even.
    Consider the instance with $n$ agents and $n/2$ items where each item has cost $1$ to all agents.
    For every agent $i\in N$, her proportional share is $\PROP = 1/2$.
    Consider any allocation $\bX$, and suppose that $k \leq n/2$ agents receive at least one item.
    Then each of these agents $i$ requires a subsidy of $c(X_i) - 1/2$, which implies $\|\bs\|_1 = c(M) - k/2 \geq n/4$.
    In other words, any PROPS allocation requires a total subsidy of at least $n/4$.

    Suppose $n\geq 2$ is odd.
    Consider the instance with $n$ agents and $(n-1)/2$ items where each item has cost $1$ to all agents. 
    For every agent $i\in N$, her proportional share is $\PROP = (n-1)/(2n)$.
    Following a similar analysis as above we can show that the total subsidy required by any PROPS allocation is at least
    \begin{equation*}
        c(M) - \frac{n-1}{2} \cdot \frac{n-1}{2n} = \frac{n-1}{2}\cdot \frac{n+1}{2n} = \frac{n^2 - 1}{4n}.
    \end{equation*}
    
    Therefore, every PROPS allocation requires a total subsidy at least $(n^2-1)/(4n)$.
\end{proof}

Next, we present an algorithm for computing PROPS allocations that require a total subsidy matching the above lower bound.
We use the Load Balancing Algorithm to compute an allocation $\bX$ that is PROPX (for $n$ identical agents), and show that the total subsidy required to achieve proportionality is at most $n/4$ (when $n$ is even); at most $(n^2-1)/(4n)$ (when $n$ is odd).
By re-indexing the items, we can assume w.l.o.g. that the items are sorted in decreasing order of costs, i.e. $c(e_1) \geq c(e_2) \geq \cdots \geq c(e_m)$.
During the algorithm, we greedily allocate items $e_1,e_2,\ldots,e_m$ one-by-one to the agent with minimum bundle cost, i.e., $\argmin_{i\in N} c(X_i)$.
We summarize the steps of the full algorithm in Algorithm~\ref{alg:LBA}.

\begin{algorithm}[htbp]
    \caption{Load Balancing Algorithm}
    \label{alg:LBA}
    \KwIn{An instance $(M,N,\bc)$ with identical agents and $c(e_1) \geq c(e_2) \geq \cdots \geq c(e_m)$.}
    For all $i\in N$, let $X_i \gets \emptyset$ \;
    \For{$j = 1,2,\dots, m$}{
        Let $i^* \gets \argmin_{i\in N} c(X_i)$\;
        Update $X_{i^*} \gets X_{i^*} \cup \{e_j\}$\;
    }
    \KwOut{An allocation $\bX = \{X_1,\dots,X_n\}$.}
\end{algorithm}

\begin{lemma}\label{lemma:identical-propx}
    The Load Balancing Algorithm (Algorithm~\ref{alg:LBA}) computes a PROPX allocation given any instance with $n$ agents having identical cost functions.
\end{lemma}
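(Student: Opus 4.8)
The plan is to show that the Load Balancing Algorithm produces an allocation in which, for every agent $i$ and every item $e \in X_i$, removing $e$ brings the bundle cost at or below $\PROP = c(M)/n$. First I would fix an arbitrary agent $i$ and an arbitrary item $e \in X_i$, and let $e_j$ be the \emph{last} item that was added to $X_i$ during the algorithm. It suffices to handle $e = e_j$: since items are processed in decreasing cost order and $e_j$ is the last one placed in $X_i$, we have $c(e) \geq c(e_j)$ for every other $e \in X_i$, so $c(X_i - e) \leq c(X_i - e_j)$, and proving the bound for $e_j$ yields it for all items in $X_i$.

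Next I would analyze the moment just before $e_j$ was assigned. At that step the algorithm chose $i$ as the agent minimizing the current bundle cost, so the bundle of $i$ at that time, which equals $X_i - e_j$, had cost no larger than the then-current cost of every other agent's bundle, hence no larger than the average of all bundle costs at that moment. Since the total cost only grows as items are added, the average bundle cost at that intermediate step is at most $c(M)/n = \PROP$. Therefore $c(X_i - e_j) \leq \PROP$, which is exactly the PROPX condition for $e_j$, and by the reduction in the previous paragraph, for every item in $X_i$.

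The one place that needs a little care — and what I expect to be the main (minor) obstacle — is the edge case where $X_i - e_j = \emptyset$, i.e. $e_j$ is the only item agent $i$ ever receives; then $c(X_i - e_j) = 0 \leq \PROP$ trivially, so it is fine, but the ``average'' argument should be phrased so it does not implicitly assume every agent has received an item by that step. The clean way is: let $t$ be the step at which $e_j$ is assigned and let $c^{(t)}_k$ denote the cost of bundle $X_k$ restricted to items $e_1,\dots,e_{t-1}$; then $c(X_i - e_j) = c^{(t)}_i \leq c^{(t)}_k$ for all $k$ by the greedy choice, so $n \cdot c^{(t)}_i \leq \sum_{k\in N} c^{(t)}_k = \sum_{\ell < t} c(e_\ell) \leq c(M)$, giving $c(X_i - e_j) \leq c(M)/n = \PROP$. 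This completes the proof.
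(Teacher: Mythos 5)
Your proposal is correct and follows essentially the same route as the paper: reduce to the last item received (using the descending-cost order), then use the greedy minimum-cost choice to bound the pre-assignment bundle by the average, which is at most $c(M)/n$. The paper phrases this as a proof by contradiction while you argue directly (and you additionally note the trivial empty-bundle edge case), but the underlying argument is identical.
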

\begin{proof}
    Fix any agent $i\in N$ and let $e_{\sigma(i)}$ be the last item agent $i$ receives, it suffices to show that $c(X_i - e_{\sigma(i)}) \leq \PROP$ since items are allocated in the order of descending costs.
    Assume otherwise, i.e., $c(X_i - e_{\sigma(i)}) > \PROP$.
    Then at the moment when item $e_{\sigma(i)}$ was allocated, we have $c(X_j) \geq c(X_i - e_{\sigma(i)}) > \PROP$ for all $j\neq i$, which leads to a contradiction that $c(M) = \sum_{i\in N} c(X_i) > n \cdot \PROP = c(M)$.
\end{proof}

Next, we provide upper bounds on the total subsidy required to make $\bX$ a PROPS allocation, which exactly match the lower bounds given in Lemma~\ref{lemma:lower-bounds}.

\begin{theorem}\label{theorem:identical}
    Given any instance with $n$ agents having identical cost functions, there exists a PROPS allocation with total subsidy at most $n/4$ when $n$ is even, and at most $(n^2-1)/(4n)$ when $n$ is odd.
\end{theorem}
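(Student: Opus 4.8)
The plan is to verify that the allocation $\bX$ returned by the Load Balancing Algorithm, paired with the canonical subsidy $s_i=\max\{c(X_i)-\PROP,0\}$, already achieves the claimed bound. The engine of the proof is a balance property of the greedy rule: I claim that $\max_{i\in N}c(X_i)-\min_{i\in N}c(X_i)\le 1$. To establish this, let $\ell=\argmax_{i\in N}c(X_i)$; we may assume $X_\ell\neq\emptyset$, else all bundles are empty and the claim is trivial. Let $e$ be the last item assigned to agent $\ell$, at iteration $t$. At that iteration agent $\ell$ minimized the current bundle cost, so $c(X_\ell-e)\le c(X_j^{(t-1)})$ for every $j$, where $X_j^{(t-1)}$ is $j$'s bundle just before iteration $t$; since bundles only grow over time and agent $\ell$ receives nothing after iteration $t$, this gives $c(X_\ell)-c(e)=c(X_\ell-e)\le c(X_j)$ for all $j$. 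As $c(e)\le 1$, the claim follows.

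Next I would reduce the subsidy bound to an elementary optimization. Write $\delta_i=c(X_i)-\PROP$. Because the cost function is identical and additive, $\sum_{i\in N}c(X_i)=c(M)=n\cdot\PROP$, hence $\sum_{i\in N}\delta_i=0$, and by the balance claim $\max_i\delta_i-\min_i\delta_i\le 1$. The total subsidy is exactly $\sum_{i\in N}\max\{\delta_i,0\}$, so it suffices to bound this under the two constraints $\sum_i\delta_i=0$ and range at most $1$. Let $P=\{i:\delta_i>0\}$, $Z=N\setminus P$, $a=|P|$, $b=|Z|=n-a$; if $a=0$ the subsidy is $0$, so assume $a\ge 1$, which forces $b\ge 1$ as well. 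With $\delta_{\max}=\max_i\delta_i$, each $i\in P$ contributes at most $\delta_{\max}$ and each $i\in Z$ satisfies $\delta_i\ge\delta_{\max}-1$; since $\sum_{i\in P}\delta_i=\sum_{i\in Z}(-\delta_i)$, the total subsidy is at most $\min\{a\,\delta_{\max},\,b(1-\delta_{\max})\}$, and a short case split on whether $\delta_{\max}\le b/n$ shows this is at most $ab/n$.

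To finish, $ab/n=a(n-a)/n$ is maximized over integers $1\le a\le n-1$ at $a=\lfloor n/2\rfloor$, giving $\lfloor n/2\rfloor\lceil n/2\rceil/n$, which is $n/4$ for even $n$ and $(n^2-1)/(4n)$ for odd $n$. Together with Lemma~\ref{lemma:identical-propx}, which already guarantees that $\bX$ is PROPX, this yields the theorem. I expect the only subtle point to be the balance claim — in particular the bookkeeping that agent $\ell$'s final bundle minus its last-received item is precisely its bundle at the moment that item was assigned, while every other agent's bundle has only grown since; the concluding optimization is routine.
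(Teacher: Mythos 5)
Your proof is correct, but it reaches the bound by a different route than the paper. The paper works with the sequence $h_i$ (the slack of agent $i$'s bundle, with the last item removed for over-proportional agents), and uses two facts: $h_i \ge h_j$ whenever $i\in N_1$ and $j\in N_2$, and $\sum_i h_i \le |N_1|$; averaging then gives the subsidy bound $|N_1|\,|N_2|/n$. You instead isolate the classic balance property of greedy list scheduling, $\max_i c(X_i) - \min_i c(X_i) \le 1$, and feed it into a two-constraint optimization over $\delta_i = c(X_i) - \PROP$ (namely $\sum_i \delta_i = 0$ and range at most $1$), obtaining $\min\{a\,\delta_{\max},\, b(1-\delta_{\max})\} \le ab/n$ with $a=|N_1|$, $b=|N_2|$. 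Both arguments ultimately rest on the same two ingredients --- the greedy rule makes the agent receiving an item a current minimizer, and that item costs at most $1$ --- and both land on $|N_1|\,|N_2|/n \le \lfloor n/2\rfloor\lceil n/2\rceil/n$. Your version packages the algorithmic content into a single, standard lemma and makes the rest a self-contained inequality about real numbers, which is arguably cleaner and makes transparent exactly which property of the allocation is being used; the paper's $h_i$ formulation is what it reuses verbatim for the weighted variant (Theorem~\ref{theorem:w n/4}), though your $\delta_i$ argument also carries over there by replacing $\PROP$ with $\WPROP_i$, since the weighted greedy rule minimizes $\delta_i$ at each step. The only loose phrasing is the edge case of the balance claim: if the most-loaded bundle is empty it does not follow that all bundles are empty (they could contain zero-cost items), but then all bundle costs are $0$ and the claim is trivial anyway, so nothing breaks.
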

\begin{proof}
    Given the allocation $\bX$ returned by Algorithm~\ref{alg:LBA}, we first partition the agents into two disjoint groups $N_1$ and $N_2$ depending on whether the allocation is proportional to her as follows.
    \begin{equation*}
        N_1 = \{i\in N : c(X_i) > \PROP\}, \quad N_2 = \{i\in N : c(X_i)\le \PROP\}.
    \end{equation*}

    For all $i\in N_1$, we use $e_{\sigma(i)}$ to denote the last item allocated to agent $i$.
    Then we define
    \begin{equation*}
        h_i = 
        \begin{cases}
            \PROP - c(X_i - e_{\sigma(i)}), \quad &\forall i \in N_1 \\
            \PROP - c(X_i), \quad & \forall i \in N_2
        \end{cases}
        .
    \end{equation*}

    Note that by Lemma~\ref{lemma:identical-propx}, we have $h_i \geq 0$ for all $i\in N$. Moreover, since we only need to subsidize agents in $N_1$, the total subsidy required to achieve proportionality can be expressed as:
    \begin{align*}
        \|\bs\|_1 & = \sum_{i\in N_1} (c(X_i) - \PROP) = c(M) - \sum_{i\in N_2} c(X_i) - \frac{|N_1|}{n}\cdot c(M) \\
        & = \frac{|N_2|}{n}\cdot c(M) - \sum_{i\in N_2} c(X_i) = \sum_{i\in N_2} (\PROP - c(X_i)) = \sum_{i\in N_2} h_i.
    \end{align*}

    On the other hand, we show that the sequence $(h_1,\ldots,h_n)$ has some useful properties.
    
    First, using the same argument as we have shown in the proof of Lemma~\ref{lemma:identical-propx}, for all $i\in N_1$ and $j\in N_2$, we have $c(X_i - e_{\sigma(i)}) \leq c(X_j)$ (because otherwise item $e_{\sigma(i)}$ will not be allocated to agent $i$).
    Hence we have $h_i \geq h_j$ for all $i\in N_1$ and $j\in N_2$.
    By renaming the agents, we can assume w.l.o.g. that $h_1 \geq h_2 \geq \cdots \geq h_n$ (see Figure~\ref{fig:ident_val} for an illustrating example).
    
    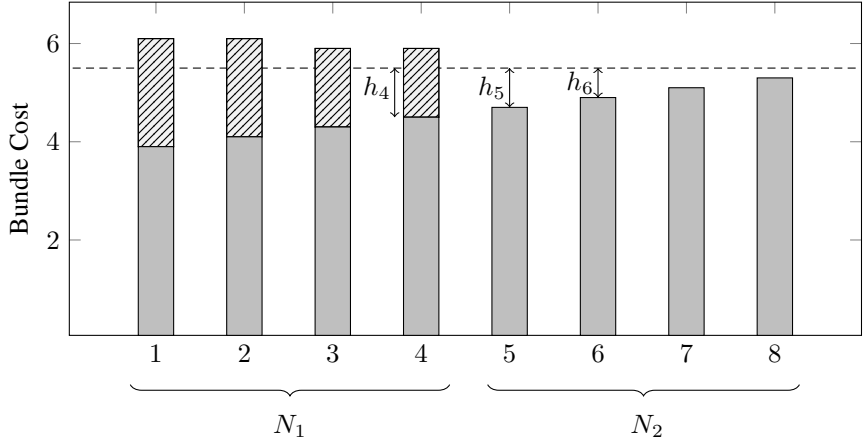
\begin{figure}[htbp]
    \centering
    \begin{tikzpicture}						
        \begin{axis}[
        ybar stacked,
        enlargelimits=0.14,
        width = 0.8\textwidth,
	  height = 0.4\textwidth,
        bar width=0.4, 
        bar shift=0pt,
        xtick=data,
        ymin=0.8,
        legend style={draw=none},
        legend pos = north west,
	ylabel =  Bundle Cost,
        ] 
        
        	\addplot[draw=black, fill=gray!50] coordinates{
        		(1,3.9)
                    (2,4.1)
                    (3,4.3)
                    (4,4.5)
                    (5,4.7)
                    (6,4.9)
                    (7,5.1)
                    (8,5.3)
        	};
        	\addplot[draw=black,postaction={pattern=north east lines}, fill=gray!10] coordinates{
                    
                    (1,2.2)
                    (2,2.0)
                    (3,1.6)
                    (4,1.4)
        	};
         
        \draw [<->] (5,5.5)--(5,4.7);
        \node at (4.8,5.1) {$h_{5}$};
        \draw [<->] (6,5.5)--(6,4.9);
        \node at (5.8,5.2) {$h_6$};
        
        \draw [<->] (3.7,5.5)--(3.7,4.5);
        \node at (3.5,5.1) {$h_4$};
        \draw [densely dashed] (-1,5.5)--(10,5.5); 
        \end{axis}
        \draw[decorate,decoration={brace,raise=10pt,amplitude=0.15cm},black] (5,-0.3)--(0.8,-0.3);
        \node at (2.9,-1.2) {$N_1$};
        \draw[decorate,decoration={brace,raise=10pt,amplitude=0.15cm},black] (9.6,-0.3)--(5.5,-0.3);
        \node at (7.6,-1.2) {$N_2$};

    \end{tikzpicture}
    \caption{An illustrating example for the allocation $\bX$ with agent groups $N_1$ and $N_2$. The dashed areas represent the last items received by agents in $N_1$. The horizontal dashed line represents the proportional share for all the agents.}
    \label{fig:ident_val}
    \end{figure}

    Second, by definition we have
    \begin{align*}
        \sum_{i \in N} h_i & = \sum_{i \in N_1} (\PROP-c(X_i)+c(e_{\sigma(i)}))+\sum_{i \in N_2} (\PROP-c(X_i)) \\
        & = n\cdot \PROP - \sum_{i\in N} c(X_i) + \sum_{i \in N_1} c(e_{\sigma(i)})
        = \sum_{i \in N_1} c(e_{\sigma(i)}) \leq |N_1|,
    \end{align*}
    where in the last inequality we use the assumption that each item cost at most one to each agent.

    Making use of the two properties, we are now ready to prove the theorem: the total subsidy required is
    \begin{align*}
        \sum_{i\in N_2} h_i \leq \frac{|N_2|}{n}\cdot \sum_{i\in N} h_i \leq \frac{|N_1|\cdot |N_2|}{n}\le 
        \begin{cases}
            n/4, \quad &\text{when $n$ is even}\\
            (n^2-1)/(4n),  \quad &\text{when $n$ is odd}
        \end{cases},
    \end{align*}
    where in the first inequality we use the property that $h_i \geq h_j$ for all $i\in N_1$ and $j\in N_2$.    
\end{proof}

\section{Agents with General Additive Cost Functions} \label{sec:general}

In this section, we focus on the case when agents have general additive cost functions, i.e., each agent $i$ has cost function $c_i$, and show that a total subsidy of $n/4$ suffices to ensure proportionality.
We first present a reduction showing that if we have an algorithm for computing PROPS allocations for identical ordering instances, we can convert it to an algorithm that works for general instances while preserving the subsidy requirement.
Similar reductions are widely used in the computation of approximate MMS allocations~\cite{journals/teco/BarmanK20,journals/aamas/BouveretL16,conf/sigecom/HuangL21,journals/corr/abs-2302-04581} and PROPX allocations~\cite{conf/www/0037L022}.

\begin{definition}[Identical Ordering (IDO) Instances]
    An instance is called identical ordering (IDO) if all agents have the same ordinal preference on the items, i.e., $c_i(e_1) \geq c_i(e_2) \geq \cdots \geq c_i(e_m)$ for all $i\in N$.
\end{definition}

\begin{lemma} \label{lemma:reduction-to-IDO}
    If there exists a polynomial time algorithm that given any IDO instance computes a PROPS allocation with total subsidy at most $\alpha$, then there exists a polynomial time algorithm that given any instance computes a PROPS allocation with total subsidy at most $\alpha$.
\end{lemma}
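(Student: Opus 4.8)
The plan is to use the standard reduction from general instances to IDO instances that appears in the approximate-MMS and PROPX literature, adapted to preserve the subsidy guarantee. Given a general instance $(M,N,\bc)$, first I would construct an IDO instance $(M,N,\bc')$ on the same set of agents and items as follows: for each agent $i$, sort the items in decreasing order of $c_i$, and let $c'_i(e_j)$ be the $j$-th largest value among $\{c_i(e)\}_{e\in M}$. In other words, each agent's multiset of costs is preserved, but we relabel so that every agent agrees on the global order $e_1, e_2, \ldots, e_m$. Note that $c'_i(M) = c_i(M)$, so the proportional share $\PROP_i$ is identical in the two instances, and each item still has cost at most one.

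Next I would run the assumed algorithm on the IDO instance to obtain an allocation $\bX' = (X'_1, \ldots, X'_n)$ that is PROPS with total subsidy at most $\alpha$; that is, $c'_i(X'_i) - s_i \le \PROP_i$ with $\sum_i s_i \le \alpha$. The key step is to convert $\bX'$ back into an allocation $\bX$ of the original instance with no larger cost for any agent. This is done by a sequential "envy-cycle-free" style rounding, processing the real items $e_1, \ldots, e_m$ one at a time: maintain a bijection between the slots used so far and the real items, and when processing $e_j$, give it to the agent who (in $\bX'$) was assigned the $j$-th slot, but actually hand that agent her cheapest not-yet-allocated real item. A clean way to phrase this: for $k = 1, \ldots, m$, let $t_k$ be the index such that $e_{t_k}$ is assigned to agent $i_k := $ (the owner of the $k$-th item in the IDO order) — then reassign so that agent $i_k$ instead receives the item she ranks $|\{\ell \le k : i_\ell = i_k\}|$-th cheapest is awkward; instead I would argue directly by the following invariant.

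Concretely, here is the cleanest argument I would write. Process items $e_1, \ldots, e_m$ in order. When item $e_j$ is considered, it was allocated in $\bX'$ to some agent, say $i$; assign $e_j$ in $\bX$ to the agent who currently has received the fewest items relative to what $\bX'$ prescribes — actually the standard lemma is simpler: one shows that there is an allocation $\bX$ of the true items such that for every agent $i$ and every $k$, the $k$-th most expensive item (under $c_i$) in $X_i$ costs at most the $k$-th most expensive item (under $c'_i$) in $X'_i$. This follows because $\bX'$ partitions the $m$ "ranks" among agents, and we can let each agent take, from the real items, a set whose sorted cost vector is dominated coordinatewise; a greedy matching (process real items $e_1$ down to $e_m$, each time assigning $e_j$ to an agent who was assigned rank $j$ in $\bX'$) realizes this, and additivity then gives $c_i(X_i) \le c'_i(X'_i)$ for all $i$. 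Hence $c_i(X_i) - s_i \le c'_i(X'_i) - s_i \le \PROP_i$, so $(\bX, \bs)$ is PROPS for the original instance with the same total subsidy $\le \alpha$, and everything runs in polynomial time.

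The main obstacle is making the rounding step rigorous: one must verify that the greedy real-item-to-rank assignment is well-defined (every real item $e_j$ gets matched to the agent owning rank $j$, which is immediate since $\bX'$ is a partition of ranks $1, \ldots, m$) and that it yields coordinatewise domination of the sorted cost vectors. The crux is the inequality $c'_i(e_j) \ge c_i(e_j)$ when $e_j$ occupies a rank position no later than its true rank for agent $i$ — but since the IDO costs $c'_i$ are just $i$'s own costs re-sorted into the global order, and the global order need not match $i$'s order, I would instead track, for each agent $i$, the set $R_i \subseteq \{1,\ldots,m\}$ of ranks assigned to her by $\bX'$; then $c'_i(X'_i) = \sum_{r \in R_i} (\text{$i$'s $r$-th largest cost})$, and I can let $X_i$ consist of $i$'s $|R_i|$ cheapest real items among those not taken by lower-indexed agents in a suitable processing order, giving $c_i(X_i) \le \sum_{r \in R_i}(\text{$i$'s $r$-th largest cost}) = c'_i(X'_i)$. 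Checking that the sets $X_i$ so defined actually partition $M$ — i.e., that the greedy never runs out — is the one genuinely delicate point, and it follows from a Hall-type / exchange argument on the rank sets, which I would spell out carefully.
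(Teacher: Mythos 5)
Your overall strategy is the same as the paper's: build the IDO instance $\bc'$ by sorting each agent's costs (so $c'_i(M)=c_i(M)$ and the proportional shares are unchanged), run the assumed algorithm to get $(\bX',\bs')$, and then convert $\bX'$ into an allocation $\bX$ of the original instance with $c_i(X_i)\le c'_i(X'_i)$ for every $i$, which transfers the PROPS guarantee with the same total subsidy. The gap is that this conversion --- which is the entire content of the lemma --- is never actually established. Your first sketch (process $e_1,\ldots,e_m$ in that order and let the owner of rank $j$ pick her cheapest remaining real item) is not merely awkward, it is wrong: the owner of the last rank is forced to take whatever single item remains, and that item can be her most costly one even though $c'_i(e_m)$ is her least costly value. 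Concretely, with two agents and two items $a,b$, costs $c_1(a)=0.9$, $c_1(b)=1$, $c_2(a)=0$, $c_2(b)=1$, and $X'_1=\{e_1\}$, $X'_2=\{e_2\}$: agent $1$ picks $a$ first, and agent $2$ is left with $b$ of cost $1$ while $c'_2(X'_2)=0$, so the claimed domination $c_i(X_i)\le c'_i(X'_i)$ fails. Your fallback --- give each agent her $|R_i|$ cheapest still-available items in ``a suitable processing order,'' justified by an unspecified Hall-type exchange argument --- defers exactly the step that needs proving; without pinning down the order it is not a proof.

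The fix is a one-line change, and it is what the paper does: process the ranks in \emph{increasing} order of IDO cost, i.e., $j=m,m-1,\ldots,1$, and at step $j$ let the agent who owns $e_j$ in $\bX'$ pick her cheapest unallocated real item. At that moment exactly $j$ items remain, and for any agent $i$ at most $j-1$ items of $M$ have cost exceeding her $j$-th largest cost $c'_i(e_j)$; hence by pigeonhole her cheapest remaining item costs at most $c'_i(e_j)$. Summing over the ranks assigned to $i$ gives $c_i(X_i)\le c'_i(X'_i)$ directly, with no Hall-type machinery. In the reversed order the counting goes the right way; in the order you wrote down it does not.
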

\begin{proof}
    Given any instance $\mathcal{I} = (M,N,\bc)$, we construct an IDO instance $\mathcal{I'} = (M,N,\bc')$ where $\bc' = (c'_1, \ldots, c'_n)$ is defined as follows.
    Let $\sigma_i(k)\in M$ be the $k$-th most costly item under cost function $c_i$.
    Let $c'_i(e_k) = c_i(\sigma_i(k))$.
    Thus with cost function $\bc'$, the instances $\mathcal{I'}$ is IDO.
    Note that for all $i\in N$ we have $c'_i(M) = c_i(M)$.
    Then we run the algorithm for the IDO instance $\mathcal{I'}$ and get a PROPS allocation $\bX'$ with subsidy $\bs'$ such that $\|\bs'\|_1 \leq \alpha$.
    By definition, for all agent $i\in N$ we have
    \begin{equation*}
        c'_i(X'_i) - s'_i \leq \frac{1}{n} \cdot c'_i(M) = \frac{1}{n}\cdot c_i(M).
    \end{equation*}
    
    In the following, we use $\bX'$ to guide us on computing a PROPS allocation $\bX$ with $\bs$ for instance $\mathcal{I}$.
    We show that $c_i(X_i) \leq c'_i(X'_i)$ for all $i\in N$, which implies $\|\bs\|_1 \leq \|\bs'\|_1 \leq \alpha$.

    We initialize $X_i = \emptyset$ for all $i\in N$ and let $P = M$ be the set of unallocated items.
    Sequentially for $j = m,m-1,\ldots,1$, we let the agent $i$ who receives item $e_j$ under allocation $\bX'$, i.e., $e_j \in X'_i$, pick her favorite unallocated item, i.e., update $X_i \gets X_i + e$ and $P\gets P-e$ for $e = \argmin_{e'\in P}\{c_i(e')\}$.
    At the beginning of each round $j$, we have $|P| = j$.
    Since $e_j$ is the $j$-th most costly item under cost function $c'_i$, we must have $c_i(e) \leq c'_i(e_j)$ for the item $e$ agent $i$ picks during round $j$.
    Therefore we can establish a one-to-one correspondence between items in $X_i$ and $X'_i$ satisfying the above inequality, which implies $c_i(X_i) \leq c'_i(X'_i)$.
\end{proof}

With the above reduction, in the following, we only consider IDO instances.
Our algorithm has two main steps: we first compute a fractional PROP allocation, in which a small number of items are fractionally allocated; then we find a way to round the fractional allocation to an integral one.
Since some agent may have cost exceeding her proportional share after rounding, we offer subsidies to these agents.
By carefully deciding the rounding scheme, we show that the total subsidy required is at most $n/4$.

\subsection{Computing an Allocation with at most \texorpdfstring{$n-1$}{} Factional Items}
\label{ssec:exactly-n-1}

In this section, we use the classic Moving Knife Algorithm to compute a fractional PROP allocation, based on which we compute the PROPS allocation $\bX$ with subsidy $\bs$.

\paragraph{The Algorithm.}
For ease of discussion, we interpret the $m$ items as an interval $(0,m]$, where item $e_i$ corresponds to interval $(i-1, i]$.
We interpret every interval as a bundle of items, where some items might be fractional.
Specifically, interval $(l,r]$ contains $(\lceil l \rceil - l)$-fraction of item $e_{\lceil l \rceil}$, $(r - \lfloor r \rfloor)$-fraction of item $e_{\lceil r \rceil}$ and integral item $e_j$ for every integer $j$ satisfying $(j-1,j] \subseteq (l,r]$.
The cost of the interval to each agent $i\in N$ is also defined in the natural way:
\begin{equation*}
    c_i(l,r) = (\lceil l \rceil - l)\cdot c_i(e_{\lceil l \rceil}) + \sum_{j = \lceil l \rceil + 1}^{\lfloor r \rfloor} c_i(e_j) + (r - \lfloor r \rfloor)\cdot c_i(e_{\lceil r \rceil}).
\end{equation*}
The algorithm proceeds in rounds, where in each round some agent picks an interval and leaves.
We maintain that at the beginning of each round, the remaining set of items forms a continuous interval $(l,m]$.
In each round, we imagine that there is a moving knife that moves from the leftmost position $l$ to the right.
Each agent shouts if she thinks that the cost of an interval passed by the knife is equal to her proportional share.
The last agent\footnote{In this paper we break tie arbitrarily but consistently, e.g., by agent id.} who shouts picks the interval passed by the knife and leaves, and the algorithm recurs on the remaining interval.
If in some round the knife reaches the end of the interval, any agent who has not shouted picks the whole interval $(l,m]$ and the algorithm terminates.
The steps of the full algorithm are summarized in Algorithm~\ref{alg:MKA}.



\begin{algorithm}[htbp]
    \caption{The Moving Knife Algorithm}
    \label{alg:MKA}
    \KwIn{The interval $(0,m]$ corresponding to all items $M$, agents $N$, cost function $\bc = (c_1, \ldots, c_n)$.}
    Initialize $X^0_i \gets \emptyset$ for each $i \in N$, and $l \gets 0$\;
    \While{$l \neq m$}{
        Let $r_i \gets \max \{r \leq m: c_i(l,r) \leq c_i(M)/n\}$ for all $i \in N$\;
        Let $i^* \gets \argmax\{r_i\}$\;
        Update $X^0_{i^*} \gets (l,r_{i^*}]$\;
        Update $N \gets N\setminus \{i^*\}$, $l \gets r_{i^*}$\;
    }
    \KwOut{Fractional allocation $\bX^0 = (X^0_1, \ldots, X^0_n)
    $.}
\end{algorithm}

It has been shown by Aziz et al.~\cite{journals/aamas/AzizCIW22} that the Moving Knife Algorithm computes a PROP allocation for divisible chores.
Therefore we have the following lemma immediately.
For completeness, we give a short proof.
By renaming the agents, we can assume w.l.o.g. that agents are indexed by their picking order, i.e., agent $i$ is the $i$-th agent who picks and leaves.

\begin{lemma}\label{lemma:MKA}
     The Moving Knife Algorithm computes fractional PROP allocations in polynomial time.
\end{lemma}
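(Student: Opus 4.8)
The plan is to verify two things about the output of Algorithm~\ref{alg:MKA}: that it terminates and runs in polynomial time, and that the resulting fractional allocation $\bX^0$ is PROP, i.e., $c_i(X^0_i) \le c_i(M)/n$ for every agent $i\in N$. Since the algorithm removes exactly one agent in each iteration of the \texttt{while} loop (the agent $i^*$ is deleted from $N$ and the knife position $l$ advances to $r_{i^*}$), there are at most $n$ iterations, and each iteration does only $O(n)$ work to compute the values $r_i$ (each $r_i$ is found by scanning the sorted item list, since $c_i(l,\cdot)$ is monotone and piecewise linear); hence the running time is polynomial. The only subtlety for termination is that the \texttt{while} loop exits precisely when $l = m$, so I should argue that this always happens: if the loop has run $n-1$ times, only one agent $j$ remains, and I claim $r_j = m$ for that agent, so $X^0_j = (l,m]$ and $l$ becomes $m$.

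The core of the argument is correctness (the PROP property), which I would prove by induction on the picking order. The key invariant I want to maintain is that at the start of each round with remaining interval $(l,m]$ and remaining agent set $N'$, every agent $j\in N'$ satisfies $c_j(l,m) \le |N'|\cdot c_j(M)/n$ — that is, each surviving agent still finds the leftover ``affordable'' in an amortized sense. Initially $l=0$, $N'=N$, and this reads $c_j(0,m) = c_j(M) \le n\cdot c_j(M)/n$, which holds with equality. For the inductive step: in the current round, agent $i^*$ picks the interval $(l,r_{i^*}]$. By the definition $r_{i^*} = \max\{r \le m : c_{i^*}(l,r) \le c_{i^*}(M)/n\}$ together with the fact that $i^* = \argmax_i r_i$ is the agent whose knife moves furthest, for any agent $j$ remaining after this round we have $r_j \le r_{i^*}$, which by monotonicity of $c_j(l,\cdot)$ and the definition of $r_j$ gives $c_j(l, r_{i^*}) \le c_j(M)/n$ (here I need the edge case: if $r_j$ was ``capped'' at $m$, i.e. $c_j(l,m)\le c_j(M)/n$, then certainly $c_j(l,r_{i^*})\le c_j(l,m)\le c_j(M)/n$; otherwise $r_j$ is exactly the threshold point and $r_{i^*}\ge r_j$ forces the same bound up to the defining equality). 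Therefore
\begin{equation*}
    c_j(r_{i^*}, m) = c_j(l,m) - c_j(l,r_{i^*}) \le |N'|\cdot \frac{c_j(M)}{n} - \frac{c_j(M)}{n} = (|N'|-1)\cdot \frac{c_j(M)}{n},
\end{equation*}
which is exactly the invariant for the next round, since the new remaining set has size $|N'|-1$ and the new left endpoint is $r_{i^*}$.

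Given the invariant, the PROP guarantee follows: when an agent $i$ picks in a round where $r_i \le m$ is the threshold, she gets $c_i(X^0_i) = c_i(l, r_i) \le c_i(M)/n$ directly from the definition of $r_i$ (whenever the knife does not run off the end, $c_i(l,r_i)=c_i(M)/n$ by continuity, and when it does, $r_i=m$ and the bound still holds). The one case needing the invariant is the \emph{last} agent $j$ to pick: she takes the whole remaining interval $(l,m]$, and at that point $|N'|=1$, so the invariant says $c_j(l,m)\le 1\cdot c_j(M)/n = c_j(M)/n$, giving PROP for her too. This also closes the termination gap mentioned above, since it shows $r_j\ge$ (the point that exhausts the interval), so the last round indeed sets $l=m$.

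The main obstacle I anticipate is handling the boundary behavior of the moving knife cleanly — specifically the cases where $r_i$ is determined by the cap $r\le m$ rather than by the equation $c_i(l,r)=c_i(M)/n$, and making sure the ``last agent who shouts'' tie-breaking and the ``$\argmax r_i$'' selection are consistent so that the invariant's inequality $r_j \le r_{i^*}$ for surviving agents $j$ genuinely holds. Since costs are additive and piecewise-linear in the knife position, $c_i(l,\cdot)$ is continuous and nondecreasing, so the threshold $r_i$ is well-defined; I would state this monotonicity/continuity explicitly as it is what licenses both the definition of $r_i$ and the inequality $c_j(l,r_{i^*})\le c_j(M)/n$ used in the inductive step. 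Everything else is bookkeeping. This matches the claim that the algorithm (due to Aziz et al.~\cite{journals/aamas/AzizCIW22}) computes fractional PROP allocations in polynomial time, so I would keep the proof short and cite their result for the correctness core while spelling out the polynomial running time.
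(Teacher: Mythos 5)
Your overall strategy is sound and is essentially the paper's own argument recast as a round-by-round induction: the paper proves PROP for every non-last agent directly from the definition of $r_i$, and for the last agent $n$ it observes that every earlier bundle costs her at least $c_n(M)/n$, so the leftover costs her at most $c_n(M)/n$. Your invariant $c_j(l,m)\le |N'|\cdot c_j(M)/n$ is exactly that observation telescoped over the rounds, and specialized to $|N'|=1$ it recovers the paper's conclusion; the extra care about termination and the capped case $r_j=m$ is a minor but real improvement in rigor over the paper's two-line proof.

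There is, however, a direction error in the justification of the key inductive step. You assert that $r_j\le r_{i^*}$ ``gives $c_j(l,r_{i^*})\le c_j(M)/n$,'' but your displayed telescoping inequality $c_j(l,m)-c_j(l,r_{i^*})\le |N'|\cdot c_j(M)/n - c_j(M)/n$ requires the \emph{opposite} bound $c_j(l,r_{i^*})\ge c_j(M)/n$: the piece removed must be large in agent $j$'s eyes, not small. The correct direction is also the one that actually follows from the maximality in the definition of $r_j$: if $r_j<r_{i^*}$ then $c_j(l,r_{i^*})>c_j(M)/n$, since $r_j$ is the largest $r$ with $c_j(l,r)\le c_j(M)/n$; if $r_j=r_{i^*}<m$ then continuity gives $c_j(l,r_{i^*})=c_j(M)/n$; and if $r_{i^*}=m$ the surviving agents receive the empty interval, for which the invariant holds trivially. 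Your parenthetical ``edge case'' discussion compounds the confusion by again deriving the $\le$ direction. With the inequality flipped (and the $r_{i^*}=m$ case handled as above) the induction goes through and the proof is correct.
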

\begin{proof}
    Let $\bX^0 = (X^0_1, \ldots, X^0_n)$ be the allocation returned by the algorithm.
    Consider agent $n$ who receives the last bundle.
    If $X^0_n$ is empty then the allocation is clearly PROP to agent $n$.
    Suppose $X^0_n$ is not empty.
    For any other agent $i \neq n$ we must have $c_n(X^0_{i}) \geq \frac{1}{n} \cdot c_n(M)$ since at the round when agent $i$ picked a bundle, agent $i$ shouts not earlier than agent $n$.
    Hence we have
    \begin{equation*}
        c_n(X_n) = c_n(M) - \sum_{i\neq n} c_n(X_i) \leq c_n(M) - \frac{n-1}{n} \cdot c_n(M) = \frac{1}{n} \cdot c_n(M) = \PROP_n.
    \end{equation*}
    
    For any other agent $i \neq n$, the design of the algorithm ensures that she receives at most her proportional share.
    Thus the algorithm computes a (complete) fractional allocation that is PROP to all agents.
\end{proof}


\begin{example} \label{example:fractional-allocation}
    Consider the following instance $\mathcal{I}^*$ with $n = 4$ agents, $m = 6$ items, and costs shown in Table~\ref{tab:exp-instance}, where $\epsilon>0$ is arbitrarily small.
    Note that the instance is IDO and we have $\PROP_1 = 1.5-\epsilon, \PROP_2 = 1.25 - \epsilon$ and $\PROP_3 = \PROP_4 = 1$.
    
    \begin{table}[htbp]
        \centering
        \begin{tabular}{c|c|c|c|c|c|c}
            &  $e_1$ & $e_2$ & $e_3$ & $e_4$ & $e_5$ & $e_6$\\ \hline
            agent 1   & $1$ & $1$ & $1$ & $1$ & $1$ & $1-4\epsilon$\\
            agent 2   & $1$ & $1$ & $1$ & $1$ & $1-4\epsilon$ & $0$\\
            agent 3   & $1$ & $1$ & $1$ & $1$ & $0$ & $0$\\
            agent 4   & $1$ & $1$ & $1$ & $1$ & $0$ & $0$ 
        \end{tabular}
        \smallskip
        \caption{Example instance $\mathcal{I}^*$ with $4$ agents and $6$ items.}
        \label{tab:exp-instance}
    \end{table}

    After running Algorithm~\ref{alg:MKA} on instance $\mathcal{I}^*$, we obtain the following fractional allocation (see Table~\ref{tab:fractional-allocation} and Figure~\ref{fig:allocation_of_I*}), where the number in each cell corresponds to the fraction of item the agent receives.
    
    \begin{table}[htbp]
        \centering
        \begin{tabular}{c|c|c|c|c|c|c}
            &  $e_1$ & $e_2$ & $e_3$ & $e_4$ & $e_5$ & $e_6$\\ \hline
            agent 1   & $1$ & $0.5-\epsilon$ & $0$ & $0$ & $0$ & $0$\\
            agent 2   & $0$ & $0.5+\epsilon$ & $0.75-2\epsilon$ & $0$ & $0$ & $0$\\
            agent 3   & $0$ & $0$ & $0.25+2\epsilon$ & $0.75-2\epsilon$ & $0$ & $0$\\
            agent 4   & $0$ & $0$ & $0$ & $0.25+2\epsilon$ & $1$ & $1$
        \end{tabular}
        \smallskip
        \caption{The fractional allocation returned by the algorithm.}
        \label{tab:fractional-allocation}
    \end{table}

    \begin{figure}[h]
        \centering
        \begin{tikzpicture}	
            \filldraw [fill = gray!50, draw = none] (0,0) rectangle (3,1);
            \filldraw [fill = gray!50, draw = none] (5.5,0) rectangle (7.5,1);
            \node at (1.5,1.5) {Agent $1$};    \node at (4.25,1.5) {Agent $2$};
            \node at (6.5,1.5) {Agent $3$};       \node at (10.25,1.5) {Agent $4$};
            \draw (0,0) rectangle (2,1); \node at (1,-0.5) {$e_1$};
            \draw (2,0) rectangle (4,1); \node at (3,-0.5) {$e_2$};
            \draw (4,0) rectangle (6,1); \node at (5,-0.5) {$e_3$};
            \draw (6,0) rectangle (8,1); \node at (7,-0.5) {$e_4$};
            \draw (8,0) rectangle (10,1); \node at (9,-0.5) {$e_5$};
            \draw (10,0) rectangle (12,1); \node at (11,-0.5) {$e_6$};      
                \footnotesize \node at (1.5,0.5) {$1$};
            \draw [densely dashed] (3,-0.25)--(3,1.25); 
                \footnotesize \node at (2.5,0.5) {$0.5 \! - \! \epsilon$};
                \footnotesize \node at (3.5,0.5) {$0.5 \! + \! \epsilon$};
            \draw [densely dashed] (5.5,-0.25)--(5.5,1.25); 
                \footnotesize \node at (4.75,0.5) {$0.75 \! - \! 2\epsilon$};
            \draw [densely dashed] (7.5,-0.25)--(7.5,1.25); 
                \footnotesize \node at (6.75,0.5) {$0.75 \! - \! 2\epsilon$};
                \footnotesize \node at (9,0.5) {$1$};
                \footnotesize \node at (11,0.5) {$1$};
        \end{tikzpicture}
        \caption{Illustration the fractional allocation returned for instance $\mathcal{I}^*$.}
        \label{fig:allocation_of_I*}
    \end{figure}
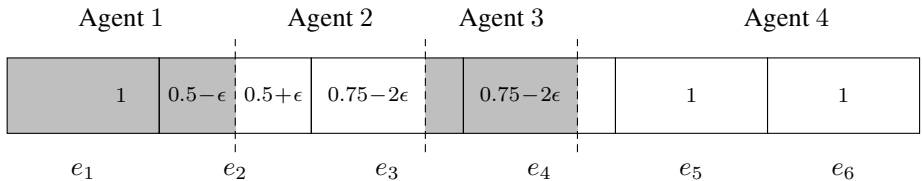
\end{example}

Since each agent receives a continuous interval in the Moving Knife Algorithm, there are at most $n-1$ cutting points.
Hence in the fractional allocation $\bX^0$, there are at most $n-1$ items that are fractionally allocated.
We call these items \emph{fractional} items.
Note that the number of fractional items can be strictly less than $n-1$, e.g., some item may get cut into three or more pieces.

\subsection{Rounding Scheme and Subsidy} \label{ssec:roudning}

In this section, we study different rounding schemes to turn the fractional allocation integral, and upper bound the total subsidy required to achieve proportionality.
We call a fractional item \emph{rounded} to some agent $i$ if, in the integral allocation, the item is fully allocated to agent $i$.
In the following, we consider the case when there are exactly $n-1$ fractional items in $\bX^0$ and we leave the case with less than $n-1$ fractional items to Section~\ref{ssec:less-than-n-1}.
Note that the analysis of these two cases are very similar and the upper bounds we derive follow the same formulation.

Given a fractional allocation $\bX^0$ returned by the Moving Knife Algorithm, we use $e_1, \ldots, e_{n-1}$ to denote the $n-1$ fractional items that are ordered by the time they are allocated.
In other words, for all $i\in \{1,\ldots, n-1\}$, item $e_i$ is shared by agents $i$ and $i+1$.
We denote by $x_i$ the fraction of item $e_i$ agent $i$ holds; consequently $1-x_i$ is the fraction agent $i+1$ holds.
Our goal is to round the fractional allocation $\bX^0$ to an integral allocation $\bX$ in which each fractional item $e_i$ is rounded to either agent $i$ or $i+1$.
The rounding result can be represented by a vector $\hat{x} = (\hat{x}_1, \ldots, \hat{x}_{n-1})$, where $\hat{x}_i \in \{0,1\}$ is the indicator of whether item $e_i$ is rounded to agent $i$.
Under the rounding result $\hat{x}$, for each agent $i\in N$, the subsidy $s_i$ required to guarantee proportionality is given by (for convenience we let $x_0 = \hat{x}_0 = x_n = \hat{x}_n = 1$)
\begin{equation*}
    s_i = \max \{(x_{i-1}- \hat{x}_{i-1})\cdot c_i(e_{i-1}) + (\hat{x}_i - x_i) \cdot c_i(e_i),0\}.
\end{equation*}

In the following, we consider two rounding schemes: \emph{up rounding} and \emph{threshold rounding}.

\begin{itemize}
    \item \textbf{Up Rounding:}
    For each $i\in \{1,\ldots,n-1\}$, we set $\hat{x}_i = 1$, i.e., round each item $e_i$ to agent $i$.
    \item \textbf{Threshold Rounding:} 
    For each $i\in \{1,\ldots,n-1\}$, we set $\hat{x}_i = 1$ if $x_i \geq 0.5$ and $\hat{x}_i = 0$ otherwise.
    In other words, we greedily round each $e_i$ to the agent who holds a larger fraction of $e_i$.
\end{itemize}

\begin{example} \label{example:upper-bounds}
    For the fractional allocation in Table~\ref{tab:fractional-allocation}, we have $x_1 = 0.5 - \epsilon$, $x_2 = x_3 = 0.75 - 2\epsilon$.
    \begin{itemize}
        \item Under the up rounding scheme, we obtain an allocation with $X_1 = \{e_1,e_2\}, X_2 = \{e_3\}, X_3 = \{e_4\}$ and $X_4 = \{e_5\}$, which implies $s_1 = 0.5+\epsilon, s_2 = s_3 = s_4 = 0$.
        \item Under the threshold rounding, we obtain an allocation with $X_1 = \{e_1\}, X_2 = \{e_2,e_3\}, X_3 = \{e_4\}$ and $X_4 = \{e_5, e_5\}$, which implies $s_1 = s_3 = s_4 = 0$ and $s_2 = 0.75 + \epsilon$.
    \end{itemize}
\end{example}

In the following, we show that at least one of the above two rounding schemes computes a PROPS allocation with total subsidy at most $n/4$, and prove the following.

\begin{theorem}\label{theorem:n/4subsidy}
    Given the fractional allocation $\bX^0$ with $n-1$ fractional items returned by Algorithm~\ref{alg:MKA}, there exists a rounding scheme that returns an integral PROPS allocation $\bX$ with total subsidy at most $n/4$.
\end{theorem}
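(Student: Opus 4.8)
The plan is to bound the total subsidy of each of the two candidate roundings separately, and then show the smaller of the two is always at most $n/4$. Keep the convention $x_0=x_n=1$. Since the fractional allocation $\bX^0$ is proportional, rounding it changes only the two fractional items $e_{i-1},e_i$ incident to agent $i$, and her required subsidy is at most the positive part of the resulting cost change: rounding $e_{i-1}$ to agent $i-1$ subtracts $(1-x_{i-1})c_i(e_{i-1})$ from her cost and rounding it to agent $i$ adds $x_{i-1}c_i(e_{i-1})$, and symmetrically for $e_i$. The single structural fact I would use is that, because the Moving Knife sweeps left to right and the instance is IDO (and, in the present case, all $n-1$ fractional items are distinct), $e_{i-1}$ precedes $e_i$ in the item order, so $c_i(e_{i-1})\ge c_i(e_i)$ for every $2\le i\le n-1$.

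For up rounding every $e_i$ goes to agent $i$, so agent $i$ (for $2\le i\le n-1$) gains $(1-x_i)c_i(e_i)$ and loses $(1-x_{i-1})c_i(e_{i-1})$; using $c_i(e_{i-1})\ge c_i(e_i)$ this net change is at most $(x_{i-1}-x_i)c_i(e_i)\le (x_{i-1}-x_i)^+$, while agent $1$ needs at most $1-x_1=(x_0-x_1)^+$ and agent $n$ only loses cost. Hence $\|\bs^{\mathrm{up}}\|_1\le D:=\sum_{i=1}^{n-1}(x_{i-1}-x_i)^+$. For threshold rounding the IDO fact is not even needed: each agent's subsidy is at most the sum of the positive cost changes she incurs, so the total subsidy is at most the sum, over fractional items, of the positive change each item causes; since $e_i$ is assigned to the agent already holding the larger share, that positive change equals $\min(x_i,1-x_i)$ times a per-item cost that is at most $1$. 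Hence $\|\bs^{\mathrm{th}}\|_1\le T:=\sum_{i=1}^{n-1}\min(x_i,1-x_i)$.

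It then suffices to prove the cleaner and stronger inequality $D+T\le n/2$, since then $\min\{\|\bs^{\mathrm{up}}\|_1,\|\bs^{\mathrm{th}}\|_1\}\le\min\{D,T\}\le (D+T)/2\le n/4$. Put $a_i=1-x_i$, so $a_0=0$, $a_i\in[0,1]$, $D=\sum_{i=1}^{n-1}(a_i-a_{i-1})^+$ and $T=\sum_{i=1}^{n-1}\bigl(\tfrac12-|a_i-\tfrac12|\bigr)$. From $(t)^+=\tfrac12(t+|t|)$ one gets $D=\tfrac12(a_{n-1}-a_0)+\tfrac12\sum_{i=1}^{n-1}|a_i-a_{i-1}|$; then the triangle inequality $|a_i-a_{i-1}|\le|a_i-\tfrac12|+|a_{i-1}-\tfrac12|$ together with the telescoping of $\sum_i\bigl(|a_i-\tfrac12|-|a_{i-1}-\tfrac12|\bigr)$ collapses $D+T$ to $\tfrac{n-1}{2}$ plus two boundary terms, each at most $\tfrac14$, giving $D+T\le n/2$.

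The substance is in the first two paragraphs: one must extract exactly the right per-rounding bounds — the ordered-descent bound $D$ for up rounding (this is where IDO and the left-to-right sweep enter, and where the two boundary agents $1$ and $n$ need a little care) and the minority-share bound $T$ for threshold rounding — in a form clean enough that the last step becomes the one-line lemma $D+T\le n/2$. The inequality $D+T\le n/2$ itself is elementary; I expect the bookkeeping in the up-rounding bound to be the step most prone to slips. For odd $n$ all three inequalities are tight (already $x=(1,\tfrac12,\tfrac14)$ attains $n/4$ when $n=3$, and more generally the sawtooth $x=(1,\tfrac12,a,\tfrac12,a,\dots)$ with a suitable $a\in(0,\tfrac12)$ does).
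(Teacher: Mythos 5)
Your proposal is correct and follows essentially the same route as the paper: the same two rounding schemes, the same per-scheme bounds (your $D=\sum_i (x_{i-1}-x_i)^+$ is exactly the paper's Lemma \ref{lemma:si_upperbound}/\ref{lemma:upper-bound-up-rounding} bound for up rounding, and your $T=\sum_i\min\{x_i,1-x_i\}$ is the paper's Lemma \ref{lemma:upper-bound-threshold-rounding} charging bound for threshold rounding), combined by showing the two bounds sum to at most $n/2$. The only difference is in that last step: the paper decomposes the descent indices into maximal segments and matches each $p_i$ to a segment endpoint, whereas you prove $D+T\le n/2$ in one shot via the identity $(t)^+=\tfrac12(t+|t|)$ and the triangle inequality around $\tfrac12$ — a slightly cleaner verification of the same elementary fact.
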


\subsubsection{Upper Bounding the Total Subsidy Required by Up Rounding}

In the following, we derive a formula that upper bounds the total subsidy required by the up rounding in terms of $\{x_1,\ldots,x_{n-1}\}$.
We first show the following.

\begin{lemma}\label{lemma:si_upperbound}
    Under the up rounding, the subsidy $\bs = (s_1, \ldots, s_n)$ satisfies the following properties:
    \begin{itemize}
        \item $s_1 \leq 1-x_1$, $s_n = 0$;
        \item $s_i \leq \max \{x_{i-1}-x_i, 0\}$ for all $i\in \{2,\ldots, n-1\}$.
    \end{itemize}
\end{lemma}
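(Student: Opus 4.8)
The plan is to analyze the up rounding scheme directly from its definition. Recall that under up rounding we set $\hat{x}_i = 1$ for all $i \in \{1, \ldots, n-1\}$, and with the convention $x_0 = \hat{x}_0 = x_n = \hat{x}_n = 1$, the subsidy to agent $i$ is
\begin{equation*}
    s_i = \max\{(x_{i-1} - \hat{x}_{i-1})\cdot c_i(e_{i-1}) + (\hat{x}_i - x_i)\cdot c_i(e_i), 0\}.
\end{equation*}
First I would substitute $\hat{x}_{i-1} = \hat{x}_i = 1$ to get $s_i = \max\{(x_{i-1} - 1)\cdot c_i(e_{i-1}) + (1 - x_i)\cdot c_i(e_i), 0\}$. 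The first term $(x_{i-1} - 1)\cdot c_i(e_{i-1})$ is non-positive since $x_{i-1} \le 1$, and the second term $(1-x_i)\cdot c_i(e_i)$ is non-negative; this sign information is what drives all three bullet points.

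**The boundary cases.** For agent $n$: since $\hat{x}_n = x_n = 1$, the second term vanishes, leaving $s_n = \max\{(x_{n-1}-1)\cdot c_n(e_{n-1}), 0\} = 0$ because $x_{n-1} \le 1$. For agent $1$: since $\hat{x}_0 = x_0 = 1$, the first term vanishes, so $s_1 = \max\{(1-x_1)\cdot c_1(e_1), 0\} = (1-x_1)\cdot c_1(e_1) \le 1 - x_1$, using $c_1(e_1) \le 1$.

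**The main case.** For $i \in \{2, \ldots, n-1\}$, I would bound $s_i$ by dropping the non-positive first term and using $c_i(e_i) \le 1$, giving $s_i \le (1-x_i)\cdot c_i(e_i) \le 1 - x_i$ — but this is not strong enough; I need $s_i \le \max\{x_{i-1} - x_i, 0\}$. The key observation that sharpens the bound is that the instance is IDO and item $e_{i-1}$ is (weakly) more costly than item $e_i$ in every agent's ordering: since items are indexed $e_1, \ldots, e_{n-1}$ by the order in which the fractional items are allocated by the Moving Knife Algorithm, and since the knife sweeps left to right through items sorted in decreasing cost, we have $c_i(e_{i-1}) \ge c_i(e_i)$ for every agent $i$. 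Therefore, writing $a = c_i(e_{i-1})$ and $b = c_i(e_i)$ with $a \ge b \ge 0$,
\begin{equation*}
    (x_{i-1} - 1)\cdot a + (1 - x_i)\cdot b = (x_{i-1} - x_i)\cdot b + (x_{i-1} - 1)(a - b).
\end{equation*}
Wait, let me recompute: $(x_{i-1}-1)a + (1-x_i)b = x_{i-1}a - a + b - x_i b$. Meanwhile $(x_{i-1}-x_i)b + (x_{i-1}-1)(a-b) = x_{i-1}b - x_i b + x_{i-1}a - x_{i-1}b - a + b = x_{i-1}a - x_i b - a + b$. These match. Now $(x_{i-1}-1)(a-b) \le 0$ since $x_{i-1} \le 1$ and $a \ge b$, so the expression is at most $(x_{i-1}-x_i)\cdot b \le (x_{i-1}-x_i)\cdot 1$ when $x_{i-1} \ge x_i$, and is at most $(x_{i-1}-x_i)b \le 0$ when $x_{i-1} < x_i$. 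In both cases $s_i = \max\{\cdot, 0\} \le \max\{x_{i-1} - x_i, 0\}$, as claimed.

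**Anticipated obstacle.** I expect the only genuinely non-routine point to be justifying $c_i(e_{i-1}) \ge c_i(e_i)$ for each agent $i$ — i.e., confirming that the labeling of the fractional items by allocation order is consistent with the global decreasing-cost order of the IDO instance, so that the "earlier" fractional item is at least as costly. This follows because the Moving Knife Algorithm processes the interval $(0,m]$ left to right, the items are pre-sorted by decreasing cost (common to all agents in an IDO instance), and fractional item $e_{i-1}$ lies to the left of fractional item $e_i$; hence $e_{i-1}$ corresponds to a smaller original index than $e_i$ and is weakly more costly for every agent. Once this monotonicity is in hand, the rest is the elementary algebraic manipulation above.
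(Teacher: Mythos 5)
Your proposal is correct and follows essentially the same route as the paper: both handle the boundary agents $1$ and $n$ directly from the rounding formula, and for $2 \le i \le n-1$ both use the IDO property $c_i(e_{i-1}) \ge c_i(e_i)$ together with $c_i(e_{i-1}) \le 1$ to reduce the expression to $\max\{x_{i-1}-x_i,0\}$ (the paper bounds $(1-x_i)c_i(e_i)$ by $(1-x_i)c_i(e_{i-1})$ and factors out $c_i(e_{i-1})$, whereas you factor out $c_i(e_i)$ — an immaterial algebraic difference). Your added justification that the allocation-order labeling of the fractional items is consistent with the decreasing-cost order is a correct elaboration of what the paper leaves implicit.
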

\begin{proof}
From the rounding scheme, we directly have $s_1 = (1-x_1)\cdot c_1(e_1) \leq 1-x_1$ and $s_n = 0$.
Now fix any $i\in \{2,\ldots, n-1\}$.
Since item $e_{i-1}$ is rounded to agent $i-1$ and item $e_i$ is rounded to agent $i$ in the integral allocation $\bX$, the subsidy $s_i$ for agent $i$ is given by
\begin{align*}
    s_i &= \max \{(x_{i-1}-1) \cdot c_i(e_{i-1}) + (1-x_i) \cdot c_i(e_i),0\} \\
    &\leq \max \{(x_{i-1}-x_i) \cdot c_i(e_{i-1}), 0\} \leq \max \{x_{i-1}-x_i, 0\},
\end{align*}
where the first inequality follows from $c_i(e_{i-1}) \geq c_i(e_i)$ (since the instance is IDO) and the second inequality follows from $c_i(e_{i-1}) \leq 1$.
\end{proof}

\begin{lemma} \label{lemma:upper-bound-up-rounding}
    There exists a sequence of indices $1\leq j_1 < i_2 < j_2 < \cdots < i_z < j_z \leq n-1$ such that 
    \begin{equation*}
    \sum_{i=1}^n s_i \leq (1-x_{j_1}) + (x_{i_2} - x_{j_2}) + \cdots + (x_{i_z} - x_{j_z}).
    \end{equation*}
\end{lemma}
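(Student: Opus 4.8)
The plan is to start from the bounds established in Lemma~\ref{lemma:si_upperbound} and show that the sum $\sum_{i=1}^n s_i$ telescopes into the claimed form. We have $s_1 \leq 1 - x_1$, $s_n = 0$, and $s_i \leq \max\{x_{i-1} - x_i, 0\}$ for $i \in \{2,\ldots,n-1\}$. The key observation is that the terms $\max\{x_{i-1}-x_i,0\}$ only contribute when the sequence $x_1, x_2, \ldots, x_{n-1}$ is \emph{decreasing} at step $i$; on ascending runs the contribution is zero. So the idea is to group consecutive indices into maximal monotone runs: each maximal descending run from some index $i$ to some index $j$ contributes $(x_i - x_j)$ after telescoping, and ascending runs contribute nothing.

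More precisely, I would proceed as follows. First, observe that $\sum_{i=2}^{n-1} \max\{x_{i-1}-x_i, 0\}$ can be rewritten by walking through $x_1, \ldots, x_{n-1}$ and recording only the ``net decreases''. Formally, decompose $\{1,\ldots,n-1\}$ into maximal intervals on which $x$ is monotone; on a maximal descending interval with endpoints $i < j$ (i.e.\ $x_i \geq x_{i+1} \geq \cdots \geq x_j$), the sum of $\max\{x_{k-1}-x_k,0\}$ over that interval equals $x_i - x_j$. Summing over all descending runs, $\sum_{i=2}^{n-1}\max\{x_{i-1}-x_i,0\} = \sum_{\ell} (x_{i_\ell} - x_{j_\ell})$ where $i_\ell < j_\ell$ are the endpoints of the descending runs, and these runs are separated by ascending runs so the indices interleave as $i_1 \le j_1 < i_2 < j_2 < \cdots$. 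The only subtlety is the first term $s_1 \le 1 - x_1$: I would merge it with the first descending run. If $x_1 \geq x_2$ (the sequence starts by descending), the first descending run starts at index $1$ and $s_1 + (\text{run contribution}) \le (1-x_1) + (x_1 - x_{j_1}) = 1 - x_{j_1}$, giving exactly the leading term $(1-x_{j_1})$ of the claimed bound. If instead $x_1 \le x_2$ (the sequence starts by ascending), then $s_1 \le 1 - x_1 \le 1 - x_{j_1}$ where $x_{j_1}$ is the bottom of the first descending run that occurs later — actually here I should be careful and take $j_1$ to be the index achieving the minimum of $x$ over the initial ascending-then-first-descent portion, or simply note $1 - x_1 \le 1 - x_{j_1}$ whenever $x_{j_1} \le x_1$, which holds by picking $j_1$ appropriately.

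Thus the cleanest formulation: let $j_1$ be the smallest index at which $x$ attains a local minimum reading left to right starting from $x_1$ going down (equivalently, $x_1 \ge x_2 \ge \cdots \ge x_{j_1} < x_{j_1+1}$, or $j_1 = n-1$), so that $s_1 + \sum_{i=2}^{j_1}\max\{x_{i-1}-x_i,0\} \le 1 - x_{j_1}$; then let the remaining descending runs in $\{j_1+1, \ldots, n-1\}$ have endpoints $i_2 < j_2 < \cdots < i_z < j_z$, contributing $\sum_{\ell=2}^z (x_{i_\ell} - x_{j_\ell})$. Adding everything gives $\sum_{i=1}^n s_i \le (1-x_{j_1}) + \sum_{\ell=2}^z (x_{i_\ell}-x_{j_\ell})$, which is exactly the asserted bound with $1 \le j_1 < i_2 < j_2 < \cdots < i_z < j_z \le n-1$.

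I expect the main obstacle to be purely bookkeeping: carefully defining the run decomposition so that the indices genuinely satisfy the strict interleaving $j_1 < i_2 < j_2 < \cdots$ and handling edge cases — when $x$ is globally monotone (only one run, or no descending run at all, in which case $z=1$ and the bound is just $1 - x_{j_1}$ with $j_1 = n-1$ or $j_1$ the global min), and when two descending runs are separated by a single index. None of this requires a hard inequality; the content is the telescoping identity $\sum_{k=i+1}^{j}\max\{x_{k-1}-x_k,0\} = x_i - x_j$ on a descending run, plus absorbing $s_1 \le 1-x_1$ into the first run. I would present it by induction on the number of runs or by an explicit greedy construction of the index sequence, whichever is shorter.
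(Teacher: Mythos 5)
Your proposal is correct and follows essentially the same route as the paper: the paper also starts from Lemma~\ref{lemma:si_upperbound}, sets $x_0=1$ so that $s_1\le x_0-x_1$ joins the first descending run, and decomposes $\{1,\dots,n-1\}$ into maximal descending segments (encoded there via indicator bits $a_i$ for $x_{i-1}>x_i$), telescoping each segment to $x_{i_\ell}-x_{j_\ell}$. The only cosmetic difference is that the paper's $x_0=1$ convention absorbs your case split on whether the sequence starts by ascending or descending.
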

\begin{proof}
    By Lemma~\ref{lemma:si_upperbound}, we can upper bound the total subsidy by
    \begin{equation*}
        \sum_{i=1}^n s_i \leq 1-x_1 + \sum_{i=2}^{n-1} \max\{x_{i-1}-x_i, 0\}.
    \end{equation*}

    For convenience, we introduce $x_0 = 1$.
    For each $i\in \{1,2, \ldots, n-1\}$, we use $a_i\in \{0,1\}$ to indicate whether $x_{i-1} > x_i$.
    In other words, we have $\max \{x_{i-1}-x_i , 0\} = x_{i-1} -x_i$ if $a_i = 1$ and $\max \{x_{i-1}-x_i , 0\} = 0$ otherwise.
    Note that we always have $a_1 = 1$.
    Observe that if we have $a_l = 1$ for all $l\in \{i,i+1,\ldots,j\}$, then we have
    \begin{equation*}
        \sum_{l=i}^{j} s_l \leq \sum_{l=i}^{j} (x_{l-1}-x_l) = x_{i-1}-x_j.
    \end{equation*}
    
    Therefore, given any $(a_1,\ldots,a_{n-1})\in \{0,1\}^{n-1}$, we can break the sequence into several segments of consecutive $1$'s by removing the $0$'s.
    Specifically, let $j_1$ be the fist index such that $a_{j_1+1} = 0$; let $i_2$ be the first index after $j_1$ such that $a_{i_2+1} = 1$; let $j_2$ be the fist index after $i_2$ such that $a_{j_2+1} = 0$, etc (see Figure~\ref{fig:maximal-array} for an example).
    In other words, the sub-sequences $\{a_1,\ldots,a_{j_1}\}$, $\{a_{i_2+1},\ldots,a_{j_2}\},\ldots, \{a_{i_z+1},\ldots,a_{j_z}\}$ are the maximal segments of consecutive $1$'s of the sequence $(a_1,\ldots,a_{n-1})\in \{0,1\}^{n-1}$.
    
    \begin{figure}[htbp]
    \centering
    \begin{tikzpicture}
        \node at (-1,1.5) {Agents};
        \node at (-1,0.5) {$a_i$};
        \filldraw [fill = gray!50, draw = none] (0,0) rectangle (2,1);
        \filldraw [fill = gray!50, draw = none] (3,0) rectangle (4,1);
        \filldraw [fill = gray!50, draw = none] (6,0) rectangle (8,1);
        \draw (0,0) rectangle (1,1);
            \node at (0.5,0.5) {$1$};
            \node at (0.5,1.5) {$1$};
        \draw (1,0) rectangle (2,1);
            \node at (1.5,0.5) {$1$};
            \node at (1.5,1.5) {$2$};
            \node at (1.5,-0.5) {$j_1$};
        \draw (2,0) rectangle (3,1);
            \node at (2.5,0.5) {$0$};
            \node at (2.5,1.5) {$3$};
            \node at (2.5,-0.5) {$i_2$};
        \draw (3,0) rectangle (4,1);
            \node at (3.5,0.5) {$1$};
            \node at (3.5,1.5) {$4$};
            \node at (3.5,-0.5) {$j_2$};
        \draw (4,0) rectangle (5,1);
            \node at (4.5,0.5) {$0$};
            \node at (4.5,1.5) {$5$};
        \draw (5,0) rectangle (6,1);
            \node at (5.5,0.5) {$0$};
            \node at (5.5,1.5) {$6$};
            \node at (5.5,-0.5) {$i_3$};
        \draw (6,0) rectangle (7,1);
            \node at (6.5,0.5) {$1$};
            \node at (6.5,1.5) {$7$};
        \draw (7,0) rectangle (8,1);
            \node at (7.5,0.5) {$1$};
            \node at (7.5,1.5) {$8$};
            \node at (7.5,-0.5) {$j_3$};
        \draw (8,0) rectangle (9,1);
            \node at (8.5,0.5) {$0$};
            \node at (8.5,1.5) {$9$};
    \end{tikzpicture}
    \caption{An example for identifying the indices $j_1,i_2,j_2,\ldots,i_z,j_z$. In the example we have $n=10$ and $z = 3$. The total subsidy is upper bounded by $(1-x_2) + (x_4 - x_3) + (x_8 - x_6)$.}
    \label{fig:maximal-array}
    \end{figure}

    Therefore we can identify the indices $0 = i_1 < j_1 < i_2 < j_2 < \cdots < i_z < j_z \leq n-1$ and by a telescope sum for each segment $[i_x, j_x]$, where $x \in \{1,2,\ldots,z\}$, we obtain the claimed upper bound.
\end{proof}

\subsubsection{Upper Bounding the Total Subsidy Required by Threshold Rounding}
\label{sssec:subsidy-for-threshold-rounding}

In the following, we derive a formula that upper bounds the total subsidy required by the threshold rounding in terms of $\{x_1,\ldots,x_{n-1}\}$.
We use a charging argument that charges money to the fractional items $e_1,\ldots,e_{n-1}$: we charge each fractional item $e_i$ an amount of money $p_i = \min\{ x_i, 1-x_i \}$.
We show that the total charge to the fractional items is sufficient to pay for the subsidy.

\begin{lemma} \label{lemma:upper-bound-threshold-rounding}
    For all $i\in \{1,2,\ldots,n-1\}$, let $p_i = \min\{ x_i, 1-x_i \}$. Then under the threshold rounding we have $\|\bs\|_1 \leq \sum_{i=1}^{n-1} p_i$.
\end{lemma}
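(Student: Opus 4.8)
The plan is to bound, for each agent $i$, the subsidy $s_i$ under threshold rounding by a sum of charges $p_{i-1}$ and $p_i$ associated with the (at most) two fractional items $e_{i-1}$ and $e_i$ that agent $i$ touches, and then argue that each $p_i$ is used to pay at most once across all agents. Recall that under threshold rounding, $\hat{x}_i = 1$ iff $x_i \geq 0.5$. Agent $i$ needs subsidy only if rounding moves cost onto her, which happens precisely when $e_{i-1}$ is rounded away from agent $i-1$ onto $i$ (i.e. $\hat{x}_{i-1} = 0$, so $x_{i-1} < 0.5$) and/or when $e_i$ is rounded to agent $i$ (i.e. $\hat{x}_i = 1$, so $x_i \geq 0.5$).

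The first step is to write the exact expression $s_i = \max\{(x_{i-1} - \hat{x}_{i-1})\cdot c_i(e_{i-1}) + (\hat{x}_i - x_i)\cdot c_i(e_i), 0\}$ and observe that each of the two terms inside is at most the corresponding charge. Concretely, $(x_{i-1} - \hat{x}_{i-1})\cdot c_i(e_{i-1}) \leq (x_{i-1} - \hat{x}_{i-1})$, and this is positive only when $\hat{x}_{i-1} = 0$, in which case it equals $x_{i-1} = \min\{x_{i-1}, 1 - x_{i-1}\} = p_{i-1}$ (using $x_{i-1} < 0.5$); when $\hat{x}_{i-1} = 1$ the term is nonpositive. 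Symmetrically, $(\hat{x}_i - x_i)\cdot c_i(e_i) \leq \hat{x}_i - x_i$, which is positive only when $\hat{x}_i = 1$, in which case it equals $1 - x_i = p_i$ (using $x_i \geq 0.5$); when $\hat{x}_i = 0$ the term is nonpositive. Hence $s_i \leq [\hat{x}_{i-1} = 0]\cdot p_{i-1} + [\hat{x}_i = 1]\cdot p_i$, with the convention $x_0 = \hat{x}_0 = x_n = \hat{x}_n = 1$ so the boundary contributions vanish.

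The second step is the double-counting argument: for each fixed fractional item $e_i$ with $i \in \{1,\ldots,n-1\}$, the charge $p_i$ appears in the bound for $s_i$ (via the $[\hat{x}_i = 1]$ term, i.e. when $x_i \geq 0.5$) and in the bound for $s_{i+1}$ (via the $[\hat{x}_i = 0]$ term, i.e. when $x_i < 0.5$). But exactly one of $\hat{x}_i = 1$ and $\hat{x}_i = 0$ holds, so $p_i$ is counted in at most one of $s_i, s_{i+1}$. Summing over all agents, $\|\bs\|_1 = \sum_{i=1}^n s_i \leq \sum_{i=1}^{n-1} p_i$, which is the claim. I do not expect a serious obstacle here; the one point requiring care is the handling of the case $x_i = 0.5$ exactly (the tie-breaking in the definition of threshold rounding puts $\hat{x}_i = 1$, and then $p_i = 1 - x_i = 0.5 = x_i$, so the bound still goes through cleanly), and making sure the boundary conventions $x_0 = x_n = 1$ correctly zero out the contributions of the nonexistent items $e_0$ and $e_n$ so that no charge outside $\{p_1,\ldots,p_{n-1}\}$ is ever invoked.
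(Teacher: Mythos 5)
Your proof is correct and follows essentially the same charging argument as the paper: each fractional item $e_i$ pays $p_i=\min\{x_i,1-x_i\}$, this covers the subsidy increase it causes on the unique agent it is rounded to, and each $p_i$ is spent at most once. The only cosmetic quibble is that the intermediate inequality $(x_{i-1}-\hat{x}_{i-1})\cdot c_i(e_{i-1})\leq x_{i-1}-\hat{x}_{i-1}$ fails when the left side is negative, but your subsequent case analysis establishes the bound $s_i\leq \mathds{1}[\hat{x}_{i-1}=0]\cdot p_{i-1}+\mathds{1}[\hat{x}_i=1]\cdot p_i$ directly, so the argument goes through.
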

\begin{proof}
    Fix any agent $i\in N$. It suffices to show that the money we charge the fractional items that are allocated to $i$ in $\bX$ is at least $s_i$.
    Suppose item $e_{i-1}$ is rounded to agent $i$, then we have $x_{i-1} < 0.5$ and thus $p_{i-1} = x_{i-1}$.
    Note that the inclusion of (the integral) item $e_{i-1}$ to $X_i$ incurs an increase in subsidy by at most $x_{i-1}\cdot c_i(e_{i-1}) \leq x_{i-1} = p_{i-1}$.
    In other words, the money $p_{i-1}$ we charge item $e_{i-1}$ is sufficient to pay for the subsidy incurred.
    Similarly, if item $e_{i}$ is rounded to agent $i$, then the money $p_i = 1-x_i$ we charge item $e_{i}$ is sufficient to pay for the subsidy $(1-x_i)\cdot c_i(e_i)$ incurred by the item.
\end{proof}

\subsubsection{Putting the Two Bounds Together} \label{sssec:combining-two-bounds}

Finally, we put the upper bounds we derived for the two rounding schemes together to prove Theorem~\ref{theorem:n/4subsidy}.
By Lemma~\ref{lemma:upper-bound-up-rounding}, there exists $1\leq j_1 < i_2 < j_2 < \cdots < i_z < j_z \leq n-1$ such that the total subsidy required by the up rounding is at most
\begin{equation} \label{eq:up-rounding}
    (1-x_{j_1}) + (x_{i_2} - x_{j_2}) + \cdots + (x_{i_z} - x_{j_z}). 
\end{equation}

By Lemma~\ref{lemma:upper-bound-threshold-rounding}, the total subsidy required by the threshold rounding is at most $\sum_{i=1}^{n-1} p_i$, where $p_i = \min\{x_i, 1-x_i\}$.
In order to combine the two upper bounds, we apply the following relaxation on $p_i$:
\begin{itemize}
    \item For each $i\in \{i_2, \ldots, i_z\}$, we use $p_i \leq 1 - x_{i}$;
    \item For each $i\in \{j_1, \ldots, j_z\}$, we use $p_i \leq x_{i}$;
    \item For every other $i$, we use that $p_i = \min\{x_i, 1-x_i\} \leq 0.5$.
\end{itemize}
Applying the above upper bounds, we upper bound the total subsidy required by the threshold rounding by
\begin{equation}\label{eq:threshold-rounding}
    x_{j_1} + (1 - x_{i_2} + x_{j_2}) + \cdots + (1 - x_{i_z} + x_{j_z}) + (n - 2z) \cdot \frac{1}{2}.
\end{equation}

Summing the above two upper bounds~\eqref{eq:up-rounding} and~\eqref{eq:threshold-rounding}, we have that the total subsidy required by the two rounding schemes combined is at most
\begin{equation*}
    z + (n-2z)\cdot \frac{1}{2} = \frac{n}{2}.
\end{equation*}

Therefore, at least one of the two rounding schemes requires a total subsidy at most $n/4$, which proves Theorem~\ref{theorem:n/4subsidy}.

\subsection{When the Number of Fractional Items is less than \texorpdfstring{$n-1$}{}}
\label{ssec:less-than-n-1}

Next, we consider the case when the number of fractional items is strictly less than $n-1$.
This can only happen when some agents receive only one item in $\bX^0$, and the item is fractional.
By applying an analysis similar to Section~\ref{ssec:roudning}, we show that the total subsidy required is at most $n/4$.

Suppose that some item $e = e_i = \cdots = e_j$ is shared by agents $\{i, \ldots, j+1\}$, e.g., item $e$ is cut $j-i+1$ times.
We define $x_i$ to be the fraction agent $i$ holds for item $e$ and $x_l - x_{l-1}$ to be the fraction agent $l$ holds for item $e$, for all $l \in \{i+1, \ldots, j\}$ (see Figure~\ref{fig:multi-cut} for an example with $i=1$ and $j=3$).
We redefine the up rounding and threshold rounding scheme as follows.
\begin{itemize}
    \item In the up rounding, we set $\hat{x}_i = 1$, and consequently we have $\hat{x}_{i+1} = \cdots = \hat{x}_j = 0$.
    Furthermore, we have $s_{i+1} = \cdots = s_{j} = 0$.
    Therefore, it is easy to verify that Lemma~\ref{lemma:si_upperbound} and~\ref{lemma:upper-bound-up-rounding} still hold\footnote{While we define the variables $x_i$'s differently for some agents, since they require $0$ subsidy, the corresponding variables do not appear in the final upper bound on the total subsidy.}.

    \item In the threshold rounding we set $\hat{x}_t = 1$ for the agent $t\in \{i,\ldots,j+1\}$ who holds the largest fraction of item $e$.
    We charge item $e$ an amount of money $p = \min\{x_i,1-x_j\} + \frac{j-i}{2}$.\footnote{Note that when $j=i$, the item is cut exactly once and the charging is the same as we defined in Section~\ref{sssec:subsidy-for-threshold-rounding}. Thus we can interpret the charging to item $e$ as paying $0.5$ more money for every extra cut.}
    It suffices to show that $p$ is enough to pay for the subsidy item $e$ incurs under the threshold rounding to prove an analogous version of Lemma~\ref{lemma:upper-bound-threshold-rounding}.
    This is true because item $e$ is cut into $j-i+2$ fractions $\{x_i, x_{i+1}-x_i,\ldots,x_j-x_{j-1}, 1-x_{j}\}$ and agent $t$ holds the largest fraction, which implies that the total fraction not held by agent $t$ (before rounding) is at most $p = \min\{x_i,1-x_j\} + \frac{j-i}{2}$.
\end{itemize}

Therefore we can upper bound the total subsidy required by the two rounding schemes in a similar way as in Lemma~\ref{lemma:upper-bound-up-rounding} and~\ref{lemma:upper-bound-threshold-rounding}.
Following the same analysis as in Section~\ref{sssec:combining-two-bounds}, we can prove the following.

\begin{theorem}\label{theorem:<n-1items}
    Given the fractional allocation $\bX^0$ with less than $n-1$ fractional items returned by Algorithm~\ref{alg:MKA}, there exists a rounding scheme that returns an integral PROPS allocation $\bX$ with total subsidy at most $n/4$.
\end{theorem}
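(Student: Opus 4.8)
The plan is to follow the same two-rounding-scheme strategy used for Theorem~\ref{theorem:n/4subsidy}, adapting the variables $x_i$ and the two rounding schemes to items that are cut more than once. First I would note that $\bX^0$ has fewer than $n-1$ fractional items only because the at most $n-1$ cuts made by the Moving Knife Algorithm are distributed among fewer items, i.e.\ some fractional item is shared by three or more agents. So it suffices to analyze a generic multiply-cut item $e = e_i = \cdots = e_j$ shared by agents $\{i,\dots,j+1\}$, writing $x_i$ for the fraction held by agent $i$, $x_l - x_{l-1}$ for the fraction held by agent $l$ with $l\in\{i+1,\dots,j\}$, and $1-x_j$ for the fraction held by agent $j+1$ (as in Figure~\ref{fig:multi-cut}). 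Every other (once-cut) item is treated exactly as in Section~\ref{ssec:roudning}.

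For up rounding I would set $\hat{x}_i = 1$, forcing $\hat{x}_{i+1} = \cdots = \hat{x}_j = 0$. The point is that each middle agent $l\in\{i+1,\dots,j\}$ loses its entire share of $e$ and gains nothing of $e$ back, so $s_l = 0$; an agent whose only fractional holdings are such middle pieces needs no subsidy at all. Consequently the variables attached to middle pieces never appear in the subsidy bound, and I would check that Lemmas~\ref{lemma:si_upperbound} and~\ref{lemma:upper-bound-up-rounding} go through verbatim: the up-rounding subsidy is still bounded by $(1 - x_{j_1}) + (x_{i_2} - x_{j_2}) + \cdots + (x_{i_z} - x_{j_z})$ for a suitable increasing sequence of boundary indices $1\le j_1 < i_2 < j_2 < \cdots < i_z < j_z \le n-1$.

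For threshold rounding I would round $e$ entirely to the agent $t\in\{i,\dots,j+1\}$ holding the largest of the $j-i+2$ pieces $\{x_i,\, x_{i+1}-x_i,\, \dots,\, x_j-x_{j-1},\, 1-x_j\}$, and charge $e$ the amount $p = \min\{x_i, 1-x_j\} + \tfrac{j-i}{2}$. The subsidy that $e$ forces on agent $t$ is at most the total fraction of $e$ not held by $t$ (its cost being at most $1$); since $t$ holds the largest of the $j-i+2$ pieces, that leftover is at most $\min\{x_i,1-x_j\} + \tfrac{j-i}{2} = p$. This yields an analogue of Lemma~\ref{lemma:upper-bound-threshold-rounding}: the threshold-rounding subsidy is at most the sum of charges over fractional items, where a once-cut item is charged $\min\{x,1-x\}$ and each extra cut costs an additional $\tfrac12$.

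Finally I would combine the two bounds as in Section~\ref{sssec:combining-two-bounds}: relax the charge of $e$ to $1-x_i$ at the indices $i_2,\dots,i_z$, to $x_j$ at the indices $j_1,\dots,j_z$, and to $\tfrac12$ everywhere else (counting each extra cut as one further $\le\tfrac12$ term). The threshold bound then reads $x_{j_1} + (1 - x_{i_2} + x_{j_2}) + \cdots + (1 - x_{i_z} + x_{j_z}) + (n-2z)\cdot\tfrac12$, and adding the up-rounding bound telescopes to $z + (n-2z)/2 = n/2$; hence the cheaper scheme costs at most $n/4$. I expect the main obstacle to be exactly this bookkeeping step: making precise that each extra cut removes one agent from the pool of non-trivially-subsidized agents while contributing exactly one $\tfrac12$ charge, so that the count of $\tfrac12$-terms is still $n-2z$ and the telescoping is unaffected even though the $x$-variables are no longer in bijection with the agents.
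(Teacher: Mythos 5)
Your proposal matches the paper's proof essentially step for step: the same redefinition of the fractional variables for a multiply-cut item, the same observation that up rounding gives the middle agents zero subsidy so Lemmas~\ref{lemma:si_upperbound} and~\ref{lemma:upper-bound-up-rounding} survive unchanged, the same charge $p = \min\{x_i,1-x_j\} + \tfrac{j-i}{2}$ justified by the largest-fraction agent's leftover, and the same combination yielding $n/2$ for the two schemes together. The bookkeeping concern you flag at the end is real but resolves exactly as you anticipate, and the paper handles it the same way (each extra cut contributes one additional $\le \tfrac12$ charge while removing one agent from the nontrivial part of the up-rounding bound).
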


\begin{example} \label{example:multi-cut}
    Consider a fractional allocation in which a unit cost item $e$ is shared by agents $1,2,3,4$, i.e. $e = e_1 = e_2 = e_3$.
    We define $x_1, x_2-x_1, x_3-x_2$ be the fraction of item $e$ agents $1,2,3$ hold respectively.
    Under the up rounding, agents $2$ and $3$ do not receive any item, leading to $s_2 = s_3 = s_4 = 0$, while $s_1 = 1-x_1$.
    Under the threshold rounding, we charge item $e$ money $p = \min\{x_1, 1-x_3\}+1$.
    Obviously no matter who receives item $e$ after rounding, the money is sufficient to pay for the incurred subsidy.
    Finally, since $p \leq x_1 + 1$, the total subsidy required by the two rounding schemes combined is at most $2$, which implies that at least one of them requires a total subsidy of at most $1 = n/4$. 
\end{example}
 
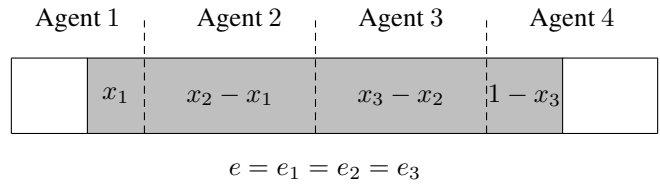
\begin{figure}[htbp]
    \centering
    \begin{tikzpicture}						
        \filldraw [fill = gray!50, draw = none] (0,0) rectangle (6.25,1);
            \node at (2,1.5) {Agent $2$};
            \node at (-0.125,1.5) {Agent $1$};  \node at (4.125,1.5) {Agent $3$};
            \node at (6.375,1.5) {Agent $4$}; 
        \draw (0,0) rectangle (6.25,1);
            \node at (3.125,-0.5) {$e = e_1 = e_2 = e_3$};
        \draw [densely dashed] (0.75,0)--(0.75,1.5); 
            \node at (0.375,0.5) {$x_1$};
        \draw [densely dashed] (3,0)--(3,1.5); 
            \node at (1.875,0.5) {$x_2-x_1$};
        \draw [densely dashed] (5.25,0)--(5.25,1.5); 
            \node at (4.125,0.5) {$x_3-x_2$};
            \node at (5.75,0.5) {$1-x_3$};
        \draw (6.25,0)--(7.5,0); \draw (6.25,1)--(7.5,1);
        \draw (0,0)--(-1,0); \draw (0,1)--(-1,1);
        \draw (-1,0)--(-1,1); \draw (7.5,0)--(7.5,1);
    \end{tikzpicture}
    \caption{Example for item $e$ being cut three times. The shadow area represents the item $e$ that is shared by four agents.}
    \label{fig:multi-cut}
\end{figure}

\subsection{Guaranteeing PROP1 without Subsidy} \label{ssec:guarantee-prop1}

Finally, we show that the PROPS allocation $\bX$ our algorithm computes is PROP1 to all agents without subsidy.
Fix any agent $i\in N$, we show that there exists $e\in X_i$ such that $c_i(X_i - e) \leq \PROP_i$.
Recall that $\bX$ is obtained by rounding the fractional PROP allocation returned by the Moving Knife Algorithm, using either up rounding or threshold rounding.

If no item or at most one item is rounded to agent $i$, the allocation PROP1 to her because by removing the item (if any) that is rounded to agent $i$, the remaining bundle has cost at most $c_i(X^0_i) \leq \PROP_i$.
Hence it remains to consider the case when both items $e_{i-1}$ and $e_i$ are rounded to agent $i$\footnote{Note that this can only happen when item $e_{i-1}$ is cut exactly once: if $e_{i-1}$ is cut at least twice then the fraction held by agent $i-1$ must be at least $1-x_{i-1}$ (otherwise agent $i$ will shout later), which implies that item $e_{i-1}$ will not be rounded to agent $i$.}.
Note that this only happens in the threshold rounding in which we greedily round each item.
Therefore, we have $x_{i-1} < 0.5$ and $x_i \geq 0.5$.
After removing the item $e_{i-1}$ we have
\begin{align*}
    c_i(X_i) - c_i(e_{i-1}) &\leq c_i(X_i) - x_{i-1} \cdot c_i(e_{i-1}) - (1-x_i) \cdot c_i(e_{i-1}) \\
    & \leq c_i(X_i) - x_{i-1} \cdot c_i(e_{i-1}) - (1-x_i) \cdot c_i(e_i)
     = c_i(X^0_i) \leq \PROP_i.
\end{align*}
where the first inequality follows from $x_{i-1} \leq 0.5$ and $1-x_i \leq 0.5$, the second inequality follows from IDO instances, and the last inequality follows from Lemma~\ref{lemma:MKA}.

\section{Agents with General Weights}\label{sec:weighted}

In this section, we consider the weighted setting in which each agent $i\in N$ has weight $w_i\ge 0$ (that represents her obligation for undertaking the chores) and $\sum_{i\in N} w_i = 1$. 
We denote by $\bw = (w_1,\dots,w_n)$ the weights of agents and $\WPROP_i = w_i\cdot c_i(M)$ the weighted proportional share of agent $i$.

\begin{definition}[WPROP and WPROPX]
    An allocation $\bX$ is called weighted proportional (WPROP) if $c_i(X_i)\le \WPROP_i$ for all $i\in N$. An allocation $\bX$ is called weighted proportional up to any item (WPROPX) if $c_i(X_i - e) \le \WPROP_i$ for any agent $i\in N$ and any item $e\in X_i$.
\end{definition}
\begin{definition}[WPROPS]
    An allocation $\bX$ with subsidies $\bs = (s_1, \ldots, s_n)$ is called weighted proportional with subsidies (WPROPS) if for any $i\in N$,
    \begin{equation*}
        c_i(X_i) - s_i \leq \WPROP_i.
    \end{equation*}
\end{definition}

In this section, we (partially) generalize our results to the weighted setting.
For identical instances, we generalize the Load Balancing Algorithm to the weighted setting and achieve the same tight bounds on the total subsidy, i.e. $n/4$ when $n$ is even and $\frac{n^2-1}{4n}$ when $n$ is odd.
For the cases with general cost functions, we show that the fractional Bid-and-Take Algorithm computes a WPROP allocation, based on which we compute a WPROPS allocation with total subsidy at most $(n-1)/2$.
Since the lower bounds in Lemma~\ref{lemma:lower-bounds} still hold for the weighted setting, in the following, we only consider the upper bounds.

\subsection{Identical Cost Functions}\label{ssec:weighted_identical}

We use the Weighted Load Balancing Algorithm to compute an allocation $\bX$ that is WPROPX (for $n$ identical agents), and show that the total subsidy required to achieve proportionality is at most $n/4$ (when $n$ is even); at most $(n^2-1)/(4n)$ (when $n$ is odd).
We can assume w.l.o.g. that $c(e_1) \geq c(e_2) \geq \cdots \geq c(e_m)$.
During the algorithm, we allocate items $e_1,e_2,\ldots,e_m$ one-by-one to the agent $i^*$ with the maximum slackness to her proportional share, i.e., $i^* = \argmax_{i\in N} \{\WPROP_i-c(X_i)\}$.
We summarize the steps of the full algorithm in Algorithm~\ref{alg:WLBA}.

\begin{algorithm}[htbp]
    \caption{Weighted Load Balancing Algorithm}
    \label{alg:WLBA}
    \KwIn{An instance $(M,N,\bw,\bc)$ with $c(e_1)\ge c(e_2)\ge \dots \ge c(e_m)$.}
    Let $X_i \gets \emptyset, \forall i\in N$\;
    \For{$j = 1,2,\dots, m$}{
        Let $i^* \gets \argmax_{i\in N} \{\WPROP_i -c(X_i)\}$\;
        Update $X_{i^*}\gets X_{i^*} + e_j$\;
        
    }
    \KwOut{An allocation $\bX = \{X_1,\dots,X_n\}$.}
\end{algorithm}

\begin{lemma}\label{lemma:weighted-identical-propx}
    The Weighted Load Balancing Algorithm (Algorithm~\ref{alg:WLBA}) computes a $\WPROPX$ allocation without subsidy for every identical instance.
\end{lemma}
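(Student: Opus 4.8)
The plan is to mimic the proof of Lemma~\ref{lemma:identical-propx}, adapting the potential argument to the weighted setting. Fix an agent $i\in N$ and let $e_{\sigma(i)}$ be the last item she receives. Since items are allocated in decreasing order of cost, it suffices to show $c(X_i - e_{\sigma(i)}) \leq \WPROP_i$; once the last (and cheapest among $i$'s items) item is removed, the bundle is below the weighted proportional share, and removing any other (weakly more costly) item only helps. So assume for contradiction that $c(X_i - e_{\sigma(i)}) > \WPROP_i$, i.e.\ $\WPROP_i - c(X_i - e_{\sigma(i)}) < 0$.

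The key step is to use the selection rule: at the moment item $e_{\sigma(i)}$ was about to be allocated, agent $i$ had the maximum slackness $\WPROP_i - c(X_i')$ among all agents, where $X_i' = X_i - e_{\sigma(i)}$ is $i$'s bundle at that time. Hence for every other agent $j\neq i$, the bundle $X_j'$ of $j$ at that moment satisfies $\WPROP_j - c(X_j') \leq \WPROP_i - c(X_i') < 0$, so $c(X_j') > \WPROP_j$. Since bundle costs only grow as the algorithm proceeds, the final bundles satisfy $c(X_j) \geq c(X_j') > \WPROP_j$ for all $j\neq i$, and also $c(X_i) \geq c(X_i') > \WPROP_i$. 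Summing over all agents gives
\begin{equation*}
    c(M) = \sum_{j\in N} c(X_j) > \sum_{j\in N} \WPROP_j = \sum_{j\in N} w_j \cdot c(M) = c(M),
\end{equation*}
a contradiction. Therefore $c(X_i - e_{\sigma(i)}) \leq \WPROP_i$ for every $i$, establishing WPROPX.

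One point to be careful about: I should check the edge cases. If some agent $i$ receives no item, then $X_i = \emptyset$ and the WPROPX condition is vacuous (or trivially $c(\emptyset) = 0 \le \WPROP_i$). If $i$ receives exactly one item $e_{\sigma(i)}$, then $X_i - e_{\sigma(i)} = \emptyset$ and again the condition holds trivially. The contradiction argument above only needs to be invoked when $|X_i|\ge 2$, and in that case the above summation goes through verbatim. I do not expect any real obstacle here — the only subtlety is making sure the monotonicity of bundle costs (costs never decrease over the course of the greedy algorithm) is invoked correctly, and that the weighted shares $\WPROP_j = w_j c(M)$ sum to $c(M)$ because $\sum_j w_j = 1$; both are immediate from the definitions.
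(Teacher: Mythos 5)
Your proof is correct and follows essentially the same argument as the paper's: assume $c(X_i - e_{\sigma(i)}) > \WPROP_i$, use the maximum-slackness selection rule to conclude every other agent already exceeds her weighted share at that moment, and derive the contradiction $c(M) > \sum_{j} w_j\, c(M) = c(M)$. The extra remarks on edge cases and on the monotonicity of bundle costs are fine but not needed beyond what the paper already does implicitly.
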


\begin{proof}
    Fix any agent $i\in N$ and let $e_{\sigma(i)}$ be the last item agent $i$ receives, it suffices to show that $c(X_i - e_{\sigma(i)}) \leq \WPROP_i$ since items are allocated in the order of descending costs.
    Assume otherwise, i.e., $c(X_i - e_{\sigma(i)}) > \WPROP_i$.
    Then at the moment when agent $i$ was allocated item $e_{\sigma(i)}$, we have 
    \begin{equation*}
        \WPROP_j-c(X_j) \leq \WPROP_i - c(X_i- e_{\sigma(i)}) < 0,
    \end{equation*}
    which implies $c(X_j)>\WPROP_j$ for all $j\neq i$.
    Combining this with $c(X_i)\ge c(X_i - e_{\sigma(i)}) > \WPROP_i$, we have a contradiction that $c(M) = \sum_{i\in N} c(X_i) > \sum_{i\in N} \WPROP_i = \sum_{i\in N} \{w_i \cdot c(M)\} = c(M)$.
\end{proof}

Next, we provide upper bounds on the total subsidy required to make $\bX$ a PROPS allocation, which exactly matches the lower bounds given in Lemma~\ref{lemma:lower-bounds}.
The proof is almost the same as the proof of Theorem~\ref{theorem:identical}, with $\PROP_i$ being replaced with $\WPROP_i$.
    
\begin{theorem}\label{theorem:w n/4}
    For identical additive cost functions, there exists a WPROPS allocation with subsidy no larger than $n/4$ for even number of agents and no larger than $(n^2-1)/(4n)$ for odd number of agents.
\end{theorem}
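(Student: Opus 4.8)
The plan is to follow the proof of Theorem~\ref{theorem:identical} line by line, replacing the common proportional share $\PROP$ with the individual weighted shares $\WPROP_i$; the only place the weights enter in a nontrivial way is the greedy slackness comparison, which is already handled in the proof of Lemma~\ref{lemma:weighted-identical-propx}. First I would take the allocation $\bX$ produced by the Weighted Load Balancing Algorithm (Algorithm~\ref{alg:WLBA}), which is WPROPX by Lemma~\ref{lemma:weighted-identical-propx}, and split the agents into $N_1 = \{i\in N : c(X_i) > \WPROP_i\}$ and $N_2 = \{i\in N : c(X_i)\le \WPROP_i\}$. Only agents in $N_1$ are subsidized, so $\|\bs\|_1 = \sum_{i\in N_1}(c(X_i)-\WPROP_i)$; using $\sum_{i\in N} c(X_i) = c(M) = \sum_{i\in N}\WPROP_i$ (the last equality because $\sum_{i\in N} w_i = 1$), I would rewrite this as $\|\bs\|_1 = \sum_{i\in N_2}(\WPROP_i-c(X_i))$, exactly as in Theorem~\ref{theorem:identical}.

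Next I would let $e_{\sigma(i)}$ denote the last item agent $i\in N_1$ receives and define $h_i = \WPROP_i - c(X_i - e_{\sigma(i)})$ for $i\in N_1$ and $h_i = \WPROP_i - c(X_i)$ for $i\in N_2$; WPROPX gives $h_i\ge 0$, and the rewriting above gives $\|\bs\|_1 = \sum_{i\in N_2} h_i$ (the case where an agent in $N_1$ holds only one item is harmless, since then $X_i - e_{\sigma(i)} = \emptyset$). The two facts I need about the sequence $(h_1,\dots,h_n)$ are: (i) $h_i\ge h_j$ for every $i\in N_1$ and $j\in N_2$; and (ii) $\sum_{i\in N} h_i = \sum_{i\in N_1} c(e_{\sigma(i)}) \le |N_1|$. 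Fact (ii) is a one-line computation: $\sum_{i\in N} h_i = \sum_{i\in N}\WPROP_i - \sum_{i\in N} c(X_i) + \sum_{i\in N_1} c(e_{\sigma(i)})$, the first two sums cancel, and every item costs at most one. Fact (i) is the slackness argument from Lemma~\ref{lemma:weighted-identical-propx}: at the round when $e_{\sigma(i)}$ was allocated to agent $i\in N_1$, agent $i$ maximized $\WPROP_k - c(X_k)$ among the remaining agents $k$, hence $\WPROP_i - c(X_i - e_{\sigma(i)}) \ge \WPROP_j - c(X_j^{(t)}) \ge \WPROP_j - c(X_j) = h_j$, where $X_j^{(t)} \subseteq X_j$ is agent $j$'s bundle at that time. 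Having (i), I would rename agents so that $h_1\ge h_2\ge\cdots\ge h_n$, with $N_1$ a prefix.

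Finally, since $\{h_i : i\in N_2\}$ are the $|N_2|$ smallest entries of a non-increasing sequence, their sum is at most a $|N_2|/n$ fraction of the total, so
\[
\|\bs\|_1 = \sum_{i\in N_2} h_i \le \frac{|N_2|}{n}\sum_{i\in N} h_i \le \frac{|N_1|\cdot|N_2|}{n} \le
\begin{cases}
n/4, & n\text{ even},\\
(n^2-1)/(4n), & n\text{ odd},
\end{cases}
\]
where the last step uses $|N_1|+|N_2| = n$, hence $|N_1|\cdot|N_2| \le \lfloor n/2\rfloor\cdot\lceil n/2\rceil$. I do not anticipate any genuine obstacle: the entire argument is the arithmetic of Theorem~\ref{theorem:identical} with $\PROP$ replaced by $\WPROP_i$, and the single weight-sensitive ingredient, fact (i), is already established inside the proof of Lemma~\ref{lemma:weighted-identical-propx}.
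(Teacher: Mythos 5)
Your proposal is correct and matches the paper's own proof of this theorem essentially step for step: the same partition into $N_1$ and $N_2$, the same quantities $h_i$, the same two properties (monotonicity across the partition and $\sum_i h_i \le |N_1|$), and the same final averaging bound $\frac{|N_1|\cdot|N_2|}{n}$. No gaps.
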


\begin{proof}
    Given the allocation $\bX$ returned by Algorithm~\ref{alg:WLBA}, we first partition the agents into two disjoint groups:
    \begin{equation*}
        N_1 = \{i\in N : c(X_i) > \WPROP_i\}, \quad N_2 = \{i\in N : c(X_i)\le \WPROP_i\}.
    \end{equation*}

    For all $i\in N_1$, we use $e_{\sigma(i)}$ to denote the last item allocated to agent $i$.
    Then we define
    \begin{equation*}
        h_i = 
        \begin{cases}
            \WPROP_i - c(X_i - e_{\sigma(i)}), \quad &\forall i \in N_1 \\
            \WPROP_i - c(X_i), \quad & \forall i \in N_2
        \end{cases}
        .
    \end{equation*}

    By Lemma~\ref{lemma:weighted-identical-propx}, we have $h_i \geq 0$ for all $i\in N$. 
    Since we only need to subsidize agents in $N_1$, the total subsidy required to achieve proportionality can be expressed as:
    \begin{align*}
        \|\bs\|_1 & = \sum_{i\in N_1} (c(X_i) - \WPROP_i) = c(M) - \sum_{i\in N_2} c(X_i) - \sum_{i\in N_1} \WPROP_i \\
        & = \sum_{i\in N_2} \WPROP_i - \sum_{i\in N_2} c(X_i) = \sum_{i\in N_2} (\WPROP_i - c(X_i)) = \sum_{i\in N_2} h_i.
    \end{align*}

    On the other hand, we show two properties for the sequence $(h_1,\ldots,h_n)$.
    First, using the same argument as we have shown in the proof of Lemma~\ref{lemma:weighted-identical-propx}, for all $i\in N_1$ and $j\in N_2$, we have
    \begin{equation*}
        \WPROP_i - c(X_i - e_{\sigma(i)}) \geq \WPROP_j - c(X_j).
    \end{equation*}
    
    Hence we have $h_i \geq h_j$ for all $i\in N_1$ and $j\in N_2$.
    Second, by definition we have
    \begin{align*}
        \sum_{i \in N} h_i & = \sum_{i \in N_1} (\WPROP_i-c(X_i)+c(e_{\sigma(i)}))+\sum_{i \in N_2} (\WPROP_i-c(X_i)) \\
        & = \sum_{i\in N} (\WPROP_i - c(X_i)) + \sum_{i \in N_1} c(e_{\sigma(i)})
        = \sum_{i \in N_1} c(e_{\sigma(i)}) \leq |N_1|.
    \end{align*}

    Making use of the two properties, we prove the theorem: the total subsidy required is
    \begin{align*}
        \sum_{i\in N2} h_i \leq \frac{|N_2|}{n}\cdot \sum_{i\in N} h_i \leq \frac{|N_1|\cdot |N_2|}{n}\le 
        \begin{cases}
            n/4, \quad &\text{when $n$ is even}\\
            (n^2-1)/(4n),  \quad &\text{when $n$ is odd}
        \end{cases},
    \end{align*}
    where in the first inequality we use the property that $h_i \geq h_j$ for all $i\in N_1$ and $j\in N_2$.    
\end{proof}

\subsection{General Additive Cost Functions}\label{ssec:weighted_general}

In this section, we consider the case when agents have general additive cost functions.
We first show that the Moving Knife Algorithm we used in the unweighted setting (or any other similar algorithms that require each agent receiving a continuous bundle of items) fails in computing a fractional WPROP allocation in the weighted setting.

 \begin{example} \label{example:hard-WPROP}
    Consider the instance with $w_1 = 0.43$, $w_2 = 0.57$, and costs shown in Table~\ref{tab:hard-WPROP}.
    Note that for both agents $1$ and $2$ we have $c_1(M) = c_2(M) = 10$, leading to $\WPROP_1 = 4.3$ and $\WPROP_2 = 5.7$.
    For any algorithm (e.g., the Moving Knife Algorithm) that requires each agent receiving a continuous interval, we need to find a cutting position $l$ to break the interval $(0,m]$ into $(0,l]$ and $(l,m]$, and assign them to the two agents. 
    However, if agent $1$ receives $(0,l]$ then she requires $l\leq 1.3$, in which case $c_2(l,m) \geq 6.1 > 5.7 = \WPROP_2$; if agent $2$ receives $(0,l]$ then she requires $l\leq 1.9$, in which case $c_1(l,m) \geq 5.1 > 4.3 = \WPROP_1$.
    In other words, there is no cutting position to produce a WPROP allocation.
 \begin{table}[htbp]
     \centering
     \begin{tabular}{c|c|c|c|c|c|c|c}
    & $e_1$ & $e_2$ & $e_3$ & $e_4$ & $e_5$ & $e_6$ & $e_7$ \\
    \hline
    Agent $1$ & $4$ & $1$ & $1$ & $1$ & $1$ & $1$ & $1$ \\
    Agent $2$ & $3$ & $3$ & $1$ & $1$ & $1$ & $1$ & $0$ \\
     \end{tabular}
     \smallskip
     \caption{Instance showing that the Moving Knife Algorithm fails for computing a WPROP allocation of chores.}
     \label{tab:hard-WPROP}
 \end{table}
 \end{example}

Therefore, in this section, we use a fractional version of the Bid-and-Take algorithm~\cite{conf/www/0037L022} to compute a WPROP allocation, based on which we compute an integral allocation with total subsidy at most $(n-1)/2$.

\subsubsection{Fractional Bid-and-Take Algorithm}

We first introduce the notation for representing a fractional allocation.
We use $x_{ie} \in [0,1]$ to specify the fraction of item $e$ agent $i$ receives in the fractional allocation $\bX = (X_1,\ldots, X_n)$, where $X_i = (x_{ie})_{e\in M} \in [0,1]^{m}$ for each $i\in N$.
Note that the allocation is complete if and only if $\sum_{i\in N} x_{ie} = 1 $ for every $e \in M$.
In the fractional allocation $\bX$, we have $c_i(X_i) = \sum_{e \in M}x_{ie}\cdot c_i(e)$ for each agent $i\in N$.
We call $\bX$ an integral allocation if $x_{ie}\in \{0,1\}$ for every $i\in N$ and $e\in M$.

\paragraph{The Algorithm.}
We allocate the items one-by-one in a continuous manner following an arbitrarily fixed ordering $e_1,e_2,\ldots,e_m$ of the items.
Initially all agents are active.
For each item $e_j\in M$, we continuously allocate $e_j$ to the active agent $i$ with the minimum $\frac{c_i(e_j)}{c_i(M)}$, until either $e_j$ is fully allocated or $c_i(X_i) = \WPROP_i$. 
If $c_i(X_i) = \WPROP_i$, we inactive agent $i$. 
The algorithm terminates when all items are fully allocated.
The steps of the full algorithm are summarized in Algorithm~\ref{alg:FBTA}.

\begin{algorithm}[htbp]
    \caption{Fractional Bid and Take Algorithm}
    \label{alg:FBTA}
    \KwIn{An instance $(M,N,\bw,\bc)$}
    $X_i \gets \mathbf{0}^m, \forall i\in N$  \qquad \qquad \tcp{current fractional bundle}
    $A\gets N$ \qquad \qquad  \qquad \qquad \tcp{ the set of active agents}
    $\mathbf{z} \gets \mathbf{1}^m$ \qquad \qquad  \qquad \qquad\tcp{remaining fraction of the items}
    $j\gets 1$ \qquad \qquad  \qquad \qquad \> \> \tcp{item to be allocated}
    \While{$j \le m$}{

        Let $i \gets \argmin_{i'\in A} \frac{c_{i'}(e_j)}{c_{i'}(M)}$\;
        \If{$c_i(X_i) +z_j\cdot c_i(e_j)>\WPROP_i$}{
            $x_{ie_{j}} \gets \frac{\WPROP_i-c_i(X_i)}{c_i(e_j)}$\;
            $z_j \gets z_j - x_{ie_{j}}$, $A \gets A\setminus \{i\}$\;
        }
        \Else{
            $x_{ie_{j}} \gets z_{j}$, $z_{j} \gets 0$, $j\gets j+1$\;
        }    
        
    }

    \KwOut{A fractional allocation $\bX^0  = (X_1,\dots, X_n)$.}
\end{algorithm}


\begin{lemma}\label{lemma: wprop}
    The output allocation $\bX^0$ is a fractional WPROP allocation. 
\end{lemma}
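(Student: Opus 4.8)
The plan is to show two things about the output $\bX^0$ of Algorithm~\ref{alg:FBTA}: (i) it is a \emph{complete} fractional allocation, i.e.\ $\sum_{i\in N} x_{ie}=1$ for every $e\in M$, and (ii) it is WPROP, i.e.\ $c_i(X_i)\le \WPROP_i$ for every $i\in N$. Property (ii) is essentially built into the algorithm and requires almost no work: an agent receives (fractions of) items only while she is active, and she is deactivated at the exact moment her bundle cost reaches $\WPROP_i$; moreover the ``if'' branch fires only when absorbing the remaining fraction would make her cost \emph{strictly} exceed $\WPROP_i$, and then it allocates exactly the amount that brings her to $\WPROP_i$. So $c_i(X_i)\le\WPROP_i$ holds at all times. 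The real content is (i): I must rule out the possibility that the algorithm ``gets stuck'', i.e.\ that all agents become inactive while a strictly positive fraction of some item is still unallocated (at which point $\argmin_{i'\in A}$ over the empty set $A$ is undefined).

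First I would argue by contradiction: suppose the algorithm gets stuck. Then every agent is inactive, so $c_i(X_i)=\WPROP_i=w_ic_i(M)$ for all $i$, and some positive item mass remains. Since the algorithm deactivates one agent at a time at distinct moments, there is a well-defined \emph{last} agent $i^\star$ to be deactivated. The key observation is that any fraction $x_{ke}$ ever assigned to an agent $k\neq i^\star$ was assigned at a time when $i^\star$ was still active (she leaves last); hence $k$ was an active minimizer of $c_{i'}(e)/c_{i'}(M)$ at that time, so $\frac{c_k(e)}{c_k(M)}\le \frac{c_{i^\star}(e)}{c_{i^\star}(M)}$, i.e.\ $c_{i^\star}(e)\ge \frac{c_{i^\star}(M)}{c_k(M)}\,c_k(e)$. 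Multiplying by $x_{ke}\ge 0$ and summing over $e$ gives
\[
 c_{i^\star}(X_k)=\sum_{e\in M} x_{ke}\,c_{i^\star}(e)\ \ge\ \frac{c_{i^\star}(M)}{c_k(M)}\sum_{e\in M} x_{ke}\,c_k(e)=\frac{c_{i^\star}(M)}{c_k(M)}\,c_k(X_k)=w_k\,c_{i^\star}(M),
\]
using $c_k(X_k)=w_kc_k(M)$. Summing over all $k\neq i^\star$ yields $\sum_{k\neq i^\star} c_{i^\star}(X_k)\ \ge\ (1-w_{i^\star})\,c_{i^\star}(M)$.

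Next I would close the loop by measuring everything in agent $i^\star$'s cost. Let $U$ denote the (fractional) set of items still unallocated at the moment $i^\star$ is deactivated; then $c_{i^\star}(X_{i^\star})+c_{i^\star}(U)+\sum_{k\neq i^\star}c_{i^\star}(X_k)=c_{i^\star}(M)$, so by the bound above $c_{i^\star}(X_{i^\star})+c_{i^\star}(U)\le w_{i^\star}c_{i^\star}(M)=\WPROP_{i^\star}$. On the other hand, right after her deactivation $c_{i^\star}(X_{i^\star})=\WPROP_{i^\star}$ exactly, and the ``if'' branch fired strictly, so the leftover fraction of the item then being processed is positive and $c_{i^\star}$ of it is positive (that item has positive cost to $i^\star$, else the ``if'' branch could not have fired); hence $c_{i^\star}(U)>0$ and $c_{i^\star}(X_{i^\star})+c_{i^\star}(U)>\WPROP_{i^\star}$, a contradiction. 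Therefore the algorithm never gets stuck, the output is complete, and combined with (ii) it is a fractional WPROP allocation.

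I expect the main obstacle to be stating the ``last active agent'' argument cleanly: being careful that deactivations occur one at a time so $i^\star$ is unambiguous, that $i^\star$ is genuinely active at every earlier assignment (so the minimizer inequality applies to all pieces of $X_k$), and that the final partial-assignment bookkeeping correctly shows the leftover mass has strictly positive cost to $i^\star$. A minor edge case worth dispatching is an agent $k$ with $c_k(M)=0$, for which the ratio $c_k(e)/c_k(M)$ is ill-defined; such an agent has zero cost for every item, never constrains anyone else, and can be handled (or excluded) separately.
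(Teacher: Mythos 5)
Your proof is correct and follows essentially the same route as the paper's: both establish $c_i(X_i)\le\WPROP_i$ directly from the algorithm and then rule out premature termination by contradiction, using the last agent to become inactive together with the minimizer inequality $\tfrac{c_k(e)}{c_k(M)}\le\tfrac{c_{i^\star}(e)}{c_{i^\star}(M)}$ to derive $c_{i^\star}(M)>c_{i^\star}(M)$ (the paper phrases the same chain as $1>1$ after normalizing by $c_j(M)$). Your explicit treatment of the strictly positive leftover cost and of the degenerate case $c_k(M)=0$ is slightly more careful than the paper's, but the argument is the same.
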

\begin{proof}
    Since throughout the allocation process, we maintain the property that $c_i(X_i)\le \WPROP_i$ for each agent $i$, it suffices to show all items are fully allocated i.e., there is at least one active agent when we try to allocate each item $e$.
    Since we allocate each item $e$ to the active agent $i$ with the minimum $\frac{c_i(e)}{c_i(M)}$, we have $\frac{c_i(e)}{c_i(M)} \leq \frac{c_j(e)}{c_j(M)}$ for every active $j\neq i$. Therefore we have $\frac{c_j(X_i)}{c_j(M)} \geq \frac{c_i(X_i)}{c_i(M)}$ for any active agents $i,j\in A$.

    Assume by contradiction that when assigning some item $e$, all agents are inactive, i.e., $A=\emptyset$. Let $j$ be the last agent that becomes inactive.
    Consider the moment when $j$ becomes inactive, we have
    \begin{equation*}
        1 = \frac{c_j(M)}{c_j(M)}>\sum_{i\in N} \frac{c_j(X_i)}{c_j(M)}
        \ge \sum_{i\in N}\frac{c_i(X_i)}{c_i(M)} = \sum_{i\in N} w_i = 1,
    \end{equation*}
    which is a contradiction.
\end{proof}

Note that in the above WPROP allocation, the number of fractional items is at most $n-1$, since the number of fractional items can increase only when some agent becomes inactive, and there is at least one active agent when the algorithm terminates.
However, in the allocation an agent might receive many fractional items, which introduces difficulties in designing the rounding scheme.
Moreover, because of this, after the rounding, we can no longer guarantee that the allocation without subsidy is PROP1.

\subsubsection{Rounding Scheme and Subsidy}\label{sssec:rounding-scheme-weighted-general}
Given $\bX^0$ returned by Algorithm~\ref{alg:FBTA}, for every item $e\in M$, we use $k(e) = \{i\in N: x_{ie} > 0\}$ to denote the set of agents who gets (a fraction of) item $e$.
Note that if $|k(e)| = 1$ then item $e$ is integrally allocated to a single agent.
We call an item $e$ \emph{fractional} if and only if $|k(e)| \geq 2$.
%
We use the following rounding scheme to get an integral allocation $\bX$, based on which we analyze the upper bound of the total subsidy.

\paragraph{Rounding Scheme.}
For any fractional item $e$, let $i^* = \argmax_{i\in k(e)} \{x_{ie}\}$ be the agent who owns the maximum fraction of item $e$.
We round each item $e$ to agent $i^*$, i.e., set $x_{i^*e} = 1$ and $x_{ie} = 0$ for every $i \in k(e)\setminus \{i^*\}$.
We summarize the rounding scheme in Algorithm~\ref{alg:RS}.

\begin{algorithm}[htbp]
    \caption{Rounding Scheme}
    \label{alg:RS}
    \KwIn{A fractional allocation $\bX^0$.}
    \For{$j = 1,2,\dots, m$}{
        Let $k(e_j) = \{i\in N: x_{ie_j} > 0 \}$ \;
        \If{$|k(e_j)| \geq 2$}{
            Let $i^* \gets \argmax_{i\in k(e_j)} \{x_{ie_j}\}$, and update $x_{i^*e_j} \gets 1$\;
            Update $x_{ie_j} \gets 0$ for every $i \in k(e_j)\setminus \{i^*\}$\;
        }
    }
    Set $s_i \gets \max\{c_i(X_i) - \WPROP_i, 0\}$ for all $i\in N$\;
    \KwOut{An integral allocation $\bX  = (X_1,\dots, X_n)$ with subsidy $\bs$.}
\end{algorithm}

\begin{theorem}\label{theorem:n/2}
     There exists an algorithm that computes WPROP allocations with subsides at most $(n-1)/2$ for the allocation of chores to a group of agents having general additive cost functions.
\end{theorem}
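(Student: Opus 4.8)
The plan is to analyze the integral allocation $\bX$ produced by Algorithm~\ref{alg:RS} applied to the fractional WPROP allocation $\bX^0$ from Algorithm~\ref{alg:FBTA}, and bound $\|\bs\|_1 = \sum_{i\in N}\max\{c_i(X_i)-\WPROP_i,0\}$ by $(n-1)/2$. The key structural fact is that $\bX^0$ has at most $n-1$ fractional items (as already argued in the excerpt, since the number of fractional items increases only when an agent is deactivated, and at least one agent stays active to the end). I would first observe that if agent $i$ receives no item whose fraction she lost through rounding, and never held a fraction that got rounded away from her, then $c_i(X_i)\le c_i(X^0_i)\le \WPROP_i$ and $s_i = 0$. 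So subsidy can only be charged on account of fractional items that get rounded \emph{to} some agent.

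The main step is a charging argument against the (at most $n-1$) fractional items, analogous to Lemma~\ref{lemma:upper-bound-threshold-rounding}. For a fractional item $e$ with owner set $k(e)$ and rounding target $i^* = \argmax_{i\in k(e)}\{x_{ie}\}$, the increase in $c_{i^*}(X_{i^*})$ caused by rounding $e$ to $i^*$ is at most $(1-x_{i^*e})\cdot c_{i^*}(e) \le (1-x_{i^*e}) \le 1 - \tfrac{1}{|k(e)|}$, using $c_{i^*}(e)\le 1$ and the fact that the largest of the $|k(e)|$ fractions summing to $1$ is at least $1/|k(e)|$. Since each agent other than $i^*$ loses her fraction of $e$ (which only \emph{decreases} her bundle cost, never increasing her required subsidy), the entire subsidy burden of item $e$ is at most $1-\tfrac{1}{|k(e)|}\le \tfrac{1}{2}$, because $|k(e)|\ge 2$ for a fractional item. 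Summing over the at most $n-1$ fractional items gives $\|\bs\|_1 \le (n-1)/2$. I would need to be careful that the per-item increments are genuinely additive across items for a fixed agent — this holds because costs are additive and each rounding operation on a distinct item contributes an independent increment $(1-x_{ie})c_i(e)$ to that agent's bundle cost, so $c_i(X_i) - c_i(X^0_i) \le \sum_{e \text{ rounded to }i}(1-x_{ie})c_i(e)$, and hence $s_i \le \sum_{e\text{ rounded to }i}(1-x_{ie})c_i(e)$; summing $s_i$ over $i$ then sums the per-item bounds over all fractional items.

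The step I expect to be the main obstacle is making the accounting airtight when a single agent receives several fractional items — unlike the unweighted Moving Knife case, here an agent's bundle can contain many fractional pieces, so I must verify that the naive sum of per-item increments really upper bounds $c_i(X_i) - \WPROP_i$ and that no double counting or cross-item interaction inflates the bound. A secondary point to check is that $c_i(X^0_i)\le \WPROP_i$ for every $i$ (Lemma~\ref{lemma: wprop}), so that $s_i \le c_i(X_i) - c_i(X^0_i)$ and the $\WPROP_i$ terms drop out cleanly. Once these are in place, the bound $1-\tfrac{1}{|k(e)|}\le \tfrac12$ per fractional item and the count of at most $n-1$ fractional items immediately yield the theorem; the argument does not require the two-rounding-scheme trick used for the unweighted $n/4$ bound, which is why the weighted guarantee is the weaker $(n-1)/2$.
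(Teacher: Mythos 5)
Your overall strategy is the paper's: charge the subsidy to the fractional items, bound the increase caused by rounding item $e$ to its majority owner $i^*$ by $(1-x_{i^*e})\,c_{i^*}(e) \le 1-\tfrac{1}{|k(e)|}$, and note that losing fractions only helps the other agents. The additivity concern you flag is not actually a problem. But there is a genuine error in your final step: you claim $1-\tfrac{1}{|k(e)|}\le \tfrac12$ because $|k(e)|\ge 2$, and the inequality goes the wrong way. For $|k(e)|\ge 2$ we have $1-\tfrac{1}{|k(e)|}\ge \tfrac12$; e.g.\ an item shared by three agents can force a subsidy of up to $2/3$, and an item shared by many agents up to nearly $1$. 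So ``at most $1/2$ per fractional item times at most $n-1$ fractional items'' does not give $(n-1)/2$ as stated.

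The fix, which is exactly what the paper does, is to refine the count: a fractional item $e$ with $|k(e)|=k$ is cut $k-1$ times, i.e.\ it accounts for $k-1$ agent deactivations, and there are at most $n-1$ deactivations in total (agent $n$ stays active). The paper re-charges the amount $\tfrac{k-1}{k}$ for item $e$ to those $k-1$ deactivated agents, each paying $\tfrac1k\le\tfrac12$, and observes that each agent other than $n$ is charged at most once (by the item $e_{\sigma(i)}$ on which she was deactivated). Equivalently: $\sum_{e}\bigl(1-\tfrac{1}{|k(e)|}\bigr)=\sum_e \tfrac{|k(e)|-1}{|k(e)|}\le \tfrac12\sum_e\bigl(|k(e)|-1\bigr)\le \tfrac{n-1}{2}$. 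Your argument needs this aggregation over cuts rather than over items; with that substitution it becomes the paper's proof.
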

\begin{proof}
    We denote by $e_{\sigma(i)}$ the last item allocated to agent $i$, where the allocation might be fractional, and assume w.l.o.g. that agent $n$ is the last active agent.
    Note that the set of fractional items is a subset of $F = \{ e_{\sigma(1)}, e_{\sigma(2)},\ldots, e_{\sigma(n-1)} \}$, and $F$ might be a multiset.
    Similar to the analysis in Section~\ref{sssec:subsidy-for-threshold-rounding}, we use a charging argument that charges money to the fractional items.
    We first show that the charged money is sufficient to pay for the subsidy, and then provide an upper bound on the total money we charged.

    We charge an amount of money $p(e) = \frac{|k(e)| - 1}{|k(e)|}$ to each fractional item $e$.
    Since the inclusion of item $e$ to $X_i$ incurs an increase in subsidy by at most $(1-x_{ie}) \cdot c_i(e) \leq 1-x_{ie} \leq 1 - \frac{1}{|k(e)|}$ (recall that $i$ holds the maximum fraction of item $e$), clearly the charged money is sufficient to pay for the subsidy.
    Next we upper bound the total money we charge to the fractional items.

    Consider any fractional item $e$ with $|k(e)| = k \geq 2$.
    Since $e$ is shared by $k$ agents, we know that $e$ is cut $k-1$ times, and thus there exists at least $k-1$ agents $i$ with $e_{\sigma(i)} = e$.
    We re-charge the money $p(e) = \frac{k - 1}{k}$ to these agents, where each agent is charged an amount at most $\frac{1}{k} \leq \frac{1}{2}$.
    Note that over all fractional items, each agent is charged at most once (by item $e_{\sigma(i)}$) and agent $n$ is not charged, which implies that the total money is at most $(n-1)/2$.
\end{proof}

\section{Conclusion and Open Questions} \label{sec:conclusion}

In this paper, we provide a precise characterization for the total subsidy required to achieve proportionality (for both the allocation of goods and chores).
We show that a total subsidy of $n/4$ suffices to ensure the existence of a PROP allocation, and this is the (nearly) optimal guarantee.
As we will show in the appendix, the above results extend to the allocation of goods, and partially to the weighted case.

Our work leaves many interesting questions open.
The first obvious open question is to complete our work by filling in the gaps in our results.
For example, for an odd number of agents having general additive cost functions, our lower bound $(n^2-1)/(4n)$ and upper bound $n/4$ do not match each other.
This comes from the choice of rounding schemes: it can be shown that there exists an instance for which both up rounding and threshold rounding require a total subsidy of $n/4$, even when $n$ is odd.
Whether we can improve the upper bound by introducing more rounding schemes is an interesting open problem.
Another natural open question is to improve the upper bound $(n-1)/2$ we prove for the weighted case (in Section~\ref{ssec:weighted_general}), e.g., to $n/4$.
We remark that generalizing our analysis for the unweighted setting (in Section~\ref{sec:general}) to the case when agents have arbitrary weights might be highly non-trivial.
One major difficulty comes from the computation of a well-structured fractional PROP allocation.
In fact, if we require that the bundle for each agent must form a continuous interval (as Moving Knife Algorithm computes), then such allocation might not exist (see Section~\ref{ssec:weighted_general} for an example).
It remains unknown whether other fractional PROP allocation, e.g., the one returned by the Fractional Bid-and-Take we showed in Section~\ref{ssec:weighted_general}, admits rounding schemes that require a total subsidy at most $n/4$.
Finally, we believe that it would be interesting to further extend our results to the setting of mixed items (goods and chores), or to study other fairness criteria with subsidy, e.g., Maximin Share (MMS).
It would be very interesting to investigate whether a total subsidy strictly less than $n/4$ is sufficient to guarantee the existence of MMS allocations.

\newpage
\appendix

\section{Envy-Free Allocation with at most \texorpdfstring{$n-1$}{} Total Subsidy for Chores}
\label{sec:ef-chores}

For the allocation of goods, Brustle et al.~\cite{conf/sigecom/BrustleDNSV20} show that there exists an envy-free allocation with total subsidies no larger than $n-1$ while each agent receives a subsidy at most $1$ dollar.
We show that a similar result also holds for the allocation of chores.

\begin{definition}[EF and EF1]
    An allocation $\bX$ is called envy-free (EF) if $c_i(X_i) \leq c_i(X_j)$ for any $i,j\in N$.
    An allocation $\bX$ is envy-free up to one item (EF1) if for any $i,j \in N$, there exists $e \in X_i$ such that $c_i(X_i - e)\leq c_i(X_j)$.
\end{definition}

As before, we use $s_i$ to denote the subsidy we offer to agent $i\in N$. 
The notations $\bs = (s_1, \ldots, s_n)$ and $\|\bs\|_1 = \sum_{i\in N} s_i$ are defined in the same way as in previous sections.

\begin{definition}[EFS]
    An allocation $\bX$ with subsides $\bs$ is called envy-free with subsidies (EFS) if $c_i(X_i) - s_i \leq c_i(X_j) - s_j$ for any $i,j\in N$.
\end{definition}

Our main result in this section is summarized as follows.

\begin{theorem} \label{theorem:EFS}
    For the allocation of chores, there exists an EFS allocation with total subsidy at most $n-1$, where the allocation without subsidy is EF1, the subsidy to each agent is at most $1$ and the total subsidy is at most $n-1$.
\end{theorem}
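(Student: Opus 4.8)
The plan is to adapt the algorithm of Brustle et al.\ for goods to the chores setting. First I would compute an EF1 allocation for chores, using one of the known polynomial-time algorithms (e.g.\ the round-robin or envy-cycle elimination procedure for chores, referenced via~\cite{conf/approx/BhaskarSV21,journals/aamas/AzizCIW22}). The key structural object is the \emph{envy graph} $G_{\bX}$ on the agent set $N$, where we put a directed edge $i \to j$ whenever $i$ envies $j$, i.e.\ $c_i(X_i) > c_i(X_j)$; since the allocation is EF1, each such edge has ``weight at most $1$'' in the sense that $c_i(X_i - e) \le c_i(X_j)$ for some $e \in X_i$ and every item has cost at most one.

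The main steps are as follows. \textbf{(1)} Show that we may assume the envy graph is acyclic: if there is a directed cycle, we can rotate bundles along the cycle (each agent in the cycle passes her bundle to the agent she envies), which weakly decreases everyone's cost and strictly decreases the total cost $\sum_i c_i(X_i)$, hence this terminates; one must check the resulting allocation is still EF1 for chores (standard, analogous to the goods case). \textbf{(2)} On the resulting acyclic envy graph, define the subsidy of agent $i$ to be $s_i = \ell_i$, the length (number of edges) of the longest directed path in $G_{\bX}$ starting at $i$. Since $G_{\bX}$ is a DAG on $n$ vertices, every such path has at most $n-1$ edges, so $s_i \le n-1$; and there is a sink (an unenvied-by-nobody\ldots actually an agent with $\ell_i = 0$), so $\|\bs\|_1 \le n(n-1)$ naively — to get $\le n-1$ I would instead argue that the $\ell_i$ values are a permutation-compatible labeling: order agents by a topological sort so that edges only go from higher to lower $\ell$, and note $\ell_i \le n-1$ while the sink has $\ell = 0$; more carefully, summing over a longest path structure gives $\sum_i s_i \le n-1$ by the same counting as in~\cite{conf/sigecom/BrustleDNSV20} (each ``level'' is used, and the max level is $n-1$; the bound $n-1$ comes from the fact that along any single longest path the labels are $0,1,\dots,n-1$ and one can charge). \textbf{(3)} Verify EFS: for $i \to j$ an edge, $\ell_i \ge \ell_j + 1$, and since $c_i(X_i) - c_i(X_j) \le 1 \le \ell_i - \ell_j = s_i - s_j$ we get $c_i(X_i) - s_i \le c_i(X_j) - s_j$; for non-edges $i,j$, $c_i(X_i) \le c_i(X_j)$ and $s_j \ge 0 = $ well, we need $s_i - s_j \le c_i(X_j) - c_i(X_i)$, which holds since the left side could be positive — here one uses that if $i$ does not envy $j$ then actually we can still bound things because $s_i \le \max_{k: i\to k}(s_k) + 1$ only accounts for out-neighbors; the clean statement is that for \emph{every} pair the EFS inequality follows from $s_i \ge s_j - (\text{something})$, and this is exactly where the longest-path definition is tailored so that the inequality holds for all pairs, not just edges. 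I would follow the Brustle et al.\ argument verbatim here, substituting costs for values and reversing inequalities.

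The main obstacle I expect is step \textbf{(1)} and the precise subsidy bound in step \textbf{(2)}: one must confirm that cycle-elimination for chores preserves EF1 (the goods proof uses that swapping along an envy cycle doesn't hurt anyone, which dualizes, but the EF1 preservation needs the chores-specific definition where the removed item comes from the envious agent's bundle), and one must get the total subsidy down to exactly $n-1$ rather than the naive $O(n^2)$. The sharp bound comes from observing that in the DAG, if we let $d_i = s_i$ be the longest-path length from $i$, then the multiset $\{d_i\}$ satisfies: there is at least one agent with $d_i = 0$, and more strongly, one can relabel agents $1,\dots,n$ so that $d_i \le n - i$, giving $\sum_i d_i \le \sum_{i=1}^n (n-i) = \binom{n}{2}$ — which is still too weak, so in fact Brustle et al.\ use a different, tighter accounting: they subsidize so that $s_i$ equals the max over paths of the \emph{sum of edge-excesses} $\sum (c(\cdot) - c(\cdot))$ rounded up, or they directly bound by the longest path in a graph that has at most $n-1$ edges on any source-to-sink path and then charge each unit of subsidy to a distinct agent. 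I would replicate their exact charging scheme, and the lower bound ``$n-1$ is necessary'' follows immediately from the single-chore, $n$-identical-agents instance mentioned in the introduction: one agent gets the chore of cost $1$, needs subsidy forcing $c_i(X_i) - s_i \le c_i(X_j) - s_j$ for all $j$, and chaining gives $\|\bs\|_1 \ge n-1$. Since this theorem is explicitly stated to be ``a straightforward extension,'' I would keep the proof short and point to~\cite{conf/sigecom/BrustleDNSV20} for the details that dualize without change.
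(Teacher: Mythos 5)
There is a genuine gap, and it sits exactly where you defer to ``Brustle et al.'s exact charging scheme'': that deferred step is the entire content of the proof. Your subsidy $s_i=\ell_i$ is the \emph{number of edges} on the longest path in an unweighted envy digraph, and this quantity does not do the job for two separate reasons. First, it does not certify the EFS condition for non-adjacent pairs: you need $s_i \ge s_j + c_i(X_i)-c_i(X_j)$ for \emph{every} pair $(i,j)$, and if $i$ does not envy $j$ only slightly (say $c_i(X_i)=c_i(X_j)-0.1$) while $s_j=1$ and $s_i=0$, the inequality fails. Second, as you yourself note, it only yields a total of order $n^2$. The paper (following Halpern--Shah and Brustle et al.) instead works with the \emph{real-weighted} envy graph, arc weights $w_X(i,j)=c_i(X_i)-c_i(X_j)$, and sets $s_i$ equal to the maximum \emph{weight} of a path starting at $i$; the pairwise EFS inequality then follows for all pairs from $l(i)\ge w_X(i,j)+l(j)$, which in turn needs the allocation to have no positive-weight cycle. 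That property is not obtained by starting from an arbitrary EF1 allocation (round-robin or envy-cycle elimination) and rotating bundles; the paper constructs the allocation by $T=m/n$ rounds of \emph{minimum-weight perfect matchings} between agents and unallocated items, which makes every cycle of the envy graph negative by an exchange argument round by round.

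The per-agent bound of $1$ (hence total $n-1$, since some agent gets subsidy $0$) is also not a counting fact about path lengths in a DAG. The paper introduces a modified cost function $\bar c_i(e_j^t)=\min\{c_i(e_j^t),c_i(e_i^{t+1})\}$ that amplifies envy, proves the allocation is still envy-freeable under $\bar c$, sets $s_i=\bar l(i)$, and then bounds $\bar w_X(j,i)\ge c_j(e_j^1)-c_j(e_i^T)\ge -1$ for the endpoint $j$ of the maximum path from $i$; combined with non-positivity of cycles this forces $\bar l(i)\le 1$. None of this structure (the matching-based allocation, the weighted path subsidies, the modified costs) appears in your outline, and the EF1 property itself is proved in the paper from the matching structure ($c_i(e_i^t)\le c_i(e)$ for unallocated $e$), not imported from a separate chores EF1 algorithm. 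Your closing remark that the result is ``a straightforward extension'' is true only in the sense that the goods proof dualizes line by line; the proof still has to be carried out with the weighted envy graph and the matching algorithm, which your proposal replaces with machinery that provably does not give the stated bounds.
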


An allocation $\bX$ is called envy-freeable if there exists subsidies $\bs$ with which $\bX$ is EFS.

\begin{definition}[Envy Graph]
    Given any allocation $\bX$, we define the corresponding envy graph $G_X$ as a complete directed graph with vertex set $N$ where each arc $(i,j)$ has weight $w_X(i,j) = c_i(X_i) - c_i(X_j)$, representing the envy from agent $i$ to agent $j$ under the allocation $\bX$. For any path $P$ or cycle $C$ in the envy graph, the weight of the $P$ or $C$ is the sum of weights of arcs along $P$ or $C$.
    And we use $l(i)$ to denote the maximum weight of any path starting from vertex $i$ in envy graph $G_X$.
\end{definition}

Halpen and Shah \cite{conf/sagt/HalpernS19} characterize envy-freeable allocations by using the envy graph defined above. The following lemma summarizes Theorems 1 and 2 of \cite{conf/sagt/HalpernS19}.
while their result is stated for the allocation of goods, it is straightforward to check that the same result holds for the allocation of chores.

\begin{lemma} \label{lemma:HS19}
    Given any allocation $\bX = (X_1,\dots,X_n)$ and $q \geq 0$, the following statements are equivalent:
    \begin{itemize}
        \item [(a)] $\bX$ is envy-freeable with a subsidy of at most $q$ for each agent.
        \item [(b)] $G_X$ has neither a positive-weight cycle nor a path with a weight larger than $q$. 
    \end{itemize}
    Moreover, when the above statements hold, setting $s_i = l(i)$ for every agent $i$ yields an EFS allocation.
\end{lemma}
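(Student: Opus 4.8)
The plan is to establish the equivalence of (a) and (b) via the standard potential-function / shortest-path argument, treating subsidies as node potentials in the envy graph $G_X$. First I would observe that an allocation $\bX$ with subsidies $\bs$ is EFS precisely when, for every arc $(i,j)$, we have $c_i(X_i) - c_i(X_j) \le s_j - s_i$, i.e.\ $w_X(i,j) \le s_j - s_i$. Rewriting, this says $s_i - s_j \le -w_X(i,j)$ for all arcs, which is exactly the feasibility condition for a system of difference constraints; equivalently, the $-w_X$-weighted graph has no negative cycle, which is the same as saying $G_X$ itself has no positive-weight cycle. This handles the cycle part of the equivalence in both directions: if a positive cycle exists, summing the EFS inequalities around it yields $0 < 0$, so no subsidies work; conversely, absence of a positive cycle is what we exploit below to construct subsidies.

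Next I would handle the path/bound part. Assuming $G_X$ has no positive-weight cycle, the quantity $l(i)$ (the maximum weight of a path in $G_X$ starting at $i$) is well-defined and finite, since any walk long enough to repeat a vertex can be shortcut without decreasing weight (using the no-positive-cycle property). I would then verify that setting $s_i = l(i)$ produces an EFS allocation: for any arc $(i,j)$, prepending the arc $(i,j)$ to a maximum-weight path from $j$ gives a path from $i$ of weight $w_X(i,j) + l(j)$, hence $l(i) \ge w_X(i,j) + l(j)$, i.e.\ $c_i(X_i) - c_i(X_j) = w_X(i,j) \le l(i) - l(j) = s_i - s_j$, wait --- I need $c_i(X_i) - s_i \le c_i(X_j) - s_j$, i.e.\ $w_X(i,j) \le s_i - s_j$, which is exactly the inequality just derived. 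Also $s_i = l(i) \ge 0$ since the trivial (empty) path from $i$ has weight $0$. This shows (b) with bound $q$ implies (a) with per-agent subsidy $\max_i l(i) \le q$, and simultaneously proves the ``moreover'' claim. For the reverse direction of the bound, if $\bX$ is envy-freeable with each subsidy at most $q$, then for any path $i_0 \to i_1 \to \cdots \to i_k$, telescoping the inequalities $w_X(i_{t},i_{t+1}) \le s_{i_t} - s_{i_{t+1}}$ gives total weight at most $s_{i_0} - s_{i_k} \le q - 0 = q$, so no path exceeds weight $q$.

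Finally I would note that although Halpern and Shah~\cite{conf/sagt/HalpernS19} state this for goods (with $w_X(i,j) = v_i(X_i) - v_i(X_j)$, but their envy direction reversed), the argument above is purely graph-theoretic once the arc weights are defined as $w_X(i,j) = c_i(X_i) - c_i(X_j)$, so it transfers verbatim to chores; nothing in the derivation uses sign information about the costs. The main obstacle is really just getting the direction of the EFS inequality and the envy-graph arc orientation consistent with each other --- once $w_X(i,j) = c_i(X_i) - c_i(X_j)$ is fixed and the target inequality $c_i(X_i) - s_i \le c_i(X_j) - s_j$ is rewritten as $w_X(i,j) \le s_i - s_j$, the rest is the textbook difference-constraint / Bellman--Ford potential argument. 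I would also remark that finiteness of $l(i)$ and the no-positive-cycle condition are genuinely the same hypothesis, so the two bullets of (b) are exactly what make the construction $s_i = l(i)$ both well-defined and valid.
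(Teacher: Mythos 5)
Your argument is correct, and it is worth noting that the paper itself does not prove this lemma at all: it simply cites Theorems 1 and 2 of Halpern and Shah and remarks that the statement transfers from goods to chores. What you have written is essentially the proof from that cited source, adapted to the chores sign convention, so you are supplying the argument the paper delegates to the reference rather than taking a different route. The substance is right: EFS is equivalent to the system $w_X(i,j) \le s_i - s_j$ for all arcs, summing around a cycle kills any positive-weight cycle, telescoping along a path bounds every path weight by $\max_i s_i \le q$, and in the converse direction the absence of positive cycles makes $l(i)$ finite (walks can be shortcut to paths without losing weight) and yields $l(i) \ge w_X(i,j) + l(j)$, which is exactly the EFS inequality for $s_i = l(i)$, with $0 \le l(i) \le q$ giving the per-agent bound. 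One small blemish: in your opening sentence you write the EFS condition as $w_X(i,j) \le s_j - s_i$, which has the subsidies on the wrong sides; you catch and correct this to $w_X(i,j) \le s_i - s_j$ when you verify the construction, and everything downstream uses the correct form, but the first formulation should be fixed so the proof does not appear to contradict itself. Your closing observation that the argument is purely graph-theoretic once the arc weights are fixed, and hence transfers verbatim from goods to chores, is precisely the justification the paper asserts without proof.
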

    

Given any instance of indivisible chores, we aim to find EFS allocation $\bX$ with limited total subsidies $\|\bs\|_1$. 
We first show that the lower bound on the total subsidy is at least $n-1$, i.e., $\|\bs\|_1\ge n-1$.

\begin{lemma}
    There exists an instance with $n$ agents for which every EFS allocation requires a total subsidy of at least $n-1$.
\end{lemma}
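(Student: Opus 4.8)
The plan is to exhibit a simple instance where the items are ``indivisible in a symmetric way'' so that any allocation forces a long path in the envy graph, and then invoke Lemma~\ref{lemma:HS19} to conclude that the required total subsidy is at least $n-1$. The natural candidate is the chores analogue of the classical tight instance for goods: take $m = n-1$ identical items, each with cost $1$ to every agent (so $c_i(e) = 1$ for all $i \in N$ and all $e \in M$). Any allocation $\bX$ then distributes $n-1$ unit-cost items among $n$ agents, so by pigeonhole at least one agent, say agent $j_0$, receives an empty bundle with $c_{j_0}(X_{j_0}) = 0$.

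The key step is to show that the envy graph $G_{\bX}$ of such an allocation contains a path of weight at least $n-1$. Since costs are identical and additive, for any two agents $i, j$ we have $w_{\bX}(i,j) = c_i(X_i) - c_i(X_j) = |X_i| - |X_j|$. So the envy graph is determined entirely by the bundle sizes $n_i := |X_i|$, which are nonnegative integers summing to $n-1$. I would order the agents so that $n_1 \ge n_2 \ge \cdots \ge n_n = 0$ (the last being forced by pigeonhole) and consider the path $1 \to 2 \to \cdots \to n$; its weight telescopes to $n_1 - n_n = n_1 \ge \lceil (n-1)/n \rceil$, which is too weak. Instead the right path is to route through agents in a way that accumulates each positive increment: actually the simplest argument is that the maximum-weight path from any vertex $i$ has weight $l(i) \ge c_i(X_i) - \min_j c_i(X_j) = n_i - 0 = n_i$ by the single arc $(i, j_0)$, but again that only gives $n_i$. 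The genuinely correct observation is: order agents by bundle size and note that along $1 \to 2 \to \cdots \to n$ the weight is $\sum_{k=1}^{n-1}(n_k - n_{k+1}) = n_1$; to get $n-1$ one must instead argue no allocation can avoid a positive-weight cycle \emph{or} a heavy path, and in fact for \emph{this} instance every allocation is already EF (identical agents, balanced loads differ by at most one) — so a different instance is needed.

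Reconsidering, the correct instance is the one where envy is unavoidable and large: $m = n-1$ items but with \emph{disjoint} costs, e.g.\ item $e_k$ has cost $1$ to agent $k$ and cost $0$ to everyone else, for $k = 1, \ldots, n-1$. Then in any allocation, consider the agent who gets item $e_1$; if it is agent $1$, she envies agent $n$ by $1$; more carefully, one shows by a chain argument that the envy graph must contain a path of total weight $n-1$, since there are $n-1$ ``units of envy'' that cannot all be cancelled — formally, the sum over agents of $c_i(X_i)$ equals $\sum_k (\text{cost of } e_k \text{ to its recipient})$, and by choosing the costs so that agent $k$ alone has positive cost for $e_k$, any allocation gives $\sum_i c_i(X_i) = |\{k : e_k \in X_k\}|$, and one can force this to be large, hence a heavy path. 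The main obstacle is pinning down exactly which instance makes \emph{every} EFS allocation expensive (not just some allocation), and then carefully extracting a weight-$(n-1)$ path in $G_{\bX}$ from the structure of an arbitrary allocation; I expect the cleanest route is to reuse the known goods lower bound of Brustle et al.\ via a goods-to-chores correspondence, or to directly verify that with $n-1$ copies of a single item of cost $1$ to all agents, although every allocation is EF, achieving EFS still — wait, that gives zero subsidy. So the honest answer: I would take the instance of $n$ agents and a single set of $n-1$ items each costly only to a distinct agent, show $\sum_i c_i(X_i) \geq$ (something forcing imbalance), and apply Lemma~\ref{lemma:HS19}(b) contrapositively: since $\sum_i l(i) \geq \sum_i c_i(X_i)$ cannot be made small while keeping all $l(i)$ small, the total subsidy $\|\bs\|_1 = \sum_i l(i) \geq n-1$. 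The hard part will be the combinatorial lemma that no allocation of these $n-1$ items can drive the maximum path weight in $G_{\bX}$ below what is needed; I would prove it by induction on $n$, peeling off an agent with an extreme bundle.
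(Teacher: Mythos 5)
Your first candidate instance --- $n-1$ items, each of cost $1$ to every one of $n$ identical agents --- is exactly the paper's instance, but you abandon it because of a sign error about the direction of envy for chores. With chores, an agent envies another agent whose bundle costs \emph{less}. By pigeonhole some agent $j_0$ receives the empty bundle, so every agent $i$ holding at least one item has $c_i(X_i) \geq 1 > 0 = c_i(X_{j_0})$ and envies $j_0$; the balanced allocation is therefore \emph{not} EF, contrary to your claim that ``every allocation is already EF.'' Once you keep this instance, the argument closes in one line with no envy-graph machinery at all: the EFS condition applied to the pair $(i,j_0)$ gives $c_i(X_i) - s_i \leq c_i(X_{j_0}) - s_{j_0} = -s_{j_0} \leq 0$, i.e., $s_i \geq c_i(X_i)$ for every $i$, and summing over agents yields $\|\bs\|_1 \geq \sum_{i\in N} c_i(X_i) = c(M) = n-1$. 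This is precisely the paper's proof.

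The replacement instance you then propose (item $e_k$ costly only to agent $k$) cannot work: allocating all $n-1$ items to agent $n$ gives every agent a bundle of cost $0$ to herself, so that allocation is EF with zero subsidy, and no lower bound follows. Your path-weight computations are also aimed at the wrong quantity: they bound the \emph{maximum} per-agent subsidy $\max_i l(i)$, which for the correct instance is only $1$, whereas the lemma concerns the \emph{total} subsidy $\sum_i s_i \geq \sum_i c_i(X_i)$, which the direct argument above controls without any induction or chain decomposition. As written, the proposal does not contain a complete proof, and the detour away from the first instance is based on a false premise.
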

\begin{proof}
    Consider an instance with $n-1$ items having cost $1$ to $n$ identical agents. 
    For any allocation $\bX$, every agent requires a subsidy of $s_i \geq c_i(X_i)$ since there is at least one agent who receives nothing.
    Hence we have $\|\bs\|_1 \ge \sum_{i \in N} c_i(X_i) = n-1$. 
\end{proof}


In the following, we consider the minimum subsidy to achieve EFS allocation for indivisible chores. 
We follow a proof framework proposed by Brustle et al.~\cite{conf/sigecom/BrustleDNSV20} to show that one dollar for each agent is enough for achieving envy-freeness for the allocation of chores.
The total subsidy can be bounded by $n-1$ since there is at least one agent who gets a subsidy of $0$ (otherwise we can decrease the subsidy of all agents by the same amount).

\subsection{Algorithm for Computing an EFS Allocation}
We first show that the Modified Bounded-Subsidy Algorithm computes an EF1 and envy-freeable allocation, based on which we derive the upper bound of the total subsidy.

\paragraph{The Algorithm.}
We first define a cost graph $H$ as a complete bipartite graph on vertex sets $N$ and $M$, where each edge $(i,e_k)$ between $i\in N$ and $e_k\in M$ has weight $c_i(e_k)$.
The algorithm proceeds in rounds, where in each round we do a minimum-weight perfect matching between all agents and the unallocated items, e.g., agent $i$ matches (receives) an item $e_i^t$ in round $t$.
By introducing $\lceil m/n \rceil\cdot n - m$ dummy items with cost $0$ to all agents, we can assume w.l.o.g. that the algorithm has $T = m/n$ rounds.
Note that all dummy items (if any) will be matched in the first round.
The steps are summarized in Algorithm~\ref{alg:MBA}. 

\paragraph{Notations.}
We use $e^t_i$ to denote the item agent $i$ receives in round $t$.
Thus $X_i = \{ e^1_i, e^2_i, \ldots, e^T_i \}$.
We use $\mathcal{M}^t = \{(i,e_i^{t})\}_{i\in I}$ to denote the matching we compute in round $t$.
We use $K_t$ to denote the set of unallocated items at the beginning of round $t$ and $H[I,K]$ to denote the subgraph of $H$ where $I\subseteq N$ and $K \subseteq M$.
Specifically, we have $K_1 = M$ and $K_{t+1} = K_t \setminus \cup_{i \in N} \{e_i^{t}\}$ for all $t \in \{1, \ldots, T-1\}$.

\begin{algorithm}[htbp]
    \caption{Modified Bounded-Subsidy Algorithm} 
    \label{alg:MBA}
    \KwIn{An instance $(M,N,\bc)$}
        Initialize $X_i \gets \emptyset$ for all $i\in N$\;
        Set $t \gets 1$ and $K_1 \gets M$\;
        \While{$K_t \ne \emptyset$}{
            Compute a minimum-weight perfect matching  in $H[I,K_t]$\;
            Let $X_i \gets X_i + e_i^{t}$ for all $i \in N$\;
            Let $K_{t+1} \gets K_t \setminus \cup_{i \in N} \{e_i^{t}\}$\;
            Update $t\gets t+1$\;
        }
    \KwOut{Allocation $\bX = (X_1,\dots, X_n)$}
\end{algorithm}

    
\begin{lemma}
    Algorithm~\ref{alg:MBA} computes an EF1 allocation without subsidy. 
\end{lemma}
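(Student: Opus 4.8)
The plan is to show that after each round $t$ of Algorithm~\ref{alg:MBA}, the partial allocation built from the matchings $\mathcal{M}^1,\ldots,\mathcal{M}^t$ is EF1, and then conclude by taking $t = T$. Actually, a cleaner route is to prove the stronger round-by-round invariant directly on the full bundles: for any two agents $i,j\in N$ and any round $t$, we have $c_i(e_i^t) \le c_i(e_j^t)$, i.e., in each round every agent (weakly) prefers her own newly received item to the item received by any other agent in that same round. This is immediate from the fact that $\mathcal{M}^t$ is a \emph{minimum-weight} perfect matching in $H[N,K_t]$: if some agent $i$ strictly preferred $j$'s round-$t$ item, then swapping the two edges $(i,e_i^t)$ and $(j,e_j^t)$ for $(i,e_j^t)$ and $(j,e_i^t)$ would not increase the total weight only if $c_i(e_j^t) + c_j(e_i^t) \ge c_i(e_i^t) + c_j(e_j^t)$; combined with optimality this forces $c_i(e_i^t) \le c_i(e_j^t)$ for every pair (more carefully: optimality of the matching gives $c_i(e_i^t)+c_j(e_j^t) \le c_i(e_j^t)+c_j(e_i^t)$ for all $i,j$, and applying this together with the symmetric inequality obtained by also swapping in the other direction pins down the per-agent comparison — I will need to be slightly careful here, and the honest statement one gets for free is the pairwise-sum inequality, from which EF1 still follows).

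\textbf{Main argument.} Fix agents $i$ and $j$. I want to exhibit an item $e\in X_i$ with $c_i(X_i - e) \le c_i(X_j)$. Write $X_i = \{e_i^1,\ldots,e_i^T\}$ and $X_j = \{e_j^1,\ldots,e_j^T\}$, sorted so that $e_i^1$ is agent $i$'s most costly item in $X_i$ (under $c_i$); I will remove $e = e_i^1$. Then $c_i(X_i - e) = \sum_{t=2}^{T} c_i(e_i^{t})$. Since the items $e_i^1,\ldots,e_i^T$ are sorted in decreasing $c_i$-cost, we have $c_i(e_i^{t}) \le c_i(e_i^{t-1})$, so $c_i(X_i - e) = \sum_{t=2}^T c_i(e_i^t) \le \sum_{t=1}^{T-1} c_i(e_i^t)$. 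Now I pair the $t$-th term of this last sum with the round-$t$ item of agent $j$: from the per-round optimality of the matching, $c_i(e_i^t) \le c_i(e_j^t)$ for each $t$ (this is the step flagged above — it needs the matching-exchange argument, and if only the weaker pairwise-sum bound is available one instead argues globally by summing the exchange inequalities over a suitable bijection). Hence $\sum_{t=1}^{T-1} c_i(e_i^t) \le \sum_{t=1}^{T-1} c_i(e_j^t) \le \sum_{t=1}^{T} c_i(e_j^t) = c_i(X_j)$, using that costs are nonnegative. Chaining the inequalities gives $c_i(X_i - e) \le c_i(X_j)$, which is exactly EF1. Finally, the dummy items have cost $0$ to everyone, so they do not affect any of the cost comparisons and $T = m/n$ is well-defined after padding.

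\textbf{Anticipated obstacle.} The delicate point is justifying $c_i(e_i^t)\le c_i(e_j^t)$ for \emph{every} ordered pair $(i,j)$ from minimality of the matching. A single swap only yields the symmetric inequality $c_i(e_i^t)+c_j(e_j^t)\le c_i(e_j^t)+c_j(e_i^t)$; to get the one-sided comparison one typically argues: among all agents, consider the one, say $i$, for whom $c_i(e_i^t)$ is largest; if $c_i(e_i^t) > c_i(e_j^t)$ for some $j$, build a swap and derive a contradiction using that $c_j(e_j^t)$ was chosen by a min-weight matching — but this requires some care about ties and about which agent "should" have gotten which item. The robust fallback, which I expect to actually use, is to not insist on the per-pair inequality at all: remove from $X_i$ the most expensive item under $c_i$, and then directly compare $\sum_{t=1}^{T-1} c_i(e_i^{(t)})$ (sorted bundle) against $c_i(X_j)$ via a global exchange/rearrangement argument on the union of matchings, exploiting that replacing $i$'s bundle by $j$'s bundle round-by-round cannot decrease total matching weight. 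I would present the clean EF1 proof via the sorted-bundle trick and relegate the matching-exchange lemma to a short self-contained claim.
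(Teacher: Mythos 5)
Your argument hinges on the same-round comparison $c_i(e_i^t)\le c_i(e_j^t)$, and this claim is false: minimality of $\mathcal{M}^t$ only controls sums of the form $c_i(e_i^t)+c_j(e_j^t)\le c_i(e_j^t)+c_j(e_i^t)$, and the one-sided per-agent inequality cannot be extracted from it. Concretely, take one round with two agents and two items $a,b$, with $c_1(a)=0$, $c_1(b)=1$, $c_2(a)=0$, $c_2(b)=0.5$; the minimum-weight perfect matching gives $a$ to agent $1$ and $b$ to agent $2$, and then $c_2(e_2^t)=0.5>0=c_2(e_1^t)$. Your fallbacks do not repair this: summing the exchange inequalities over all rounds yields only $c_i(X_i)+c_j(X_j)\le c_i(X_j)+c_j(X_i)$, which is the envy-freeability condition (no positive-weight two-cycle in the envy graph, exactly what Lemma~\ref{lemma:EFable} uses) and says nothing about whether the residual envy of a \emph{particular} agent can be cancelled by removing one item. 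There is also a notational slip: you re-sort $X_i$ by decreasing cost and then invoke ``per-round optimality'' on the sorted indices, which no longer correspond to rounds.

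The paper's proof avoids the agent-versus-agent swap entirely and uses a \emph{cross-round} comparison instead. Since any item $e\in K_{t+1}$ is unmatched in $\mathcal{M}^t$, replacing the single edge $(i,e_i^t)$ by $(i,e)$ yields another perfect matching of $H[N,K_t]$, so minimality forces $c_i(e_i^t)\le c_i(e)$ for all $e\in K_{t+1}$ and $t<T$; in particular $c_i(e_i^t)\le c_i(e_j^{t+1})$, and also $c_i(e_i^1)\le\cdots\le c_i(e_i^T)$, so the last item $e_i^T$ is the most costly one in $X_i$. Summing the shifted inequalities gives $c_i(X_i-e_i^T)=\sum_{t=1}^{T-1}c_i(e_i^t)\le\sum_{t=2}^{T}c_i(e_j^t)\le c_i(X_j)$, which is EF1. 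The shift by one round, justified by comparing against \emph{unmatched} items rather than against another agent's item from the same round, is the essential idea your proposal is missing.
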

\begin{proof}
    Fix any agent $i\in N$.
    Note that for all $t< T$ we have $c_i(e_i^t)\le c_i(e)$ for any item $e\in K_{t+1}$ as otherwise we can improve the minimum-weight perfect matching $\mathcal{M}^t$ by changing $e_i^t$ to $e$.
    Hence for every agent $j \in N$, we have
    \begin{align*}
        c_i(X_i-e_i^T) = c_i(\{e_i^1,\dots,e_i^{T-1}\}) 
        \leq c_i(e_j^2)+\cdots+c_i(e_j^{T}) \leq c_i(X_j),
    \end{align*}
    which shows that the output allocation $\bX$ is EF1.
\end{proof}

\begin{lemma}\label{lemma:EFable}
The allocation $\bX$ returned by Algorithm~\ref{alg:MBA} is an envy-freeable allocation.
\end{lemma}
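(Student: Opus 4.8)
The plan is to apply the characterization of envy-freeable allocations from Lemma~\ref{lemma:HS19}: an allocation is envy-freeable precisely when its envy graph has no positive-weight cycle (taking $q$ large enough that the path condition becomes vacuous). So it suffices to show that $G_X$ contains no positive-weight cycle. I would argue by contradiction: suppose $C = (i_1, i_2, \ldots, i_k, i_1)$ is a directed cycle in $G_X$ of positive weight, i.e., with indices taken modulo $k$,
\begin{equation*}
    \sum_{\ell=1}^{k} \big( c_{i_\ell}(X_{i_\ell}) - c_{i_\ell}(X_{i_{\ell+1}}) \big) > 0 .
\end{equation*}

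The first step is a round-by-round decomposition. Using additivity of the cost functions together with $X_i = \{e_i^1,\ldots,e_i^T\}$, the cycle weight rewrites as
\begin{equation*}
    \sum_{\ell=1}^{k} \big( c_{i_\ell}(X_{i_\ell}) - c_{i_\ell}(X_{i_{\ell+1}}) \big)
    = \sum_{t=1}^{T} \sum_{\ell=1}^{k} \big( c_{i_\ell}(e_{i_\ell}^{t}) - c_{i_\ell}(e_{i_{\ell+1}}^{t}) \big).
\end{equation*}
Since the left-hand side is strictly positive, by pigeonhole there is a round $t^\star$ for which the inner sum is strictly positive, i.e.
\begin{equation*}
    \sum_{\ell=1}^{k} c_{i_\ell}(e_{i_\ell}^{t^\star}) \;>\; \sum_{\ell=1}^{k} c_{i_\ell}(e_{i_{\ell+1}}^{t^\star}).
\end{equation*}

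The second step derives a contradiction with the choice of $\mathcal{M}^{t^\star}$ as a minimum-weight perfect matching in $H[N,K_{t^\star}]$. Form $\mathcal{M}'$ from $\mathcal{M}^{t^\star}$ by cyclically rotating the items among the agents of $C$: reassign item $e_{i_{\ell+1}}^{t^\star}$ to agent $i_\ell$ for every $\ell$, and leave every agent outside $\{i_1,\ldots,i_k\}$ matched as in $\mathcal{M}^{t^\star}$. This $\mathcal{M}'$ is still a perfect matching between $N$ and the same item set $K_{t^\star}$, because we have merely permuted the items $\{e_{i_1}^{t^\star},\ldots,e_{i_k}^{t^\star}\}$ among the agents $\{i_1,\ldots,i_k\}$; and by the last displayed inequality its total weight is strictly smaller than that of $\mathcal{M}^{t^\star}$, contradicting minimality. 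Hence $G_X$ has no positive-weight cycle, and by Lemma~\ref{lemma:HS19} the allocation $\bX$ is envy-freeable.

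The argument is short and essentially routine; the only point that warrants care is the bookkeeping in the second step — confirming that the rotated assignment $\mathcal{M}'$ is indeed a perfect matching on exactly the same agent and item sets as $\mathcal{M}^{t^\star}$, so that its strictly smaller total weight is a genuine contradiction to the optimality of the matching computed in round $t^\star$. The round decomposition and the pigeonhole step are immediate once additivity is invoked.
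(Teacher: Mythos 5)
Your proof is correct and follows essentially the same route as the paper: invoke Lemma~\ref{lemma:HS19}, decompose the cycle weight round by round using additivity, and rule out a positive-weight cycle by rotating the items of that round along the cycle to contradict the minimality of $\mathcal{M}^{t}$. Your contradiction-plus-pigeonhole organization is in fact marginally cleaner, since it only requires that some round contribute a strictly positive amount, whereas the paper asserts that every round's contribution is strictly negative --- a claim that should read ``non-positive,'' as the minimum-weight matching need not be unique.
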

\begin{proof}
    By Lemma~\ref{lemma:HS19}, it suffices to show that there is no directed positive weight cycle in the envy graph constructed by the final allocation $\bX$.
    Take any directed cycle $C$ (represented by a collection of arcs) in envy graph $G_X$.
    Thus we have
    \begin{align*}
            w_X(C) = \sum_{(i,j)\in C} w_X(i,j) = \sum_{(i,j)\in C} (c_i(X_i)- c_i(X_j))
            = \sum_{t=1}^T \sum_{(i,j)\in C} (c_i(e_i^t)- c_i(e_j^t)).
    \end{align*}
    
    Recall that in each round $t$ we compute a minimum weight perfect matching $\mathcal{M}^t$, which implies
    \begin{equation*}
        \sum_{(i,j)\in C} (c_i(e_i^t)- c_i(e_j^t))<0,
    \end{equation*}
    as otherwise we can improve the matching $\mathcal{M}^t$ by rotating the items in the reverse order along the cycle, e.g., each agent $i$ receives item $e_j^t$ for all $(i,j) \in C$.
    Hence we have $w_X(C)<0$ as claimed.
\end{proof}

\subsection{Upper Bounds on the Total Subsidy}

In this section, we define the set of subsidies to the agents, showing that each agent receives a subsidy at most $1$, and the resulting allocation is EF.
As in~\cite{conf/sigecom/BrustleDNSV20}, we introduce a constructed cost function $\Bar{\bc}$ as follows.
Fix any agent $i\in N$, we define a constructed cost function $\Bar{\bc}$ based on $\bc$ using the following rules:
\begin{align*}
    && \bar{c_i}(e_i^t) &= c_i(e_i^t) & \forall t \le T && \\
    && \bar{c_i}(e_j^t) &= \min \{c_i(e_j^t),c_i(e_i^{t+1})\} &\forall j \in N \setminus \{i\}, \forall t \le T-1 && \\
    && \bar{c_i}(e_j^T) &= c_i(e_j^T) & \forall j \in N \setminus \{i\} &&.
\end{align*}

The following two observations are trivial but useful.
Intuitively speaking, by defining the new cost function, we amplify the envy from each agent $i$ to other agents.
By defining the subsidy based on the new cost function, we show that all envies (under the original cost function) can be eliminated.

\begin{observation}\label{obs:=}
    For any agent $i$ and item $e\in X_i$, we have $\bar{c}_i(e) = c_i(e)$.
\end{observation}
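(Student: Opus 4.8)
The plan is simply to unfold the definition of the constructed cost function $\bar{\bc}$. Recall that the bundle Algorithm~\ref{alg:MBA} assigns to agent $i$ is exactly $X_i = \{e_i^1, e_i^2, \ldots, e_i^T\}$, so every item $e \in X_i$ has the form $e = e_i^t$ for some round index $t \le T$. The three defining rules for $\bar{\bc}$ split according to whether the item is the one agent $i$ herself receives in round $t$ or the one received by some other agent $j \ne i$; the first of these rules states precisely that $\bar{c}_i(e_i^t) = c_i(e_i^t)$ for all $t \le T$. Applying this rule with $e = e_i^t$ yields $\bar{c}_i(e) = c_i(e)$, which is the claim.

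There is essentially no obstacle here: the statement is an immediate syntactic consequence of the first clause in the definition of $\bar{\bc}$, the only (entirely routine) point being that $X_i$ is indeed the set $\{e_i^1,\ldots,e_i^T\}$, which is immediate from the description of Algorithm~\ref{alg:MBA}. The observation is isolated only because it will be invoked repeatedly in the subsidy analysis, where one compares $c_i(X_i)$ with $\bar{c}_i(X_i)$ in order to bound the subsidy $s_i = l(i)$ computed from the envy graph of $\bar{\bc}$; recording it up front keeps those later arguments clean.
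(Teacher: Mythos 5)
Your proof is correct and matches the paper's (implicit) justification: the paper states the observation without proof as "trivial," and the intended argument is exactly the one you give — every $e\in X_i$ is some $e_i^t$, and the first clause of the definition of $\bar{\bc}$ gives $\bar{c}_i(e_i^t)=c_i(e_i^t)$ directly. Nothing is missing.
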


\begin{observation}\label{obs:>=}
    For any agent $i$ and item $e\notin X_i$, we have $\bar{c}_i(e)\leq c_i(e)$.
\end{observation}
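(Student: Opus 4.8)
The plan is to derive Observation~\ref{obs:>=} directly from the definition of the constructed cost function $\bar{\bc}$; as the surrounding text already signals, this is essentially a one-line check once the right case split is set up, so I expect no genuine obstacle beyond handling the boundary round correctly.

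First I would fix the agent $i$ and an item $e \notin X_i$. Since $\bX = (X_1,\dots,X_n)$ partitions $M$, the item $e$ lies in the bundle $X_j$ of some agent $j \neq i$; write $e = e_j^t$, the item that agent $j$ received in round $t$, for some $t \in \{1,\dots,T\}$. Now there are two cases. If $t \leq T-1$, then by the definition of $\bar{\bc}$ we have $\bar{c}_i(e) = \bar{c}_i(e_j^t) = \min\{c_i(e_j^t),\, c_i(e_i^{t+1})\} \leq c_i(e_j^t) = c_i(e)$. If instead $t = T$, then by definition $\bar{c}_i(e) = \bar{c}_i(e_j^T) = c_i(e_j^T) = c_i(e)$, so the claimed inequality holds with equality. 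In both cases $\bar{c}_i(e) \leq c_i(e)$, which finishes the argument. The only point that requires any care is not to forget the last round $t = T$, where the $\min$ in the definition is absent and the inequality degenerates to an equality.

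It is worth noting where this observation is headed, since the real content of the section lies downstream. Together with Observation~\ref{obs:=} (which states $\bar{c}_i(e) = c_i(e)$ for $e \in X_i$), it implies that the envy graph built from $\bar{\bc}$ dominates the true envy graph arc by arc: for any $i, j \in N$,
\[
    w_X(i,j) = c_i(X_i) - c_i(X_j) = \bar{c}_i(X_i) - c_i(X_j) \leq \bar{c}_i(X_i) - \bar{c}_i(X_j).
\]
Hence any subsidy vector that makes $\bX$ envy-free with respect to $\bar{\bc}$ also makes it envy-free with respect to the original cost functions $\bc$. Combined with the facts (to be proved separately) that the $\bar{\bc}$-envy graph has no positive-weight cycle and has maximum path weight at most $1$, and that at least one agent can be given subsidy $0$, this yields Theorem~\ref{theorem:EFS}: each agent needs a subsidy of at most one dollar and the total subsidy is at most $n-1$.
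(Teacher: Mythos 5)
Your proof is correct and matches the paper's (implicit) argument: the paper labels this observation as trivial and offers no written proof, and your case split on $t \leq T-1$ versus $t = T$ is exactly the direct verification from the definition of $\bar{\bc}$ that is intended. The downstream remarks about arc-wise domination of the envy graph are also accurate.
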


Then based on $\bar{\bc}$, we define a new envy graph $\Bar{G}_X$ where each edge $(i,j)$ has a weight $\bar{w}_X(i,j) = \Bar{c}_i(X_i) - \Bar{c}_i(X_j)$.
We denote by $\bar{l}(i)$ the maximum weight of any path starting at vertex $i$ in $\Bar{G}_X$. We first show the output allocation is also envy-freeable under $\bar{\bc}$.

\begin{lemma}\label{lemma:EFable_bar}
The allocation $\bX$ is envy-freeable under the constructed cost function vector $\bar{\bc}$.
\end{lemma}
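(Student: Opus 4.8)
The plan is to invoke Lemma~\ref{lemma:HS19}: $\bX$ is envy-freeable under $\bar{\bc}$ if and only if $\bar{G}_X$ has no positive-weight cycle, so it suffices to show that every directed cycle in $\bar{G}_X$ has weight at most $0$. I would fix an arbitrary cycle $C = (i_1 \to i_2 \to \cdots \to i_k \to i_1)$ on distinct vertices (indices taken modulo $k$) and expand its weight round by round:
\[
\bar{w}_X(C) = \sum_{\ell=1}^{k}\bigl(\bar{c}_{i_\ell}(X_{i_\ell}) - \bar{c}_{i_\ell}(X_{i_{\ell+1}})\bigr) = \sum_{t=1}^{T}\sum_{\ell=1}^{k}\bigl(\bar{c}_{i_\ell}(e_{i_\ell}^t) - \bar{c}_{i_\ell}(e_{i_{\ell+1}}^t)\bigr).
\]
By Observation~\ref{obs:=} the ``own'' terms equal $c_{i_\ell}(e_{i_\ell}^t)$, so it is enough to show that the contribution of each round $t$ is at most $0$. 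For $t = T$ this is the same argument as in Lemma~\ref{lemma:EFable}: $\bar{c}_{i_\ell}(e_{i_{\ell+1}}^T) = c_{i_\ell}(e_{i_{\ell+1}}^T)$, and reassigning $e_{i_{\ell+1}}^T$ to $i_\ell$ along the cycle (everyone else keeping her item) is a perfect matching on $H[N,K_T]$, so minimality of $\mathcal{M}^T$ gives $\sum_\ell c_{i_\ell}(e_{i_\ell}^T) \le \sum_\ell c_{i_\ell}(e_{i_{\ell+1}}^T)$.

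The crux is $t < T$, where $\bar{c}_{i_\ell}(e_{i_{\ell+1}}^t) = \min\{c_{i_\ell}(e_{i_{\ell+1}}^t),\, c_{i_\ell}(e_{i_\ell}^{t+1})\}$ and a plain rotation of round-$t$ items is too weak. The device I would use is a \emph{mixed} rotation: for each $\ell$ let $w_\ell$ be whichever of the two items $e_{i_{\ell+1}}^t$ and $e_{i_\ell}^{t+1}$ is cheaper for agent $i_\ell$, so that $c_{i_\ell}(w_\ell) = \bar{c}_{i_\ell}(e_{i_{\ell+1}}^t)$; then consider the assignment that gives $i_\ell$ the item $w_\ell$ for every cycle vertex and leaves every $i \notin C$ with $e_i^t$. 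Every $w_\ell$ lies in $K_t$ (a round-$t$ item belongs to $K_t$; a round-$(t+1)$ item belongs to $K_{t+1}\subseteq K_t$), and the assigned items are pairwise distinct, so this is again a perfect matching on $H[N,K_t]$. Minimality of $\mathcal{M}^t$ then yields $\sum_\ell c_{i_\ell}(e_{i_\ell}^t) \le \sum_\ell c_{i_\ell}(w_\ell) = \sum_\ell \bar{c}_{i_\ell}(e_{i_{\ell+1}}^t)$, i.e.\ the round-$t$ contribution is $\le 0$. Summing over all rounds gives $\bar{w}_X(C) \le 0$, which is what is needed.

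The step I expect to be the main obstacle is precisely the verification that this mixed assignment is a genuine (injective) perfect matching into $K_t$: one must check that the chosen items $w_1,\dots,w_k$, together with the items $e_i^t$ for $i \notin C$, are all distinct and all unallocated at the start of round $t$. The distinctness follows by tracking in which round each item was handed out — within round $t$ the matching $\mathcal{M}^t$ assigns distinct items to distinct agents, and a round-$t$ item is never a round-$(t+1)$ item since $K_{t+1} = K_t \setminus \bigcup_i \{e_i^t\}$ — and this bookkeeping is exactly what makes the cap $c_{i_\ell}(e_{i_\ell}^{t+1})$ in the definition of $\bar{\bc}$ exploitable. Getting the degenerate cases right (for instance $k=2$, or $i_{\ell+1}$ coinciding with some $i_{\ell'}$ after the index shift, so that $w_\ell$ and $w_{\ell'}$ differ only through their round superscripts) is where the remaining care lies.
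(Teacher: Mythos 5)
Your proof is correct, and it takes a genuinely different route from the paper's. The paper starts from Lemma~\ref{lemma:EFable} (no positive cycle in $G_X$ under $\bc$), observes that passing to $\bar{\bc}$ can only increase arc weights, and then handles a hypothetical positive cycle in $\bar{G}_X$ by decomposing it into paths whose last arc is an increased-weight arc, extracting a positive-weight path $P=(1,\dots,k+1)$ and a round $t$ with positive contribution, and building a cheaper round-$t$ matching that shifts items along $P$ and hands the terminal agent $k$ her own round-$(t+1)$ item $e_k^{t+1}$. You instead bound every cycle directly and uniformly: in each round $t<T$ every cycle agent $i_\ell$ is reassigned whichever of $e_{i_{\ell+1}}^t$ and $e_{i_\ell}^{t+1}$ realizes the minimum in the definition of $\bar{c}_{i_\ell}(e_{i_{\ell+1}}^t)$, and because the full cycle is rotated (so each successor's round-$t$ item is freed) and $K_{t+1}\subseteq K_t$, this mixed assignment is an injective perfect matching into $K_t$; minimality of $\mathcal{M}^t$ then gives a non-positive per-round contribution, and summing over $t$ yields $\bar w_X(C)\le 0$. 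What your version buys is twofold: it avoids the path decomposition and the case analysis over increased versus non-increased arcs entirely, and by letting each agent choose the cheaper of the two candidate items it sidesteps a delicate point in the paper's argument, namely that for the particular round $t$ it selects, the minimum defining $\bar{c}_k(e_{k+1}^t)$ need not be attained at $c_k(e_k^{t+1})$, which is what the paper's displayed equality implicitly assumes. The paper's version, in exchange, reuses Lemma~\ref{lemma:EFable} as a black box for cycles with no increased arcs. One cosmetic remark: your argument establishes $\bar w_X(C)\le 0$ rather than the strict inequality claimed in Lemma~\ref{lemma:EFable}, but non-positivity is exactly what condition (b) of Lemma~\ref{lemma:HS19} requires, so nothing is lost.
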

\begin{proof}
    Observed that the weight of any arc in $\Bar{G}_X$ is at least that in $G_X$ (by the above two observations).
    If all the weights of all arcs are unchanged, then we are done due to Lemma~\ref{lemma:EFable}.
    Now consider a cycle $C$ that contains at least one arc whose weight is increased.
    Assume for the sake of contradiction that $C$ has a positive weight.
    We decompose $C$ into $d$ directed paths $\{P_1, \dots, P_d\}$  where the last arc in each path is an increased-weight arc. 
    Since $C$ has a positive weight, there exists a path $P\in \{P_1,\dots, P_n\}$, say $P = (1,2,\ldots,k+1)$, of positive weight.
    Given any item set $K \subseteq M$, we define $\bar{w}_K(i,j) = \bar{c}_i(X_i \cap K) - \bar{c}_i(X_j \cap K)$. 
    Let $M^t = \{ e^t_1,\ldots,e^t_n \}$.
    Since $\bar{w}_X(P) > 0$, there must exists $t$ such that
    \begin{equation}\label{ineq:w(P)}
        \bar{w}_{{M}^t}(P) = \sum_{i=1}^k \bar{w}_{{M}^t}(i,i+1)>0.
    \end{equation}
    
    We show that there exists another perfect matching in round $t$ whose weight is strictly smaller than $\mathcal{M}^t$, which leads to a contradiction.    
    We construct a new matching $\hat{\mathcal{M}}^t = \{(i,\omega_i^t)\}_{i\in N}$ based on $\mathcal{M}^t$ in the following ways:
    \begin{align*}
    && \omega_i^t &= e_i^t & \forall i \ge k+1 && \\
    && \omega_i^t &= e_{i+1}^{t} & \forall i \le k-1 && \\
    && \omega_k^t &= e_k^{t+1} & \forall i=k. &&
    \end{align*}
    Then we have 
    \begin{align*}
        \sum_{i=1}^k (c_i(e_i^t)-c_i(\omega_i^t)) &=\sum_{i=1}^{k-1} (c_i(e_i^t)-c_i(e_{i+1}^t))+(c_k(e_k^t) - c_k(e_k^{t+1}))\\
        & = \sum_{i=1}^{k-1} (\bar{c}_i(e_i^t)-\bar{c}_i(e_{i+1}^t))+(\bar{c}_k(e_k^t) - \bar{c}_k(e_{k+1}^{t}))\\
        & = \sum_{i=1}^k \bar{w}_{M^t}(i,i+1)=\bar{w}_{M^t}(P)>0
    \end{align*}
    where the second inequality holds since in path $P$, only the last edge has its weight increased and the last inequality holds due to Equation~\ref{ineq:w(P)}.
    The inequality $\sum_{i=1}^k (c_i(e_i^t)-c_i(\omega_i^t)) >0$ implies that $\mathcal{M}^t$ is not the minimum-weight perfect matching in round $t$, which is a contradiction.
\end{proof}

Next, we set $s_i = \bar{l}(i)$ and show that the resulting allocation $\bX$ with subsidy $\bs$ is EFS.

\begin{lemma}\label{lemma:allocation-EFS}
    The allocation $\bX$ is an EFS allocation with subsidy ${\bs}$.
\end{lemma}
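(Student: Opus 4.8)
The plan is to obtain the statement by applying the Halpern--Shah characterization to the \emph{constructed} cost function $\bar\bc$ and then transferring the guarantee back to the true costs $\bc$ via Observations~\ref{obs:=} and~\ref{obs:>=}. This is a short deduction, since all the real work has been done in Lemmas~\ref{lemma:EFable_bar} and~\ref{lemma:HS19}.

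First I would note that, by Lemma~\ref{lemma:EFable_bar}, $\bX$ is envy-freeable under $\bar\bc$, i.e.\ $\bar G_X$ has no positive-weight cycle; hence every $\bar l(i)$ is well defined and finite, and taking the one-vertex path shows $s_i=\bar l(i)\ge 0$, so $\bs$ is a legitimate subsidy vector. Applying the ``moreover'' part of Lemma~\ref{lemma:HS19} (with $q=\max_i\bar l(i)$) to the allocation $\bX$ under cost functions $\bar\bc$, the choice $s_i=\bar l(i)$ makes $(\bX,\bs)$ an EFS allocation with respect to $\bar\bc$; that is, for all $i,j\in N$,
\[
\bar c_i(X_i) - s_i \;\le\; \bar c_i(X_j) - s_j .
\]

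Second I would convert this into the desired inequality for the original costs. For $j=i$ there is nothing to prove, so fix $j\ne i$. By Observation~\ref{obs:=}, every item $e\in X_i$ satisfies $\bar c_i(e)=c_i(e)$, hence $\bar c_i(X_i)=c_i(X_i)$; by Observation~\ref{obs:>=}, every item $e\notin X_i$ satisfies $\bar c_i(e)\le c_i(e)$, hence $\bar c_i(X_j)\le c_i(X_j)$. Chaining these with the inequality above gives
\[
c_i(X_i) - s_i \;=\; \bar c_i(X_i) - s_i \;\le\; \bar c_i(X_j) - s_j \;\le\; c_i(X_j) - s_j ,
\]
which is exactly the EFS condition $c_i(X_i)-s_i\le c_i(X_j)-s_j$ under $\bc$, completing the proof.

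I do not expect a genuine obstacle: the substantive steps --- that amplifying the costs to $\bar\bc$ preserves envy-freeability (Lemma~\ref{lemma:EFable_bar}) and that the Halpern--Shah subsidy formula $s_i=\bar l(i)$ applies to chores (Lemma~\ref{lemma:HS19}) --- are already in hand. The only point requiring care is the direction of the comparison when replacing $\bar c_i$ by $c_i$: one must use that $\bar c_i$ coincides with $c_i$ on the bundle agent $i$ receives and only \emph{under}estimates $c_i$ on the bundles she does not receive, which is precisely what the two observations provide, and this is what ensures both sides of the EFS inequality move in the right direction.
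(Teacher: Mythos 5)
Your proposal is correct and follows essentially the same route as the paper: establish the EFS inequality with respect to the constructed costs $\bar\bc$ using $s_i=\bar l(i)$, then transfer it to the true costs via Observations~\ref{obs:=} and~\ref{obs:>=}. The only cosmetic difference is that you invoke the ``moreover'' clause of Lemma~\ref{lemma:HS19} for the $\bar\bc$-EFS step, whereas the paper re-derives the underlying inequality $\bar l(i)\ge \bar w_X(i,j)+\bar l(j)$ inline; both are valid.
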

\begin{proof}
    Note that we have ${s}_i = \bar{l}(i) \ge 0$ for all $i\in N$ since the empty path has weight zero.
    Fix any pair of agents $i,j$.
    We have $\bar{l}(i) \geq \bar{w}_X(i,j) + \bar{l}(j)$ because (1) if arc $(i,j)$ together with the path defining $\bar{l}(j)$ form a cycle, then the RHS is non-positive, by Lemma~\ref{lemma:EFable_bar}; (2) if arc $(i,j)$ together with the path defining $\bar{l}(j)$ form a path, then the inequality holds by definition of $\bar{l}(i)$.
    Hence we have
    \begin{equation*}
        {s}_i = \bar{l}(i)\ge \bar{w}_X(i,j) + \bar{l}(j) = \bar{c}_i(X_i)- \bar{c}_i(X_j)+ {s}_j \ge c_i(X_i)- c_i(X_j)+ {s}_j.
    \end{equation*}
    where the second inequality holds due to Observation~\ref{obs:=} and~\ref{obs:>=}. 
\end{proof}

Finally, we show that the subsidy to each agent is at most $1$, which shows that the total subsidy is at most $n-1$, and completes the proof of Theorem~\ref{theorem:EFS}.

\begin{lemma}\label{lemma:upper-bound-of-s-efs}
    For all $i\in N$ we have $s_i \leq 1$.
\end{lemma}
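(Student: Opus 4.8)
The plan is to use the two facts already established about the modified envy graph: by Lemma~\ref{lemma:allocation-EFS} the subsidy is $s_i=\bar{l}(i)$, the maximum weight of a path in $\Bar{G}_X$ starting at $i$; and by Lemma~\ref{lemma:EFable_bar} (via Lemma~\ref{lemma:HS19}) the graph $\Bar{G}_X$ has no positive-weight cycle. The key realization is that one should \emph{not} try to bound this maximum-weight path arc by arc: since $\bar{c}_j(e)=c_j(e)$ for $e\in X_j$ (Observation~\ref{obs:=}) and $\bar{c}_j(e)\le c_j(e)$ for $e\notin X_j$ (Observation~\ref{obs:>=}), the modified cost only \emph{amplifies} envy ($\bar{w}_X(j,j')\ge w_X(j,j')$), so arc-by-arc estimates point the wrong way and the $\min$ in the definition of $\bar{\bc}$ cannot be controlled. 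Instead I would close the maximum-weight path into a cycle and exploit the no-positive-cycle property.

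Concretely, fix $i\in N$ and let $P=(v_0,v_1,\dots,v_k)$ with $v_0=i$ be a maximum-weight path in $\Bar{G}_X$. If $k=0$ then $s_i=0$ and we are done, so assume $k\ge 1$; in particular $v_0\neq v_k$. Adding the arc $(v_k,v_0)$ turns $P$ into a cycle $C$ of weight $\bar{w}_X(P)+\bar{w}_X(v_k,v_0)\le 0$, hence
\begin{equation*}
    s_i=\bar{w}_X(P)\le -\bar{w}_X(v_k,v_0)=\bar{c}_{v_k}(X_{v_0})-\bar{c}_{v_k}(X_{v_k}).
\end{equation*}
The remaining (routine) step is to bound this ``reverse envy'' using the definition of $\bar{\bc}$. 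Writing $X_{v_0}=\{e_{v_0}^1,\dots,e_{v_0}^T\}$ and $X_{v_k}=\{e_{v_k}^1,\dots,e_{v_k}^T\}$, since $v_0\neq v_k$ the definition gives $\bar{c}_{v_k}(e_{v_0}^T)=c_{v_k}(e_{v_0}^T)$ and $\bar{c}_{v_k}(e_{v_0}^s)=\min\{c_{v_k}(e_{v_0}^s),\,c_{v_k}(e_{v_k}^{s+1})\}\le c_{v_k}(e_{v_k}^{s+1})$ for $s\le T-1$, while $\bar{c}_{v_k}(e_{v_k}^s)=c_{v_k}(e_{v_k}^s)$ for all $s$ by Observation~\ref{obs:=}. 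Summing over $s$, the terms $\sum_{s=2}^{T}c_{v_k}(e_{v_k}^s)$ telescope away and leave
\begin{equation*}
    \bar{c}_{v_k}(X_{v_0})-\bar{c}_{v_k}(X_{v_k})\le c_{v_k}(e_{v_0}^T)-c_{v_k}(e_{v_k}^1)\le c_{v_k}(e_{v_0}^T)\le 1,
\end{equation*}
using $c_{v_k}(e_{v_k}^1)\ge 0$ and the normalization that every item has cost at most one. This gives $s_i=\bar{l}(i)\le 1$.

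The only genuinely nontrivial point in the argument is the one flagged at the start: bounding the path directly is hopeless because $\bar{\bc}$ inflates envy, and the fix is to pay a single reverse arc's worth of cost by routing through the cycle $C$, at which point the $\min$ in $\bar{\bc}$ turns from an obstruction into exactly the quantity that telescopes against $\bar{c}_{v_k}(X_{v_k})=c_{v_k}(X_{v_k})$. With this lemma in hand, Theorem~\ref{theorem:EFS} follows: the allocation is EF1 and envy-freeable, hence EFS with $\bs=(\bar{l}(i))_{i\in N}$ by Lemmas~\ref{lemma:EFable}--\ref{lemma:allocation-EFS}; each $s_i\le 1$; and since $\Bar{G}_X$ has no positive-weight cycle some vertex $j$ admits only the empty maximum-weight path, so $s_j=0$ and $\|\bs\|_1\le n-1$.
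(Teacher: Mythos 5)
Your proof is correct and follows essentially the same route as the paper's: close the maximum-weight path from $i$ into a cycle with the reverse arc, invoke the no-positive-cycle property of $\Bar{G}_X$ (Lemma~\ref{lemma:EFable_bar}) to reduce the bound to $-\bar{w}_X(v_k,i)$, and then show via the $\min$ in the definition of $\bar{\bc}$ that this reverse arc has weight at least $-1$ by the same telescoping computation. The only differences are cosmetic (you handle the empty-path case explicitly and phrase the arc bound as an upper bound on $\bar{c}_{v_k}(X_{v_0})-\bar{c}_{v_k}(X_{v_k})$ rather than a lower bound on $\bar{w}_X(j,i)$).
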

\begin{proof}
    Recall that $s_i = \bar{l}(i)$, the weight of maximum-weight path starting from vertex $i$. Suppose the path ends at vertex $j$.
    Note that
    \begin{align*}
        \bar{w}_X(j,i) &= \bar{c}_j(X_j)- \bar{c}_j(X_i) = \sum_{t=1}^T\bar{c}_j(e_j^t)- \sum_{t=1}^T \bar{c}_j(e_i^t) \\
        &= \sum_{t=1}^T c_j(e_j^t)- \sum_{t=1}^{T-1} \min \{c_j(e_i^t), c_j(e_j^{t+1})\} - c_j(e_i^T) \\
        &\ge \sum_{t=1}^T c_j(e_j^t)- \sum_{t=1}^{T-1} c_j(e_j^{t+1}) - c_j(e_i^T)
        = c_j(e_j^1) - c_j(e_i^T) \ge -1.
    \end{align*}
    
    By Lemma \ref{lemma:EFable_bar}, every cycle in $\bar{G}_X$ has non-positive weight.
    Thus we have $s_i = \bar{l}(i) \le 1$, as otherwise the cycle formed by the path defining $\bar{l}(i)$ and arc $(j,i)$ has positive weight.
\end{proof}

\section{PROPS Allocation with at most \texorpdfstring{$n/4$}{} Total Subsidy for Goods}
\label{sec:goods}

In this section, we consider the allocation of goods and propose an algorithm that computes a PROPS allocation with total subsidy at most $n/4$.
For the case of weighted agents, we show that the total subsidy is at most $(n-1)/2$.
Since the analysis is almost identical to the ones we have shown in previous sections, we will only highlight the main ideas and changes to the proof, without presenting too much tedious and repetitive analysis.

\subsection{The Notations and Definitions}

We consider the problem of allocating $m$ indivisible goods $M$ to $n$ agents $N$ where each agent $i\in N$ has weight $w_i > 0$ and additive valuation function $v_i:2^M \to \bR^+ \cup \{0\}$.
As before, we assume $\sum_{i\in N} w_i = 1$ and $v_i(e) \leq 1$ for all $i\in N$, $e\in M$.
We define $\WPROP_i$ as agent $i$'s proportional share, i.e., $\PROP_i = w_i\cdot v_i(M)$.

\begin{definition}[WPROP and WPROP1]
    An allocation $\bX$ is called weighted proportional (WPROP) if $v_i(X_i) \geq \WPROP_i$ for all $i\in N$.
    An allocation $\bX$ is called weighted proportional up to one item (WPROP1) if for any $i\in N$, there exists an item $e\in M\setminus X_i$ such that $v_i(X_i + e) \geq \WPROP_i$.
\end{definition}

As before, we use $s_i \geq 0$ to denote the subsidy we give to agent $i\in N$, $\bs = (s_1, \ldots, s_n)$ and $\|\bs\|_1 = \sum_{i\in N} s_i$.

\begin{definition}[WPROPS]
    An allocation $\bX$ with subsidies $\bs = (s_1, \ldots, s_n)$ is called weighted proportional with subsidies (WPROPS) if for any $i\in N$,
    \begin{equation*}
        v_i(X_i) + s_i \geq \WPROP_i.
    \end{equation*}
\end{definition}

Given any allocation $\bX$, computing the minimum subsidy to achieve weighted proportionality can be trivially done by setting
\begin{equation*}
    s_i = \max \{\WPROP_i - v_i(X_i), 0\}, \qquad \forall i\in N.
\end{equation*}

When $w_i = 1/n$ for all $i\in N$, we consider the instance unweighted, in which case the above notations and definitions become $\PROP_i$, PROP, PROP1 and PROPS.

\subsection{Lower Bounds on the Total Subsidy}\label{ssec:LB-goods}

We first provide the lower bounds on the total subsidy for guaranteeing proportionality for goods.

\begin{lemma}\label{lemma:lower-bound-goods}
    Given any $n\geq 2$, there exists an instance for the allocation of goods with $n$ agents for which every PROPS allocation requires a total subsidy of at least $n/4$ (when $n$ is even); at least $(n^2-1)/(4n)$ (when $n$ is odd).
\end{lemma}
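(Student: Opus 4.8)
The plan is to mirror the lower-bound construction of Lemma~\ref{lemma:lower-bounds} (the chores case), which works for any $n \ge 2$, and adapt it to goods by using a matching number of unit-value items. Concretely, for even $n$ I would take $m = n/2$ goods, each having value $1$ to every one of the $n$ identical agents; for odd $n$, I would take $m = (n-1)/2$ goods, again each of value $1$ to all agents. Since valuations are identical, $\WPROP_i = \PROP_i = m/n$ for every agent, i.e.\ $\PROP = 1/2$ when $n$ is even and $\PROP = (n-1)/(2n)$ when $n$ is odd.

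The core argument is a counting one, dual to the chores computation. Fix any integral allocation $\bX$ of these $m$ goods. Because the total value to distribute is $m < n$, at most $m$ agents receive any good, so at least $n - m$ agents receive the empty bundle and thus have value $0 < \PROP$. Each such agent $i$ requires subsidy $s_i = \PROP - v_i(X_i) = \PROP$. More carefully, if $k \le m$ agents receive at least one good, the remaining $n-k$ agents each need subsidy exactly $\PROP$, so $\|\bs\|_1 \ge (n-k)\cdot \PROP$; this is minimized when $k$ is as large as possible, i.e.\ $k = m$, giving $\|\bs\|_1 \ge (n-m)\cdot \PROP$. I should double-check this is the right bound: in the even case this is $(n - n/2)\cdot(1/2) = n/4$, matching the claim; in the odd case it is $(n - (n-1)/2)\cdot (n-1)/(2n) = \frac{n+1}{2}\cdot\frac{n-1}{2n} = \frac{n^2-1}{4n}$, again matching. (One can also note that agents who do receive goods may themselves need positive subsidy, but since we only want a lower bound, ignoring those contributions and bounding via the empty-bundle agents alone already suffices.)

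There is essentially no serious obstacle here — the only thing to be careful about is that for goods the "bad" direction is having too little value rather than too much, so the subsidy is $\max\{\WPROP_i - v_i(X_i), 0\}$, and the extremal allocation is the one that spreads goods to as many distinct agents as possible (rather than concentrating them). I would phrase the proof as: exhibit the instance, compute $\PROP$, observe that at most $m$ agents can get a nonempty bundle, conclude each of the other $\ge n-m$ agents contributes $\PROP$ to the total subsidy, and evaluate $(n-m)\cdot\PROP$ in the two parity cases. This is a short proof of at most a paragraph, structurally identical to (indeed slightly simpler than) the proof of Lemma~\ref{lemma:lower-bounds}.
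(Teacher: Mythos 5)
Your proposal is correct and follows exactly the same construction and counting argument as the paper's proof of Lemma~\ref{lemma:lower-bound-goods}: the same unit-value instances with $n/2$ (resp.\ $(n-1)/2$) goods, the same observation that at least $n-k$ agents receive nothing and each needs subsidy $\PROP$, and the same evaluation of $(n-m)\cdot\PROP$ in the two parity cases. No gaps.
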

\begin{proof}
Suppose $n\geq 2$ is even.
Consider the instance with $n$ agents and $n/2$ items where each item values $1$ to all agents.
For every agent $i\in N$, her proportional share is $\PROP = 1/2$.
Consider any allocation $\bX$ and suppose that $k \leq n/2$ agents receive at least one item.
Then each agent who does not receive any item requires a subsidy of $1/2$, which implies $\|\bs\|_1 = (n - k)/2 \geq n/4$.
In other words, any PROPS allocation requires a total subsidy of at least $n/4$.

Suppose $n\geq 2$ is odd.
Consider the instance with $n$ agents and $(n-1)/2$ items where each item values $1$ to all agents. 
For every agent $i\in N$, her proportional share is $\PROP = (n-1)/(2n)$.
Following a similar analysis as above we can show that the total subsidy required by any PROPS allocation is at least $(n+1)/2 \cdot (n-1)/(2n) \geq (n^2-1)/(4n)$.
Therefore, any PROPS allocation requires a total subsidy of at least $(n^2-1)/(4n)$.
\end{proof}

\subsection{Upper Bound on the Total Subsidy: Unweighted Case}

In this section, we consider the unweighted case, i.e., $w_i = 1/n$ for all $i\in N$.
We first show a similar reduction as Lemma~\ref{lemma:reduction-to-IDO} which allows us to consider only IDO instances.
Since the proof is almost identical as that for Lemma~\ref{lemma:reduction-to-IDO}, we omit it.

\begin{definition}[Identical Ordering (IDO) Instances]
    An instance is called identical ordering (IDO) if all agents have the same ordinal preference on the items, i.e., $v_i(e_1) \geq v_i(e_2) \geq \cdots \geq v_i(e_m)$ for all $i\in N$.
\end{definition}

\begin{lemma}
    If there exists a polynomial time algorithm that given any IDO instance computes a PROPS allocation with at most $\alpha$ subsidy, then there exists a polynomial time algorithm that given any instance computes a PROPS allocation with at most $\alpha$ subsidy.
\end{lemma}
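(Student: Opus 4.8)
The plan is to replay the reduction of Lemma~\ref{lemma:reduction-to-IDO}, with the orderings reversed since we now allocate goods instead of chores. Given an arbitrary instance $\cI = (M,N,\mathbf{v})$, I would first build an IDO instance $\cI' = (M,N,\mathbf{v}')$ as follows: for each agent $i$, let $\sigma_i(k)$ be the $k$-th most valuable item under $v_i$, and set $v'_i(e_k) = v_i(\sigma_i(k))$. By construction $\cI'$ is IDO and $v'_i(M) = v_i(M)$ for every $i$, so the proportional shares are unchanged, i.e.\ $\PROP_i = v'_i(M)/n = v_i(M)/n$.

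Running the assumed algorithm on $\cI'$ produces a PROPS allocation $\bX'$ with subsidies $\bs'$ where $\|\bs'\|_1 \le \alpha$ and $v'_i(X'_i) + s'_i \ge \PROP_i$ for all $i$. The next step is to use $\bX'$ to guide the construction of an allocation $\bX$ for $\cI$ satisfying $v_i(X_i) \ge v'_i(X'_i)$ for every $i$. I would do this greedily: initialize all $X_i = \emptyset$ and let $P = M$; then, processing the items in the order $e_1, e_2, \ldots, e_m$ (decreasing value in the IDO instance), for each $e_j$ let the agent $i$ with $e_j \in X'_i$ pick her favorite unallocated item $e = \argmax_{e'\in P} v_i(e')$ and move it from $P$ to $X_i$.

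The heart of the argument is the claim that, when agent $i$ picks in round $j$, the item she takes has value at least $v'_i(e_j)$. This holds because at the start of round $j$ only $j-1$ items have been removed from $P$, so at most $j-1$ of the $j$ most valuable items $\sigma_i(1),\ldots,\sigma_i(j)$ (with respect to $v_i$) are gone, leaving at least one of them in $P$; hence the best remaining item for $i$ has value at least $v_i(\sigma_i(j)) = v'_i(e_j)$. Summing over the items of $X'_i$ through this one-to-one correspondence gives $v_i(X_i) \ge v'_i(X'_i)$. I expect this counting step (tracking how many of the top-$j$ items can already be allocated) to be the only point requiring any care; everything else is bookkeeping.

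Finally, setting $s_i = \max\{\PROP_i - v_i(X_i), 0\}$ makes $(\bX,\bs)$ a PROPS allocation for $\cI$, and since $v_i(X_i) \ge v'_i(X'_i)$ we get $s_i \le \max\{\PROP_i - v'_i(X'_i),0\} \le s'_i$, so $\|\bs\|_1 \le \|\bs'\|_1 \le \alpha$. The whole procedure runs in polynomial time given the polynomial-time IDO algorithm, which completes the reduction.
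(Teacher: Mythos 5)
Your proposal is correct and is essentially the paper's own argument: the paper omits this proof, stating it is almost identical to the chores reduction (Lemma~\ref{lemma:reduction-to-IDO}), and your write-up is exactly that adaptation with the orderings and inequalities reversed for goods (picking the maximum-value remaining item, processing rounds $j=1,\dots,m$ so that at most $j-1$ of the top-$j$ items are gone). The counting step you flag as the one requiring care is precisely the step the paper's chores proof relies on, so nothing is missing.
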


With the above reduction, in the following, we only consider the IDO instances. Like the chores setting, our algorithm has two main steps: we first compute a fractional PROP allocation, in which a small number of items are fractionally allocated; then we find a way to round the fractional allocation to an integral one. Since some agents may have bundle value less than their proportional share after rounding, we offer subsidies to these agents. By carefully deciding the rounding scheme, we show that the total subsidy required is at most $n/4$.

\subsubsection{Computing an Allocation with at most \texorpdfstring{$n-1$}{} Factional Items}

In this section, we use the Moving Knife Algorithm to compute a fractional PROP allocation. 

\paragraph{The Algorithm.}
As before, we interpret the $m$ items as in an interval $(0,m]$, where item $e_i$ corresponds to interval $(i-1, i]$.
We interpret every interval as a bundle of items, where some items might be fractional.
Specifically, interval $(l,r]$ contains $(\lceil l \rceil - l)$-fraction of item $e_{\lceil l \rceil}$, $(r - \lfloor r \rfloor)$-fraction of item $e_{\lceil r \rceil}$ and integral item $e_j$ for every integer $j$ satisfying $(j-1,j] \subseteq (l,r]$.
The value of the interval to each agent $i\in N$ is also defined in the natural way:
\begin{equation*}
    v_i(l,r) = (\lceil l \rceil - l)\cdot v_i(e_{\lceil l \rceil}) + \sum_{j = \lceil l \rceil + 1}^{\lfloor r \rfloor} v_i(e_j) + (r - \lfloor r \rfloor)\cdot v_i(e_{\lceil r \rceil}).
\end{equation*}
The algorithm proceeds in rounds, where in each round some agent picks an interval and leaves.
We maintain that at the beginning of each round, the remaining set of items forms a continuous interval $(l,m]$.
In each round, we imagine that there is a moving knife that moves from the leftmost position $l$ to the right.
Each agent shouts if she thinks that the value of the interval passed by the knife is equal to her proportional share.
The first agent who shouts picks the interval passed by the knife and leaves, and the algorithm recurs on the remaining interval.
If in some round there is only one agent who has not left, she receives the whole remaining interval $(l,m]$ and the algorithm terminates.
The steps of the full algorithm are summarized in Algorithm~\ref{alg:MKA-goods}.

\begin{algorithm}[htbp]
    \caption{The Moving Knife Algorithm}
    \label{alg:MKA-goods}
    \KwIn{The interval $(0,m]$ corresponding to all items $M$, agents $N$, valuation functions $\mathbf{v} = (v_1, \ldots, v_n)$.}
    Initialize $X^0_i \gets \emptyset$ for each $i \in N$, and let $l \gets 0$\;
    \While{$|N| \geq 2$}{
        Let $r_i \gets \min \{r \leq m: v_i(l,r) \geq v_i(M)/n\}$ for all $i \in N$\;
        Let $i^* \gets \argmin \{r_i\}$\;
        Update $X^0_{i^*} \gets (l,r_{i^*}]$\;
        Update $N \gets N\setminus \{i^*\}$, $l \gets r_{i^*}$\;
    }
    Update $X^0_{i} \gets (l,m]$ for the unique agent $i\in N$\;
    \KwOut{Fractional allocation $\bX^0 = (X^0_1, \ldots, X^0_n)
    $.}
\end{algorithm}

By renaming the agents, we assume w.l.o.g. that agents are indexed by their picking order, i.e., agent $i$ is the $i$-th agent who picks and leaves.

\begin{lemma}\label{lemma:MKAg}
     The Moving Knife Algorithm computes fractional PROP allocations in polynomial time.
\end{lemma}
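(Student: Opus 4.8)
The plan is to mirror the proof of Lemma~\ref{lemma:MKA} for chores with every inequality reversed. In the goods variant an agent shouts as soon as the interval swept by the knife has value equal to her proportional share, the \emph{first} agent to shout picks and leaves, so intuitively every agent except the last one receives a bundle worth \emph{exactly} her proportional share, while the last agent simply inherits whatever interval is left. As in the chores proof I would rename the agents $1,\dots,n$ by picking order, so that agent $n$ is the unique agent assigned the final interval $(l,m]$.

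The only point needing care --- and the reason the argument is slightly longer than the chores one, where a ``knife reaches the end'' case always lets some agent pick --- is that here one must certify that $r_i=\min\{r\le m: v_i(l,r)\ge v_i(M)/n\}$ is well defined for the agent about to pick. I would prove by induction on the rounds the invariant: if $N'$ is the set of remaining agents and $(l,m]$ the remaining interval at the start of a round, then $v_j(l,m)\ge \tfrac{|N'|}{n}\cdot v_j(M)$ for all $j\in N'$. The base case ($|N'|=n$, $l=0$) is immediate. For the step, let $i^\ast=\argmin_{j\in N'}\{r_j\}$ be the agent who picks; since $r\mapsto v_{i^\ast}(l,r)$ is continuous and nondecreasing with value $0$ at $r=l$ and value at least $v_{i^\ast}(M)/n$ at $r=m$ (by the invariant), $r_{i^\ast}$ exists and $v_{i^\ast}(l,r_{i^\ast})=v_{i^\ast}(M)/n$. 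For every other $j\in N'$, $r_j$ is likewise well defined by the invariant, and $r_j\ge r_{i^\ast}$, so $v_j(l,r_{i^\ast})\le v_j(l,r_j)=v_j(M)/n$; subtracting from the invariant for $j$ yields $v_j(r_{i^\ast},m)\ge \tfrac{|N'|-1}{n}\cdot v_j(M)$, the invariant for the next round. In particular each agent $i<n$ gets $v_i(X^0_i)=v_i(M)/n=\PROP_i$, and in the final round $v_n(l,m)\ge\PROP_n$.

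It remains to verify proportionality for the last agent $n$. Every agent $i<n$ picked at a moment when agent $n$ was still present and had not yet shouted; this is exactly the inequality $v_j(l,r_{i^\ast})\le v_j(M)/n$ above specialized to $j=n$, so $v_n(X^0_i)\le v_n(M)/n$ for all $i\ne n$. Summing over the remaining agents,
\begin{equation*}
v_n(X^0_n)=v_n(M)-\sum_{i\ne n} v_n(X^0_i)\ge v_n(M)-\frac{n-1}{n}\,v_n(M)=\frac{1}{n}\,v_n(M)=\PROP_n,
\end{equation*}
so $\bX^0$ is PROP for all agents. Finally the algorithm runs in $n$ rounds, and each round finds all the $r_i$'s in $O(m)$ time (the value of an interval is piecewise linear with $m$ breakpoints), so the running time is $O(nm)$.

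I do not expect a real obstacle here: the statement is essentially a sign-flipped copy of Lemma~\ref{lemma:MKA}, and the one subtlety --- the feasibility of each successive knife cut --- is precisely what the invariant above isolates and resolves.
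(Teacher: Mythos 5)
Your proof is correct and follows essentially the same route as the paper's: every agent who picks before the end receives exactly her proportional share, and the last agent's bundle is bounded below by subtracting the other bundles, each worth at most $v_n(M)/n$ to her. The only difference is that you additionally verify, via an induction invariant, that each cut point $r_i$ is well defined --- a feasibility point the paper's proof takes for granted --- which is extra rigor rather than a different argument.
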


\begin{proof}
    We consider the agent $n$ who receives the last bundle of items.
    For any other agent $i \neq n$ we must have $v_n(X_{i}) \leq \frac{1}{n} \cdot v_n(M)$ since at the round that agent $i$ picks a bundle, agent $n$ shouts after agent $i$, otherwise agent $n$ would take the bundle instead.
    Hence we have
    \begin{equation*}
        v_n(X_n) = v_n(M) - \sum_{i\neq n} v_n(X_i) \geq v_n(M) - \frac{n-1}{n} \cdot v_n(M) = \frac{1}{n} \cdot v_n(M) = \PROP_n.
    \end{equation*}
    For any other agent $i \neq n$, she exactly receives her proportional share of items, i.e., $v_i(X_i) = \frac{1}{n} \cdot c_i(M) = \PROP_i$.
    In conclusion, the Moving Knife Algorithm computes a (complete) fractional allocation that is PROP to all agents.
\end{proof}

Since each agent receives a continuous interval in the Moving Knife Algorithm, there are at most $n-1$ cutting points.
Hence in the fractional allocation $\bX^0$, there are at most $n-1$ items that are fractionally allocated.
We call these items \emph{fractional} items.
Note that the number of fractional items can be strictly less than $n-1$, e.g., some item may get cut into three or more pieces.

\subsubsection{Rounding Scheme and Subsidy}\label{sssec:rounding-goods}

As for the allocation of chores, in this section, we study consider two rounding schemes (down rounding and threshold rounding).
In the following we only consider the cases when there are exactly $n-1$ fractional items as an example; the case when some items are cut more than once can be handled in the same way as we have illustrated in Section~\ref{ssec:less-than-n-1} and thus is omitted.
We use $e_1, \ldots, e_{n-1}$ to denote the $n-1$ fractional items that are ordered by the time they are allocated.
In other words, we have that item $e_i$ being shared by agent $i$ and $i+1$ for any $i\in \{1,\ldots, n-1\}$.
We denote $x_i$ as the fraction agent $i$ holds for item $e_i$ and $1-x_i$ as the fraction agent $i+1$ holds for item $e_i$ in $\bX^0$.
Based on the fractional allocation $\bX^0$, we round to an integral allocation $\bX$ in which each fractional item $e_i$ is rounded to agent $i$ or $i+1$.
A rounding scheme can be represented by a vector $\hat{x} = (\hat{x}_1, \ldots, \hat{x}_{n-1})$, in which each $\hat{x}_i \in \{0,1\}$ is the indicator of whether item $e_i$ is rounded to agent $i$.
For each agent $i$, the minimum subsidy $s_i$ to guarantee PROP for her can be formulated as
\begin{equation*}
    s_i = \max \{(\hat{x}_{i-1} - x_{i-1})\cdot v_i(e_{i-1}) + (x_i - \hat{x}_i) \cdot v_i(e_i), 0\}.
\end{equation*}

We consider two different rounding schemes: \emph{down rounding} and \emph{threshold rounding}.

\begin{itemize}
    \item \textbf{Down Rounding:}
    For all item $e_i$, we set $\hat{x}_i = 0$, e.g., rounding each item $e_i$ ``down'' to agent $i+1$.
    \item \textbf{Threshold Rounding:} 
    For each $i\in \{1,\ldots,n-1\}$, we set $\hat{x}_i = 1$ if $x_i \geq 0.5$ and $\hat{x}_i = 0$ otherwise.
    In other words, we greedily round each $e_i$ to the agent that holds more fraction of $e_i$.
\end{itemize}


\begin{lemma}
    Under the down rounding, the subsidy $\bs = (s_1, \ldots, s_n)$ satisfies the following properties:
    \begin{itemize}
        \item $s_1 \leq x_1$, $s_n = 0$;
        \item $s_i \leq \max \{x_i - x_{i-1}, 0\}$ for all $i\in \{2,\ldots, n\}$.
    \end{itemize}
\end{lemma}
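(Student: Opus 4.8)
The plan is to unwind the definition of the down rounding, read off exactly how each agent's bundle changes when going from $\bX^0$ to the integral allocation $\bX$, and then bound the resulting subsidy term by term using the IDO property and the normalization $v_i(e)\le 1$. Under the down rounding we set $\hat x_i = 0$ for every $i\in\{1,\dots,n-1\}$, so each fractional item $e_i$ is handed entirely to agent $i+1$. The first step is to dispose of the two boundary agents. Agent $1$'s bundle in $\bX^0$ is an interval starting at $0$, so the only fractional item she shares is $e_1$, of which she loses her $x_1$-fraction; hence the subsidy formula collapses to $s_1=\max\{x_1\cdot v_1(e_1),0\}=x_1\cdot v_1(e_1)$, and $v_1(e_1)\le 1$ gives $s_1\le x_1$. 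Agent $n$'s bundle ends at $m$, so the only item she shares is $e_{n-1}$, and under down rounding she \emph{gains} the remaining $x_{n-1}$-fraction of it, so her value only increases and $s_n=0$; this incidentally also settles the second bullet for $i=n$, since $\max\{x_n-x_{n-1},0\}\ge 0$.

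The second step treats a generic agent $i\in\{2,\dots,n-1\}$. For such $i$, the down rounding removes the $x_i$-fraction of $e_i$ that agent $i$ held in $\bX^0$ and adds the $x_{i-1}$-fraction of $e_{i-1}$ she did not hold, so plugging $\hat x_{i-1}=\hat x_i=0$ into the subsidy formula yields $s_i=\max\{x_i\cdot v_i(e_i)-x_{i-1}\cdot v_i(e_{i-1}),\,0\}$. The key structural fact is that $e_{i-1}$ is cut by the moving knife strictly before $e_i$, hence $e_{i-1}$ lies to the left of $e_i$ in the interval $(0,m]$, and since the instance is IDO this gives $v_i(e_{i-1})\ge v_i(e_i)$. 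Using this together with $x_{i-1}\ge 0$ we bound $s_i\le\max\{x_i\cdot v_i(e_i)-x_{i-1}\cdot v_i(e_i),0\}=\max\{(x_i-x_{i-1})\cdot v_i(e_i),0\}$; a one-line case split on the sign of $x_i-x_{i-1}$ (combined with $v_i(e_i)\le 1$ in the positive case) then gives $s_i\le\max\{x_i-x_{i-1},0\}$, as claimed.

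I do not expect a serious obstacle: the argument is the mirror image of Lemma~\ref{lemma:si_upperbound} for chores. The only point needing a little care is verifying that the relabelling of the fractional items as $e_1,\dots,e_{n-1}$ by allocation time is consistent with the IDO value order — i.e. that an item cut earlier really is weakly more valuable to every agent than one cut later — since that is exactly what licenses replacing $v_i(e_{i-1})$ by $v_i(e_i)$ in the bound. The boundary reductions for agents $1$ and $n$ (that each shares only a single fractional item) should be stated explicitly, but they follow immediately from the shape of the intervals produced by the Moving Knife Algorithm.
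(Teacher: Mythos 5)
Your proposal is correct and follows essentially the same route as the paper: plug $\hat x_{i-1}=\hat x_i=0$ into the subsidy formula, use the IDO ordering to compare $v_i(e_{i-1})$ with $v_i(e_i)$, and finish with $v_i(\cdot)\le 1$. The only cosmetic difference is that you factor the middle bound through $v_i(e_i)$ while the paper factors through $v_i(e_{i-1})$; both are valid and yield the identical bound $\max\{x_i-x_{i-1},0\}$.
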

\begin{proof}
From the rounding scheme, we directly have $s_1 = x_1 \cdot c_1(e_1) \leq x_1$ and $s_n = 0$.
Now fix any $i\in \{2,\ldots, n-1\}$, we have
\begin{align*}
    s_i &= \max \{(0-x_{i-1}) \cdot v_i(e_{i-1}) + (x_i-0) \cdot v_i(e_i),0\} \\
    &\leq \max \{(x_i-x_{i-1}) \cdot v_i(e_{i-1}), 0\} \leq \max \{x_{i}-x_{i-1}, 0\},
\end{align*}
where in the first inequality we use $v_i(e_{i-1}) \geq v_i(e_i)$ and in the second inequality we use $v_i(e_{i-1}) \leq 1$.
\end{proof}

Given the above lemma, following the same analysis as in the proof of Lemma~\ref{lemma:upper-bound-up-rounding}, we can derive the following bound on the total subsidy required by the down rounding (the proof is omitted).

\begin{lemma} \label{lemma:upper-bound-down-rounding}
    There exists a sequence of indices $1\leq j_1 < i_2 < j_2 < \cdots < i_z < j_z \leq n-1$ such that 
    \begin{equation*}
    \sum_{i=1}^n s_i \leq x_{j_1} + (x_{j_2} - x_{i_2}) + \cdots + (x_{j_z} - x_{i_z}).
    \end{equation*}
\end{lemma}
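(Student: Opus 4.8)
The plan is to replay, with the sign conventions adjusted for goods and down rounding, the argument that proves Lemma~\ref{lemma:upper-bound-up-rounding} in the chores setting. The starting point is the preceding lemma, which bounds the subsidies under down rounding by $s_1 \le x_1$, $s_n = 0$, and $s_i \le \max\{x_i - x_{i-1}, 0\}$ for every $i \in \{2, \ldots, n-1\}$. I would introduce the boundary value $x_0 := 0$, so that the bound $s_1 \le x_1$ becomes $s_1 \le \max\{x_1 - x_0, 0\}$ and merges into the single uniform inequality $\sum_{i=1}^n s_i \le \sum_{i=1}^{n-1} \max\{x_i - x_{i-1}, 0\}$.

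Next, for each $i \in \{1, \ldots, n-1\}$ I would introduce an indicator $a_i \in \{0,1\}$ recording whether $x_i > x_{i-1}$, so that $\max\{x_i - x_{i-1}, 0\} = a_i \cdot (x_i - x_{i-1})$; in particular, whenever $a_i = 0$ the term vanishes and hence $s_i = 0$. The key elementary observation is the telescoping bound: if $a_l = 1$ for every $l$ in a contiguous block $\{i, i+1, \ldots, j\}$, then $\sum_{l=i}^{j} s_l \le \sum_{l=i}^{j} (x_l - x_{l-1}) = x_j - x_{i-1}$.

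I would then decompose $(a_1, \ldots, a_{n-1})$ into its maximal runs of consecutive $1$'s, writing them as $\{a_1, \ldots, a_{j_1}\}, \{a_{i_2+1}, \ldots, a_{j_2}\}, \ldots, \{a_{i_z+1}, \ldots, a_{j_z}\}$ exactly as in Figure~\ref{fig:maximal-array}, and set $i_1 := 0$. Applying the telescoping bound to each run and discarding the indices outside all runs (where $s_l = 0$, including $s_n = 0$) yields $\sum_{i=1}^n s_i \le \sum_{t=1}^{z} (x_{j_t} - x_{i_t})$; since $i_1 = 0$ and $x_0 = 0$ this is precisely $x_{j_1} + (x_{j_2} - x_{i_2}) + \cdots + (x_{j_z} - x_{i_z})$, and the produced indices satisfy $1 \le j_1 < i_2 < j_2 < \cdots < i_z < j_z \le n-1$, as claimed.

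This is a faithful mirror of the chores proof, so I do not expect a genuine obstacle; the only points that need care are the bookkeeping of the boundary term — here the natural convention is $x_0 = 0$, the opposite of the chores case where $x_0 = 1$ — and the trivial edge case $x_1 = 0$ (so $a_1 = 0$), which causes no harm since then $s_1 = 0$ and the first maximal run of $1$'s simply begins at a later index.
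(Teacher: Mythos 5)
Your proof is correct and matches the paper's intent exactly: the paper omits the proof of this lemma, stating only that it ``follows the same analysis as in the proof of Lemma~\ref{lemma:upper-bound-up-rounding},'' which is precisely the mirrored argument you carry out, including the right boundary convention $x_0=0$ in place of $x_0=1$. No gaps.
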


Next we upper bound the total subsidy required by the threshold rounding in terms of $\{x_1,\ldots,x_{n-1}\}$.
We use a charging argument that charges money to the fractional items $e_1,\ldots,e_{n-1}$.
Specifically, we charge each fractional item $e_i$ an amount of money $p_i = \min\{ x_i, 1-x_i \}$.
We show that the total charge to the fractional items is sufficient to pay for the subsidy.
We omit the proof since it is a word-by-word copy-paste of the proof of Lemma~\ref{lemma:upper-bound-threshold-rounding}.

\begin{lemma} \label{lemma:upper-bound-threshold-rounding-goods}
    For all $i\in \{1,2,\ldots,n-1\}$, let $p_i = \min\{ x_i, 1-x_i \}$. Then under the threshold rounding we have $\|\bs\|_1 \leq \sum_{i=1}^{n-1} p_i$.
\end{lemma}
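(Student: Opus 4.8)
The plan is to mirror, essentially line for line, the charging argument in the proof of Lemma~\ref{lemma:upper-bound-threshold-rounding}, flipping only the direction to match the goods subsidy formula. Recall that each agent $i$ touches at most two fractional items: $e_{i-1}$, of which she holds the fraction $1-x_{i-1}$, and $e_i$, of which she holds the fraction $x_i$; every other item of $X^0_i$ is integral and contributes nothing to $s_i$. Under threshold rounding $\hat{x}_j=1$ exactly when $x_j\ge 0.5$. Reading off the subsidy formula, the term $(\hat{x}_{i-1}-x_{i-1})\cdot v_i(e_{i-1})$ can be positive only when $e_{i-1}$ is rounded to agent $i-1$ (in which case agent $i$ loses the fraction $1-x_{i-1}$ she held), and $(x_i-\hat{x}_i)\cdot v_i(e_i)$ can be positive only when $e_i$ is rounded to agent $i+1$ (in which case agent $i$ loses the fraction $x_i$ she held). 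Thus, in contrast to the chores case, a positive $s_i$ is caused only by a fractional item shared by $i$ that was rounded \emph{away from} her.

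First I would bound the two possible contributions to $s_i$. If $e_{i-1}$ is rounded to agent $i-1$, then $\hat{x}_{i-1}=1$ and hence $x_{i-1}\ge 0.5$, so $p_{i-1}=\min\{x_{i-1},1-x_{i-1}\}=1-x_{i-1}$; the value agent $i$ forgoes on $e_{i-1}$ is $(1-x_{i-1})\cdot v_i(e_{i-1})\le 1-x_{i-1}=p_{i-1}$, using $v_i(e_{i-1})\le 1$. Symmetrically, if $e_i$ is rounded to agent $i+1$, then $\hat{x}_i=0$ and $x_i<0.5$, so $p_i=x_i$, and the value agent $i$ forgoes on $e_i$ is $x_i\cdot v_i(e_i)\le x_i=p_i$. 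Adding these two (possibly zero) contributions, $s_i$ is at most the charge of $e_{i-1}$ if agent $i$ lost it plus the charge of $e_i$ if agent $i$ lost it.

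Then I would observe that each fractional item $e_j$ is shared by exactly the two agents $j$ and $j+1$ and is rounded to precisely one of them; hence there is a unique agent who loses $e_j$, and its charge $p_j$ is invoked in the per-agent bound above for at most one agent. Summing the per-agent inequalities over all $i\in N$ then yields $\|\bs\|_1=\sum_{i\in N}s_i\le\sum_{j=1}^{n-1}p_j$, which is the claim. If some item is cut more than once, the identical accounting goes through with the augmented charge $p=\min\{x_i,1-x_i\}+\tfrac{j-i}{2}$ of Section~\ref{ssec:less-than-n-1}, but only the once-cut case is needed for the lemma as stated.

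There is no genuine obstacle here; the only point requiring care is the sign bookkeeping of the goods subsidy formula — verifying that the ``gain'' terms are nonnegative exactly when the corresponding fractional item is rounded toward the agent, so that the sole cause of a positive $s_i$ is an item rounded away — and confirming that each fractional item has a unique loser, so that no charge $p_j$ is double-counted.
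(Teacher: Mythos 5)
Your proof is correct and follows essentially the same charging argument as the paper, which itself states that the goods version is a word-for-word adaptation of the chores proof of Lemma~\ref{lemma:upper-bound-threshold-rounding}. The sign bookkeeping is handled properly: you correctly identify that for goods the charge $p_j$ pays for the unique agent from whom $e_j$ is rounded \emph{away} (rather than toward, as in the chores case), and that under threshold rounding this lost fraction is exactly the minority fraction $p_j$, so no charge is double-counted.
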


Finally, we combine Lemma~\ref{lemma:upper-bound-down-rounding} and~\ref{lemma:upper-bound-threshold-rounding-goods} to prove the following.

\begin{theorem}\label{theorem:n/4-subsidy-goods}
    Given the fractional allocation $\bX^0$ with $n-1$ fractional items returned by Algorithm~\ref{alg:MKA-goods}, there exists a rounding scheme that returns an integral PROPS allocation $\bX$ with total subsidy at most $n/4$.
\end{theorem}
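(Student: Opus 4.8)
The plan is to follow Section~\ref{sssec:combining-two-bounds} (which proves Theorem~\ref{theorem:n/4subsidy} for chores) almost verbatim, now feeding in the two bounds already established for goods. First observe that, given any integral allocation $\bX$, setting $s_i = \max\{\PROP_i - v_i(X_i),0\}$ makes $(\bX,\bs)$ PROPS by definition; so the only quantity to control is $\|\bs\|_1$ for each of the two candidate roundings, and it suffices to show that the two totals add up to at most $n/2$.

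By Lemma~\ref{lemma:upper-bound-down-rounding}, the total subsidy required by the down rounding is at most
\begin{equation*}
    x_{j_1} + (x_{j_2} - x_{i_2}) + \cdots + (x_{j_z} - x_{i_z})
\end{equation*}
for some indices $1 \le j_1 < i_2 < j_2 < \cdots < i_z < j_z \le n-1$; and by Lemma~\ref{lemma:upper-bound-threshold-rounding-goods}, the total subsidy required by the threshold rounding is at most $\sum_{i=1}^{n-1} p_i$ with $p_i = \min\{x_i,1-x_i\}$. To align the second bound with the first, I would relax $p_i \le 1-x_i$ for every $i\in\{j_1,\dots,j_z\}$ (there are $z$ such indices), $p_i \le x_i$ for every $i\in\{i_2,\dots,i_z\}$ (there are $z-1$ such indices), and $p_i \le \tfrac12$ for each of the remaining $(n-1)-(2z-1) = n-2z$ indices, obtaining an upper bound
\begin{equation*}
    (1-x_{j_1}) + (x_{i_2} + 1 - x_{j_2}) + \cdots + (x_{i_z} + 1 - x_{j_z}) + (n-2z)\cdot \tfrac12
\end{equation*}
for the threshold rounding.

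Adding the two displayed bounds, each $x_{j_t}$ and each $x_{i_t}$ cancels and what remains is $1 + (z-1) + (n-2z)/2 = z + (n-2z)/2 = n/2$. Hence the two schemes together require at most $n/2$, so at least one of them requires at most $n/4$; taking that scheme with the canonical subsidies proves the theorem. I do not anticipate a genuine obstacle: the only asymmetry between goods and chores is that "up rounding" (push a shared item onto the earlier agent, who wants less cost) is replaced by "down rounding" (push it onto the later agent, who wants more value), and this swap is already absorbed into Lemma~\ref{lemma:upper-bound-down-rounding}, whose proof uses $v_i(e_{i-1}) \ge v_i(e_i)$ in the IDO instance exactly where the chores argument used cost monotonicity. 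The points that merely need care are the index bookkeeping above (there are exactly $2z-1$ named indices among $j_1,i_2,j_2,\dots,i_z,j_z$, hence $n-2z$ unnamed ones) and, as noted in the statement, the reduction to the case of $n-1$ fractional items; the case of items cut more than once is handled exactly as in Section~\ref{ssec:less-than-n-1}.
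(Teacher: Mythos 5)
Your proposal is correct and matches the paper's own proof essentially verbatim: the same two lemmas, the same per-index relaxation of $p_i$ (with $x_{i}$ on the $i_t$'s, $1-x_i$ on the $j_t$'s, and $\tfrac12$ on the remaining $n-2z$ indices), and the same telescoping sum yielding $n/2$ for the two schemes combined. The index bookkeeping and the deferral of the multiply-cut case to the argument of Section~\ref{ssec:less-than-n-1} are also exactly as in the paper.
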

\begin{proof}
    By Lemma~\ref{lemma:upper-bound-down-rounding}, there exists a sequence of indices $1\leq j_1 < i_2 < j_2 < \cdots < i_z < j_z \leq n-1$ such that the total subsidy required by the down rounding is at most
    \begin{equation}\label{eq:up-rounding-goods}
        x_{j_1} + (x_{j_2} - x_{i_2}) + \cdots + (x_{j_z} - x_{i_z}).
    \end{equation}
    
    Next, we apply different upper bounds for $p_i$ as follows.
    \begin{itemize}
        \item For each $k\in \{i_2, \ldots, i_z\}$, we use $p_i \leq x_{i}$;
        \item For each $k\in \{j_1, \ldots, j_z\}$, we use $p_i \leq 1 - x_i$;
        \item For any other $i$, we use $p_i \leq 0.5$.
    \end{itemize}
    
    Applying the above, the total subsidy required by threshold rounding can be upper bounded by
    \begin{equation}\label{eq:threshold-rounding-goods}
         (1 - x_{j_1}) + (x_{i_2} + 1 - x_{j_2}) + \cdots + (x_{i_z} + 1 - x_{j_z}) + (n-2z)) \cdot \frac{1}{2}.
    \end{equation}
    
    Summing the above two bounds~\eqref{eq:up-rounding-goods} and~\eqref{eq:threshold-rounding-goods}, we have the total subsidy required by the two rounding schemes combined is at most $z + (n-2z)\cdot \frac{1}{2} = \frac{n}{2}$.
    Therefore, at least one of the two rounding schemes requires a total subsidy of at most $n/4$, which proves Theorem~\ref{theorem:n/4-subsidy-goods}.
\end{proof}

\subsection{Upper Bound on the Total Subsidy: Weighted Case}

In this section, we consider the case when different agents may have different weights for the allocation of goods.
The algorithm and analysis we show in this section are similar to those in Section~\ref{ssec:weighted_general} (for the allocation of chores): we first show that the fractional Bid-and-Take Algorithm computes a WPROP allocation, based on which we compute a WPROPS allocation with total subsidy at most $(n-1)/2$.

\subsubsection{Fractional Bid-and-Take Algorithm}

We use the same notation defined in Section~\ref{ssec:weighted_general} for representing a fractional allocation, e.g., we use $x_{ie}\in [0,1]$ to denote the fraction of item $e$ that is allocated to agent $i$.

\paragraph{The Algorithm.}
We allocate the items one-by-one in a continuous manner following an arbitrarily fixed ordering $e_1,e_2,\ldots,e_m$ of the items.
Initially all agents are active.
For each item $e_j\in M$, we continuously allocate $e_j$ to the active agent $i$ with the maximum $\frac{v_i(e_j)}{v_i(M)}$, until either $e_j$ is fully allocated or $v_i(X_i) = \WPROP_i$. 
If $v_i(X_i) = \WPROP_i$, we inactive agent $i$. 
The algorithm terminates when all items are fully allocated. The steps of the full algorithm are summarized in Algorithm~\ref{alg:FBTA-goods}.
\begin{algorithm}[htbp]
    \caption{Fractional Bid and Take Algorithm}
    \label{alg:FBTA-goods}
    \KwIn{An instance $(M,N,\bw,\mathbf{v})$}
    $X_i \gets \mathbf{0}^m, \forall i\in N$  \qquad \qquad \tcp{current fractional bundle}
    $A\gets N$ \qquad \qquad  \qquad \qquad \tcp{ the set of active agents}
    $\mathbf{z} \gets \mathbf{1}^m$ \qquad \qquad  \qquad \qquad\tcp{remaining fraction of the items}
    $j\gets 1$ \qquad \qquad  \qquad \qquad \> \> \tcp{item to be allocated}
    \While{$j \le m$}{

        Let $i \gets \argmax_{i'\in A} \frac{v_{i'}(e_j)}{v_{i'}(M)}$\;
        \If{$v_i(X_i) +z_j\cdot v_i(e_j)>\WPROP_i$}{
            $x_{ie_{j}} \gets \frac{\WPROP_i-v_i(X_i)}{v_i(e_j)}$\;
            $z_j \gets z_j - x_{ie_{j}}$, $A \gets A\setminus \{i\}$\;
            \If{$|A| = 1$}{
                Allocate all remaining items to the only active agent and go to output\;
            }
        }
        \Else{
            $x_{ie_{j}} \gets z_{j}$, $z_{j} \gets 0$, $j\gets j+1$\;
        }    
        
    }

    \KwOut{A fractional allocation $\bX^0  = (X_1,\dots, X_n)$.}
\end{algorithm}


\begin{lemma}\label{lemma: wprop-goods}
    The output allocation $\bX^0$ is a fractional WPROP allocation. 
\end{lemma}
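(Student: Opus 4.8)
The plan is to mirror the proof of Lemma~\ref{lemma: wprop} for chores, keeping in mind the one structural difference: for goods the algorithm maintains the inequality in the ``wrong'' direction, namely $v_i(X_i)\le\WPROP_i$ for every agent throughout the while loop, so merely showing that all items get allocated no longer suffices --- I must also argue that every agent finishes with value at least her share. First I would record the routine bookkeeping about Algorithm~\ref{alg:FBTA-goods}: each iteration of the while loop either deletes one agent from $A$ or advances the pointer $j$, and the ``$|A|=1$'' branch keeps $A$ nonempty, so the algorithm is well defined and halts after at most $n-1+m$ iterations; the capped update $x_{ie_j}\gets\frac{\WPROP_i-v_i(X_i)}{v_i(e_j)}$ is always a feasible fraction (at most $z_j$), since it is invoked exactly when $z_j v_i(e_j)>\WPROP_i-v_i(X_i)$, and hence $v_i(X_i)\le\WPROP_i$ is preserved; and in either way the algorithm terminates --- the pointer reaching $m+1$, or the ``$|A|=1$'' rule handing every leftover fraction to the unique remaining active agent --- every item is fully allocated, so $\sum_{j\in N} v_i(X_j)=v_i(M)$ for every $i\in N$.

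Next I would isolate the key structural consequence of the greedy bid rule, which allocates each (fraction of an) item $e$ to the active agent maximizing $\frac{v_{i}(e)}{v_{i}(M)}$: if an agent $i$ stays active during the entire run, then for every other agent $j$ and every piece of an item that $j$ receives, agent $j$ was preferred to the still-active $i$, so $\frac{v_j(e)}{v_j(M)}\ge\frac{v_i(e)}{v_i(M)}$ for that item; summing over all the pieces that make up $X_j$ gives
\begin{equation*}
    \frac{v_j(X_j)}{v_j(M)} \;\ge\; \frac{v_i(X_j)}{v_i(M)}.
\end{equation*}
Combining this with the invariant $v_j(X_j)\le\WPROP_j=w_j v_j(M)$ (which holds for every $j\neq i$: such a $j$ either became inactive, giving equality, or, if it also stayed active throughout, then $|A|\ge 2$ always, so no leftover assignment ever happened to it) yields $\frac{v_i(X_j)}{v_i(M)}\le w_j$ for all $j\ne i$.

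Finally I would put these together. Fix any agent $i$. If $i$ becomes inactive at some point, then $v_i(X_i)=\WPROP_i$ and proportionality holds with equality. Otherwise $i$ is active throughout the run, and using completeness of the allocation together with the bound just derived,
\begin{equation*}
    1 \;=\; \frac{v_i(X_i)}{v_i(M)} + \sum_{j\ne i}\frac{v_i(X_j)}{v_i(M)} \;\le\; \frac{v_i(X_i)}{v_i(M)} + \sum_{j\ne i} w_j \;=\; \frac{v_i(X_i)}{v_i(M)} + (1-w_i),
\end{equation*}
hence $v_i(X_i)\ge w_i\,v_i(M)=\WPROP_i$; in particular this also covers the agent who collects all remaining items at the ``$|A|=1$'' break. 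Thus $\bX^0$ is WPROP. I do not expect a genuine obstacle here; the only point requiring care --- and the sole real departure from the chores argument of Lemma~\ref{lemma: wprop} --- is that ``all items allocated'' must be supplemented by the averaging step above to handle the agents who never reach their cap, and one should verify that completeness of the allocation genuinely holds in the ``$|A|=1$ break'' termination mode, not just when the pointer runs off the end.
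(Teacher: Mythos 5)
Your proof is correct and follows essentially the same route as the paper's: the greedy bid rule gives $\frac{v_j(X_j)}{v_j(M)} \geq \frac{v_i(X_j)}{v_i(M)}$ for any agent $i$ that stays active, and summing the shares to $1$ forces $v_i(X_i) \geq \WPROP_i$ (the paper phrases this as a contradiction for the single last active agent, while you argue directly for any agent that never reaches her cap, which is a slightly more careful treatment of the termination cases but not a different idea).
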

\begin{proof}
    It suffices to show that the last active agent $n$ receives a bundle of value at least $\WPROP_n$. 
    Since we allocate each item $e$ to the active agent $i$ with the maximum $\frac{v_i(e)}{v_i(M)}$, we have $\frac{v_i(e)}{v_i(M)} \geq \frac{v_j(e)}{v_j(M)}$ for every active $j\neq i$. Therefore we have $\frac{v_j(X_i)}{v_j(M)} \leq \frac{v_i(X_i)}{v_i(M)}$ for any active agents $i,j\in A$. Suppose that $v_n(X_n) < \WPROP_n$ at the end of the algorithm, then we have 
    \begin{equation*}
        1 = \sum_{i\in N} \frac{v_n(X_i)}{v_n(M)}
        \le \sum_{i\in N}\frac{v_i(X_i)}{v_i(M)} < \sum_{i\in N}\frac{\WPROP_i}{v_i(M)} = \sum_{i\in N} w_i = 1,
    \end{equation*}
    which is a contradiction.
\end{proof}

Note that in the above WPROP allocation, the number of fractional items is at most $n-1$, since the number of fractional items can increase only when some agent becomes inactive, and there is at least one active agent when the algorithm terminates.

\subsubsection{Rounding Scheme and Subsidy}

Given $\bX^0$ returned by Algorithm~\ref{alg:FBTA-goods}, for every item $e\in M$, we use $k(e) = \{i\in N: x_{ie} > 0\}$ to denote the set of agents who gets (a fraction of) item $e$.
We call an item $e$ \emph{fractional} if and only if $|k(e)| \geq 2$.
%

\paragraph{Rounding Scheme.}
For any fractional item $e$, let $i^* = \argmax_{i\in k(e)} \{x_{ie}\}$ be the agent who owns the maximum fraction of item $e$.
We round each item $e$ to agent $i^*$, i.e., set $x_{i^*e} = 1$ and $x_{ie} = 0$ for every $i \in k(e)\setminus \{i^*\}$.
We summarize the rounding scheme in Algorithm~\ref{alg:RS-goods}.

\begin{algorithm}[htbp]
    \caption{Rounding Scheme}
    \label{alg:RS-goods}
    \KwIn{A fractional allocation $\bX^0$.}
    \For{$j = 1,2,\dots, m$}{
        Let $k(e_j) = \{i\in N: x_{ie_j} > 0 \}$ \;
        \If{$|k(e_j)| \geq 2$}{
            Let $i^* \gets \argmax_{i\in k(e_j)} \{x_{ie_j}\}$, and update $x_{i^*e_j} \gets 1$\;
            Update $x_{ie_j} \gets 0$ for every $i \in k(e_j)\setminus \{i^*\}$\;
        }
    }
    Set $s_i \gets \max\{ \WPROP_i-v_i(X_i), 0\}$ for all $i\in N$\;
    \KwOut{An integral allocation $\bX  = (X_1,\dots, X_n)$ with subsidy $\bs$.}
\end{algorithm}

\begin{theorem}\label{theorem:n/2-goods}
     There exists an algorithm that computes WPROP allocations with subsides at most $(n-1)/2$ for the allocation of goods to a group of agents having general additive valuation functions.
\end{theorem}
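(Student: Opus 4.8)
The plan is to mirror the proof of Theorem~\ref{theorem:n/2} for chores, swapping the roles of ``cost'' and ``value''. The algorithm is to run Algorithm~\ref{alg:FBTA-goods} to obtain a fractional allocation $\bX^0$, then apply the rounding scheme of Algorithm~\ref{alg:RS-goods} to get an integral allocation $\bX$ together with subsidies $s_i = \max\{\WPROP_i - v_i(X_i),0\}$; by construction $(\bX,\bs)$ is WPROPS, so the whole task is to bound $\|\bs\|_1$. By Lemma~\ref{lemma: wprop-goods} the allocation $\bX^0$ is WPROP, and (as observed after that lemma) it has at most $n-1$ fractional items. Each fractional item $e$ is rounded to the agent $i^*(e) = \argmax_{i\in k(e)} x_{ie}$ who holds the largest fraction; since an agent can only gain value from items rounded \emph{to} her, the only source of a value deficit for agent $i$ is losing her fraction $x_{ie}$ of a fractional item $e$ with $i\in k(e)\setminus\{i^*(e)\}$. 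Using $v_i(e)\le 1$ and that the fractions of $e$ sum to one, this gives
\begin{equation*}
    \|\bs\|_1 \;\le\; \sum_{e\text{ fractional}} \;\sum_{i\in k(e)\setminus\{i^*(e)\}} x_{ie}\,v_i(e) \;\le\; \sum_{e\text{ fractional}} \bigl(1-x_{i^*(e)e}\bigr) \;\le\; \sum_{e\text{ fractional}} \frac{|k(e)|-1}{|k(e)|},
\end{equation*}
where the last step uses that the maximum fraction of a fully allocated item shared by $|k(e)|$ agents is at least $1/|k(e)|$.

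Next I would run the same charging/re-charging argument as in Theorem~\ref{theorem:n/2}: charge each fractional item $e$ the amount $p(e) = \frac{|k(e)|-1}{|k(e)|}$ (which by the display above covers all subsidies), then redistribute $p(e)$ over the agents. An item $e$ shared by $k=|k(e)|$ agents is cut $k-1$ times, and each cut happens precisely when one of the agents $\{1,\dots,n-1\}$ becomes inactive while $e$ is being allocated to her; so there are at least $k-1$ agents $i$ with $e_{\sigma(i)} = e$, and I charge $\frac{1}{k}\le\frac12$ to each. Because $e_{\sigma(i)}$ is uniquely determined for every $i\le n-1$, each such agent is charged at most once, while the unique last active agent $n$ (onto whom all remaining items are dumped once $|A|=1$) is never charged. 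Summing gives $\|\bs\|_1\le (n-1)\cdot\frac12$, as claimed; both Algorithm~\ref{alg:FBTA-goods} and Algorithm~\ref{alg:RS-goods} clearly run in polynomial time.

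The argument is essentially a transcription of the chores case, so the only places that need care are bookkeeping details: (i) verifying that rounding an item away from agent $i$ lowers only the value of $i$ (and similarly-losing agents), so the subsidies are paid exclusively out of the lost fractions $x_{ie}$; (ii) checking that the termination rule ``allocate all remaining items to the only active agent once $|A|=1$'' introduces no fractional items beyond $\{e_{\sigma(1)},\dots,e_{\sigma(n-1)}\}$, since from that point on every item goes integrally to agent $n$; and (iii) handling the degenerate possibility that an agent becomes inactive holding a zero fraction of her ``last'' item, which only increases the pool of agents available to absorb $p(e)$ and hence is harmless. I expect these minor points, rather than any substantive difficulty, to be the main obstacle.
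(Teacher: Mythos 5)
Your proposal is correct and follows essentially the same route as the paper: run the fractional Bid-and-Take algorithm, round each fractional item to the agent holding its largest fraction, charge each fractional item $e$ an amount $\frac{|k(e)|-1}{|k(e)|}$, and redistribute that charge to the at least $|k(e)|-1$ agents who became inactive on $e$, each paying at most $\tfrac12$ and each charged at most once with agent $n$ exempt. Your bound $\sum_{i\in k(e)\setminus\{i^*\}} x_{ie}v_i(e)\le 1-x_{i^*(e)e}$ is in fact stated a bit more carefully than the paper's corresponding inequality, which loosely writes the aggregate loss using the receiving agent's valuation.
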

\begin{proof}
    We denote by $e_{\sigma(i)}$ the last item allocated to agent $i$, where the allocation might be fractional, and assume w.l.o.g. that agent $n$ is the last active agent.
    Note that the set of fractional items is a subset of $F = \{ e_{\sigma(1)}, e_{\sigma(2)},\ldots, e_{\sigma(n-1)} \}$, and $F$ might be a multiset.
    Similar to the analysis in Section~\ref{ssec:weighted_general}, we use a charging argument that charges money to the fractional items.
    We first show that the charged money is sufficient to pay for the subsidy, and then provide an upper bound on the total money we charged.

    We charge an amount of money $p(e) = \frac{|k(e)| - 1}{|k(e)|}$ to each fractional item $e$.
    Observe that the inclusion of item $e$ to $X_i$ incurs an increase in subsidy of all agents other than $i$ by at most $(1-x_{ie}) \cdot v_i(e) \leq 1-x_{ie} \leq 1 - \frac{1}{|k(e)|}$. Thus the charged money is sufficient to pay for the increase in subsidy due to the rounding of item $e$.
    Next we upper bound the total money we charge to the fractional items.

    Consider any fractional item $e$ with $|k(e)| = k \geq 2$.
    Since $e$ is shared by $k$ agents, we know that $e$ is cut $k-1$ times, and thus there exists at least $k-1$ agents $i$ with $e_{\sigma(i)} = e$.
    We re-charge the money $p(e) = \frac{k - 1}{k}$ to these agents, where each agent is charged an amount at most $\frac{1}{k} \leq \frac{1}{2}$.
    Note that over all fractional items, each agent is charged at most once (by item $e_{\sigma(i)}$) and agent $n$ is not charged, which implies that the total money is at most $(n-1)/2$.

\end{proof}

\end{document}